\numberwithin{equation}{section}
\newtheorem{theorem}{Theorem}[section]
    \newtheorem{proposition}[theorem]{Proposition}
    \newtheorem{lemma}[theorem]{Lemma}
    \newtheorem{thought}[theorem]{Thought Experiment}
\theoremstyle{definition}
    \newtheorem{remark}[theorem]{Remark}
     \newtheorem{example}[theorem]{Example}
\def\rr{{\mathbb R}}
\def\qq{{\mathbb Q}}
\def\zz{{\mathbb Z}}
\def\cc{{\mathbb C}}
\def\nn{{\mathbb N}}
\def\ss{{\mathbb S}}
\def\cA{{\cal A}}
\def\cB{{\cal B}}
\renewcommand{\sp}{\operatorname{sp}}
	\providecommand*{\diff}%
	{\@ifnextchar^{\DIfF}{\DIfF^{}}}
	\def\DIfF^#1{%
	\mathop{\mathrm{\mathstrut d}}%
	\nolimits^{#1}\gobblespace}
	\def\gobblespace{%
	\futurelet\diffarg\opspace}
	\def\opspace{%
	\let\DiffSpace\!%
	\ifx\diffarg(%
	\let\DiffSpace\relax
	\else
	\ifx\diffarg%
	\let\DiffSpace\relax
	\else
	\ifx\diffarg\{%
	\let\DiffSpace\relax
	\fi\fi\fi\DiffSpace}
\newcommand{\dd}{\diff}
\DeclareMathOperator{\tr}{tr}
\newcommand{\one}{\mathbf{I}}
\newcommand{\Exp}[1]{\mathrm{e}^{#1}}
\newcommand{\pp}{\mathbb{P}}
\newcommand{\opone}{_{\textnormal{op},1}}
\newcommand{\opty}{_{\textnormal{op},\infty}}
\newcommand{\Zp}{\tilde{Z}}
\title{Quenched large deviations of Birkhoff sums \\ along random quantum measurements}
\author{R.\ Raqu\'epas\footnote{New York University, Courant Institute of Mathematical Sciences, New York NY, United States. \\ \phantom{Au\,} Corr.: rr4374@nyu.edu}\: and J.\ Schenker\footnote{Michigan State University, Department of Mathematics, East Lansing MI, United States.}}
\date{}
\begin{document}

\maketitle

\begin{abstract}
    We prove a quenched version of the large deviation principle for Birkhoff-like sums along a sequence of random quantum measurements driven by an ergodic process. We apply the result to the study of entropy production in the two-time measurement framework.\\

    \noindent \emph{Keywords} \quad large deviation principle, quantum instruments, quantum channels, quenched disorder, entropy production \\

    \noindent \emph{MSC2020} \quad
    60F10, % large deviations
    81P15, % quantum measurement theory
    81S22 % open systems, reduced dynamics
    82C10, % quantum dyn and nonequilibrium stat mech
    82C44 % dynamics of disordered systems in time-dep stat mech
    
\end{abstract}

\section{Introduction}

Statistical properties of outcomes for repeated quantum measurements have attracted considerable attention in both the physics and mathematics literature over the past couple of decades; see e.g.~\cite{KM04,BJM06,BB11,vHG15,AGPS15,BJPP18}.\footnote{It was later realized that earlier works such as~\cite{Ku89,FNW92,Fe03}\,---\,and the parallel literature that followed\,---\,contained important insights on the statistical properties of such systems.} Skipping the details of the physical interpretation for the time being, the main object of interest is a measure on a shift space of the form~$\cA^\nn$ with marginals $(p^{(n)})_{n\in\nn}$ of the form
\[ 
    p^{(n)}(a_1, \dotsc, a_{n}) = \tr[(\psi_{n,a_{n}} \circ \dotsb \circ \psi_{1,a_1})(\rho)]
\]
describing the probability that the first $n$ measurements lead to results labeled by $(a_j)_{j=1}^n$ (elements of~$\cA$). Here, $\rho$ is a density matrix and the maps $\psi_{j,a_j}$ are completely positive and satisfy suitable normalization conditions; see Section~\ref{sec:setup}.
Together with these marginals are often considered random variables~$(\Sigma_n)_{n\in\nn}$ of the Birkhoff-sum form
\[
    \Sigma_n(a_1, \dotsc, a_{n}) =  \sum_{j=1}^{n} f_j(a_j) .
\]
Under different technical conditions, the large deviations of the sequence $(\Sigma_n)_{n\in\nn}$ with respect to~$(p^{(n)})_{n\in\nn}$\,---\,and generalizations thereof\,---\,have been studied in the mathematical physics literature in different situations: when the instruments are identical~\cite{BJPP18,CJPS19,BCJP21}, when the instruments follow a deterministic, quasistatic process~\cite{HJPR18}, and when the instruments are sampled from a Bernoulli or Markov process~\cite{BB20,BJP22}. 

The goal of this short paper is to show that recent results on the ergodic theory of quantum channels~\cite{MS22,PS23} allow us to treat situations where the instruments are sampled from an ergodic stochastic process obeying certain non-degeneracy conditions.  A key strength of the approach employed here is the stochastic process for the instruments need not be Markovian and may in fact have long-range correlations in time. In cases involving randomness, it is important to make the distinction between the so-called ``quenched'' and ``annealed'' regimes; our results concern the former, where one seeks fluctuation theorems that hold for almost all realizations of the underlying random process.

\paragraph*{Organization of the paper} After setting the stage, we present our main large-deviation result in Section~\ref{sec:setup}; the proof is postponed to Section~\ref{sec:proof}. We then move to applications to entropy production in Section~\ref{sec:EP}; the proofs are postponed to Section~\ref{sec:EP-proofs}. 
Finally, Appendix~\ref{app:together} gathers technical lemmas in measure theory and ergodic theory that are used in Section~\ref{sec:proof}.

\paragraph*{Acknowledgements} R.R.\ was partially funded by the \emph{Fonds de recherche du Qu\'ebec\,---\,Nature et technologies} (FRQNT) and by the Natural Sciences and Engineering Research Council of Canada (NSERC). J.S.\ was supported in part by the US National Science Foundation, Grant No.~DMS-2153946.

\section{Setup and main result}
\label{sec:setup}

Let $(\Omega,\theta,\pp)$ be an invertible dynamical system, where $\pp$ is a probability measure on $\Omega$ and $\theta:\Omega\to \Omega$ is an invertible map that is measure preserving and ergodic. This ergodic process is thought of as guiding a \emph{quantum measurement process} (QMP) on a quantum system described using the finite-dimensional Hilbert space~$\cc^d$. At each step of this measurement process, the outcomes are labeled using a fixed finite set~$\cA$.

We equip the space~$\cB(\cc^d)$ with the trace norm $\|\,\cdot\,\|_1$. We will use $\mathbb{S}$ (resp.~$\mathbb{S}^{\circ}$) for the convex subset of positive semidefinite (resp.\ positive definite) matrices in $\cB(\cc^d)$ with trace (norm) 1. Elements of~$\mathbb{S}$ are called \emph{density matrices} and describe the internal state of the aforementioned quantum system. We use $\mathcal{C}(\mathbb{S})$ (resp.\ $\mathcal{C}(\mathbb{S}^{\circ})$) for the closed (resp.\ open), convex cone generated by $\mathbb{S}$ (resp.~$\mathbb{S}^{\circ}$), which is nothing but the set of positive semidefinite (resp.\ definite) matrices in $\cB(\cc^d)$. 

A linear map~$\phi: \cB(\cc^d)\to\cB(\cc^d)$ is called \emph{positive} if it maps $\mathcal{C}(\mathbb{S})$ to $\mathcal{C}(\mathbb{S})$. A positive linear map~$\phi: \cB(\cc^d)\to\cB(\cc^d)$ is called \emph{completely positive} if~$\phi \otimes \operatorname{id}_{\cB(\cc^m)} \ : \ \cB(\cc^{d m}) \to \cB(\cc^{d m})$ is positive for all~$m\in\nn$.
A completely positive linear map~$\phi: \cB(\cc^d)\to\cB(\cc^d)$ is called \emph{trace preserving} (CPTP for short) if it maps $\mathbb{S}$ to itself.
A completely positive linear map~$\phi: \cB(\cc^d)\to\cB(\cc^d)$ is called \emph{positivity improving} if it maps $\mathcal{C}(\mathbb{S})\setminus\{0\}$ to~$\mathcal{C}(\mathbb{S}^{\circ})$, \emph{primitive} if $\phi^{j}$ is positivity improving for some~$j < d$, and \emph{irreducible} if $(\operatorname{id}_{\cB(\cc^d)} + \phi)^{d-1}$ is positivity improving. As is the case for their counterparts for stochastic matrices, these notions have a number of equivalent reformulations and we refer the reader to~\cite{EHK78,FF09}. We will often use the following basic fact: if $\phi$ is CPTP, then $\phi^*$ is \emph{unital} in the sense that $\phi^*(\one) = \one$. Here, the adjoint ``$^*$'' is understood with respect to the Hilbert--Schmidt inner product on $\cB(\cc^d)$. We at times think of~$\phi^*$ as a linear operator on the dual space $(\cB(\cc^d),\|\,\cdot\,\|_\infty)$; at others, on the original space $(\cB(\cc^d),\|\,\cdot\,\|_1)$. We keep in mind that the induced operator norms, namely $\|\,\cdot\,\|\opty$ and $\|\,\cdot\,\|\opone$, are equivalent.

We consider an irreducible \emph{instrument-valued} map $\omega  \mapsto (\psi_{\omega,a})_{a\in\cA}$. By this, we mean that to every~$\omega \in \Omega$ is assigned a tuple of completely positive maps on~$\cB(\cc^d)$\,---\,the dimension~$d$ is fixed\,---\,whose sum 
\[ 
    \varphi_\omega := \sum_{a \in \cA} \psi_{\omega,a}
\]
is an irreducible CPTP map. For fixed $\omega$, i.e.\ quenched disorder, we consider the QMP induced by the instruments $(\psi_{\omega,a})_{a\in\cA}$, $(\psi_{\theta\omega,a})_{a\in\cA}$, $(\psi_{\theta^2\omega,a})_{a\in\cA}$, $\dotsc$ and a given initial state $\rho$. Let us emphasize that the requirement that $\varphi_\omega$ be irreducible is enforced throughout even where not mentioned explicitly.

\begin{example}
\label{ex:examples-intro}
    An important basic example to keep in mind is the following.
    \begin{enumerate}
        \item[i.] Consider $\Omega$ to be a subshift of $\mathcal{S}^\zz$ for some finite or countably infinite set $\mathcal{S}$, the dynamics $\theta$ to be the left shift and $\pp$ to be an ergodic measure for $(\Omega,\theta)$\,---\,e.g.\ a Bernoulli measure or the path measure for an irreducible, stationary Markov chain.\footnote{In this setting, and assuming stationarity, irreducibility of the transition matrix used to define the Markov measure is equivalent to ergodicity for~$(\Omega,\pp,\theta)$; irreducibility \emph{and} aperiodicity, to the ($\beta$-)mixing property~\cite{Br05}.} In such situations, the simplest non-trivial type of instrument-valued map would be a choice of $(\psi_{\omega,a})_{a\in\cA}$ that depends on the zeroth symbol in the fixed~$\omega$, so that $(\psi_{\theta\omega,a})_{a\in\cA}$ depends on the first symbol in the fixed sequence, $(\psi_{\theta^2\omega,a})_{a\in\cA}$ depends on the second symbol in the fixed sequence, and so on.
    \end{enumerate}
    The setup also accommodates more ``continuous'' situations,
    \begin{enumerate}
        \item[ii.]  Consider $\Omega$ to be a symplectic manifold,~$\theta$ to be the time-1 map for a Hamiltonian flow, and $\pp$ be a $\theta$-ergodic Borel measure on $\Omega$. Since $\pp$ is ergodic, it is supported on a constant energy surface in~$\Omega$. In this situation the sequence $(\psi_{\omega,a})_{a\in \mathcal{A}} ,(\psi_{\theta\omega,a})_{a\in \mathcal{A}}, \dotsc$ describes a deterministic change in the measuring apparatus between measurements. 
    \end{enumerate}
\end{example}

 In the quantum measurement process, the probability of seeing $(a_1,\ldots,a_n)$ as the first $n$ measurement outcomes is equal to 
\begin{equation}\label{eq:QMP} 
    p^{(n)}_{\omega,\rho}(a_1,\ldots,a_n) =  \tr [(\psi_{\theta^{n-1}\omega,a_n}\circ \cdots \circ \psi_{\omega,a_1})(\rho)] .
\end{equation}
The CPTP property ensures that this is indeed a probability measure on~$\cA^n$ and that the passage from $n$ to $n-1$ is consistent. Hence, by the Kolmogorov extension theorem, the measure~$p^{(n)}_{\omega,\rho}$ can be thought of as the $n$-th marginal of an $\omega$-dependent measure~$p_{\omega,\rho}$ on the space~$\cA^\nn$ of infinite sequences of measurement outcomes. 

Given these first $n$ outcomes, the state of the quantum system is described by the density matrix
$$ 
    \rho_n =  \frac{(\psi_{\theta^{n-1}\omega,a_n}\circ \cdots \circ \psi_{\omega,a_1})(\rho)}{p^{(n)}(a_1,\ldots,a_n)} .
$$
This means that, after $n$ steps, the average state is given by $(\varphi_{\theta^{n-1}\omega} \circ \dotsb \circ \varphi_{\omega})(\rho)$. The CPTP property ensures that, with probability one, this is indeed still an element of~$\mathbb{S}$.

To each outcome $a\in \cA$, we suppose associated a quantity $f_\omega(a) \in \rr$, which may depend on the disorder realization~$\omega$. Our main result is a quenched large deviation principle for the Birkhoff sums $f_\omega(a_1)+\cdots + f_{\theta^{n-1}\omega}(a_n),$ with respect to the above QMP. To this end, we study the moment generating functionals
\begin{equation}
\label{eq:mgf-def}
    M_{\omega,\rho}^{(n)}(\alpha) := \sum_{\vec{a}\in \cA^n} \Exp{-\alpha \sum_{j=1}^n f_{\theta^{j-1}\omega}(a_j)} p^{(n)}_{\omega,\rho}(a_1,\ldots,a_n) .
\end{equation}
In light of \eqref{eq:QMP}, $M_{\omega,\rho}^{(n)}$ can be expressed as follows
\begin{equation}
\label{eq:mgf-as-tr}
    M_{\omega,\rho}^{(n)}(\alpha) =  \tr \left[ \left(\varphi_{\theta^{n-1}\omega}^{(\alpha)} \circ \cdots \circ \varphi_{\omega}^{(\alpha)}\right)(\rho) \right],
\end{equation}
where
\begin{equation}
\label{eq:varphialpha}
    \varphi^{(\alpha)}_{\omega} =  \sum_{a \in \cA} \Exp{-\alpha f_{\omega}(a)} \psi_{\omega,a}.
\end{equation}
We will sometimes refer to $\varphi^{(\alpha)}_{\omega}$ as in \eqref{eq:varphialpha} as an \emph{analytic deformation} of~$\varphi_\omega$. The following fact is easy to verify: with the standing assumption that the original map $\varphi_\omega$ is CPTP and irreducible, the deformation $\varphi^{(\alpha)}_{\omega}$ is completely positive and irreducible for all $\alpha \in \rr$.
The growth rate of $M_{\omega,\rho}^{(n)}(\alpha)$ is given by the $\alpha$-dependent \emph{top Lyapunov exponent}
\begin{equation}
\label{eq:lambda-alpha}
    \lambda(\alpha) := \lim_{n\to\infty} \frac 1n \log \left\|\varphi^{(\alpha)}_{\theta^{n-1}\omega} \circ \dotsb \circ \varphi^{(\alpha)}_{\theta\omega} \circ \varphi^{(\alpha)}_{\omega}\right\|\opone .
\end{equation}
While $\pp$-almost sure existence of the limit as a deterministic quantity for every fixed~$\alpha$ is guaranteed on general grounds by ergodicity of~$(\Omega,\theta,\pp)$ and Kingman's theorem~\cite{Ki68}, we will need the following assumptions based on \cite{PS23} in order to obtain further desirable properties: 
\begin{enumerate}[label=\textbf{(A\arabic{*})}]
    \item\label{A1} 
    With
    \[ 
        v(\phi) := \inf \{ \| \phi(X)\|_1 \ : \ X \in \mathbb{S}\} 
    \]
    the function $\omega \mapsto \log v(\varphi_\omega^*)$ belongs to $L^1(\Omega,\dd\pp)$.
  \item\label{A2} 
    There exists $N_0 \in \nn$ such that 
    \[ 
        \pp\left \{\omega\in\Omega \ : \ \varphi_{\theta^{N_0-1}\omega} \circ \dotsb \circ \varphi_{\theta\omega} \circ \varphi_{\omega} \text{ is positivity improving} \right \} > 0.
    \]
    \item\label{A3} The random variable $F_\omega  :=  \max_{a\in\cA}|f_\omega(a)| $ satisfies $\Exp{\alpha F} \in L^1(\Omega,\dd\pp)$ for all~$\alpha \in \rr.$
\end{enumerate}

\begin{remark}
    Assumptions~\ref{A1} and~\ref{A2} are made at the level of the average maps $\varphi_\omega$ only, and not at the level of the instruments $(\psi_{\omega,a})_{a\in\cA}$ or the deformed maps $\varphi_\omega^{(\alpha)}$.
\end{remark}

With this notation, our main result is the following quenched large deviation principle, proved in Section~\ref{sec:proof}.

\begin{theorem}
\label{thm:main-abstract}
    If {\textnormal{\ref{A1}}}, {\textnormal{\ref{A2}}} and {\textnormal{\ref{A3}}} hold and $\lambda^*(s) =  \sup_{\alpha\in\rr} (s\alpha - \lambda(\alpha))$ is the Legendre transform of the Lyapunov exponents $\lambda(\alpha)$ in~\eqref{eq:lambda-alpha}, then, for $\pp$-almost all~$\omega$, we have
    \begin{subequations}
    \label{eq:ldp}
    \begin{align}
        -\inf_{s \in \operatorname{int} E} \lambda^*(s) 
            &\leq \liminf_{n\to\infty} \frac 1n \log p^{(n)}_{\omega,\rho} \left\{ (a_1,\dotsc,a_n) \ : \ \frac 1n \sum_{j=1}^n f_{\theta^{j-1} \omega} (a_j) \in E \right\} \\
            &\leq \limsup_{n\to\infty} \frac 1n \log p^{(n)}_{\omega,\rho} \left\{ (a_1,\dotsc,a_n) \ : \ \frac 1n \sum_{j=1}^n f_{\theta^{j-1} \omega} (a_j) \in E \right\} 
            \leq  -\inf_{s \in \operatorname{cl} E} \lambda^*(s) 
    \end{align}
    \end{subequations}
    for every Borel set $E\subseteq\rr$. 
\end{theorem}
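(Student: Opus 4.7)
The plan is to prove the large deviation principle via the G\"artner--Ellis theorem applied to the scaled cumulant generating functional $\alpha \mapsto \frac{1}{n}\log M^{(n)}_{\omega,\rho}(\alpha)$. The two central claims to establish are (i) that $\pp$-almost surely this functional converges, for every $\alpha \in \rr$ simultaneously, to the deterministic limit $\lambda(\alpha)$ of~\eqref{eq:lambda-alpha}, and (ii) that $\lambda$ is essentially smooth, so that G\"artner--Ellis yields both bounds in~\eqref{eq:ldp} with rate function $\lambda^*$.

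Existence of $\lambda(\alpha)$ as a deterministic, finite, convex quantity follows quickly. The CPTP property gives $\|\varphi_\omega\|\opone = 1$, so the crude bound $\|\varphi^{(\alpha)}_\omega\|\opone \leq |\cA|\Exp{|\alpha|F_\omega}$ together with~\ref{A3} places $\omega \mapsto \log\|\varphi^{(\alpha)}_\omega\|\opone$ in $L^1(\Omega,\dd\pp)$. Kingman's subadditive ergodic theorem, applied to the subadditive cocycle $a_n(\omega) := \log\|\varphi^{(\alpha)}_{\theta^{n-1}\omega} \circ \dotsb \circ \varphi^{(\alpha)}_\omega\|\opone$, which satisfies $a_{n+m}(\omega) \leq a_n(\omega) + a_m(\theta^n \omega)$ by submultiplicativity of the operator norm, yields the $\pp$-a.s.\ deterministic limit $\lambda(\alpha)$ for each fixed $\alpha$. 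Convexity of $\alpha \mapsto \lambda(\alpha)$ is inherited from the log-sum-of-exponentials convexity of $\alpha \mapsto \log M^{(n)}_{\omega,\rho}(\alpha)$, once we identify the two limits in the next step.

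The central step is to verify that, $\pp$-almost surely,
\[
    \lim_{n\to\infty} \frac{1}{n} \log M^{(n)}_{\omega,\rho}(\alpha) = \lambda(\alpha) \qquad \text{for every } \alpha \in \rr.
\]
By~\eqref{eq:mgf-as-tr}, the upper bound is immediate from $\tr[T(\rho)] \leq d\|T\|\opone$ for positive $T$ and $\rho \in \mathbb{S}$. The matching lower bound is where~\ref{A2} is essential: the positivity-improving property of the $N_0$-fold $\varphi$-composition on a set of positive $\pp$-measure lifts to the $N_0$-fold $\varphi^{(\alpha)}$-composition, since expanding the deformed product as a positively weighted sum of $\psi$-products shows that it dominates a strictly positive multiple of the positivity-improving $\varphi$-product. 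Combined with ergodicity of $(\Omega,\theta,\pp)$, this places the deformed cocycle in the random quantum Perron--Frobenius framework of~\cite{MS22,PS23}, whose conclusion is precisely that $\tr[(\varphi^{(\alpha)}_{\theta^{n-1}\omega} \circ \dotsb \circ \varphi^{(\alpha)}_\omega)(\rho)]$ grows at the same exponential rate as the operator norm, uniformly in the initial state $\rho \in \mathbb{S}$. Pointwise $\pp$-a.s.\ convergence for each $\alpha$ in a countable dense subset of $\rr$ then upgrades to simultaneous convergence for all $\alpha \in \rr$ on one event of full measure, because convex functions converging pointwise on a dense subset of an open interval converge uniformly on its compact subsets.

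Finally, differentiability (in fact real analyticity) of $\lambda$ on $\rr$ follows from the spectral gap for the deformed cocycle provided by~\cite{MS22,PS23}, with~\ref{A3} ensuring that the effective domain is all of $\rr$. The G\"artner--Ellis theorem then yields~\eqref{eq:ldp} with rate $\lambda^*$. The main technical obstacle is the central step: transferring the exponential growth rate from the operator-norm cocycle (handled directly by Kingman) to the state-specific trace that the measure $p^{(n)}_{\omega,\rho}$ actually weights, with uniform control in both $\rho$ and $\alpha$. This is exactly what assumption~\ref{A2} and the imported machinery of~\cite{PS23} are designed to supply.
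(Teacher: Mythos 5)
Your overall strategy matches the paper's: reduce \eqref{eq:ldp} to the G\"artner--Ellis theorem by establishing (a) $\pp$-a.s.\ simultaneous convergence of $\tfrac1n\log M^{(n)}_{\omega,\rho}(\alpha)$ to $\lambda(\alpha)$ for all $\alpha\in\rr$ and (b) differentiability of $\lambda$. Your treatment of (a) is essentially the paper's: the per-$\alpha$ limit from \cite{PS23} (using \textnormal{\ref{A2}} lifted to the deformed product via \eqref{eq:basic-F-bound}) followed by an upgrade from a countable dense set of $\alpha$'s to all $\alpha$ using convexity of $\alpha\mapsto\tfrac1n\log M^{(n)}_{\omega,\rho}(\alpha)$. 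This is exactly Lemmas~\ref{lem:lyapunovlemma} and~\ref{lem:convexlemma}.

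The genuine gap is in (b). You assert that ``differentiability (in fact real analyticity) of $\lambda$ on $\rr$ follows from the spectral gap for the deformed cocycle provided by~\cite{MS22,PS23}'', but this is not a proof, and the assertion is stronger than what is known. For a \emph{single} primitive map $\varphi^{(\alpha)}$, analytic perturbation of the simple leading eigenvalue would give analyticity; but here $\lambda(\alpha)$ is a top Lyapunov exponent of a \emph{random} cocycle over a general ergodic base, and no spectral-gap perturbation argument applies directly. Regularity of Lyapunov exponents in a parameter is a well-known subtlety, and indeed the paper proves only $C^1$, not analyticity. Its route to $C^1$ is substantial and has no analogue in your proposal: it first needs to show that the random fixed points $Z^{(\alpha)}_\omega$ and $\Zp^{(\alpha)}_\omega$ of the cocycle and its adjoint exist on a single full-measure set simultaneously for all $\alpha$ (Lemmas~\ref{lem:simul-alpha}--\ref{lem:simul-alpha-ii}; note the contraction coefficients $c(\Phi^{(\alpha)}_{N,\omega})$ are \emph{not} convex in $\alpha$, so your convexity-upgrade does not cover them\,---\,the paper instead uses a Lipschitz equicontinuity bound, Lemma~\ref{lem:equiconlemma}, together with the sub-additive ergodic argument of Lemma~\ref{lem:appendixlemma}). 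It then proves $\alpha\mapsto Z^{(\alpha)}_\omega$ is continuous (Proposition~\ref{prop:continuous-Z}), uses the representation~\eqref{eq:Lyap-as-average} and the cocycle identities to telescope $\lambda(\alpha)-\lambda(\beta)$ into a single integral against a manifestly dominated integrand (Lemma~\ref{lem:Delta-lambda}, relying on the Birkhoff-type cancellation of Lemma~\ref{lem:shift-cancellation}), and finally applies the mean value theorem and dominated convergence (Proposition~\ref{prop:lambda-diff}). Without some version of this machinery your argument does not deliver the differentiability hypothesis of G\"artner--Ellis, and the claim of real analyticity should be withdrawn in any case.
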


For every fixed~$\alpha$, the results of~\cite{MS22,PS23} will apply and tell us\,---\,among other things\,---\,that there is a $\mathbb{S}^\circ$-valued random variable $Z^{(\alpha)}$ that satisfies the cocycle relation
$$
    \varphi_\omega^{(\alpha)} \left ( Z_{\theta^{-1}\omega}^{\alpha} \right ) =  Z_\omega^{(\alpha)} \tr \left [\varphi_\omega^{(\alpha)} \left ( Z_{\theta^{-1}\omega}^{\alpha} \right ) \right ]
$$ 
for $\pp$-almost all~$\omega$ and such that 
\begin{equation} \label{eq:proto-lambda-as-avg}
    \lambda(\alpha)
    =  \int_\Omega \log \tr\left[\varphi_{\omega}^{(\alpha)} \left ( Z_{\theta^{-1}\omega}^{(\alpha)} \right ) \right] \dd \pp(\omega) .
\end{equation}
For the derivation of the relation \eqref{eq:proto-lambda-as-avg} from Theorem~1 in~\cite{PS23}, see Lemma~\ref{lem:Lyap-identities} below.
These objects are at the heart of the proof of Theorem \ref{thm:main-abstract}, which has of three main steps and is carried out in Section~\ref{sec:proof}. We summarize the main steps of the proof here:
\begin{description}
    \item[Step 1.] In view of the specific form of the dependence of~$\varphi_\omega^{(\alpha)}$ in~$\alpha$, we use arguments from~\cite{MS22,PS23} to show that there is an event of full measure on which the limiting random variables $Z^{(\alpha)}$ (and analogous objects for adjoints) and the top Lyapunov exponent $\lambda(\alpha)$ exist simultaneously for every deformation parameter~$\alpha \in \rr$. This is the content of Lemmas~\ref{lem:simul-alpha} and~\ref{lem:simul-alpha-ii}, below.
    
    \item[Step 2.] We then show that $Z^{(\alpha)}$ and $\lambda(\alpha)$ possess some regularity properties as functions of the deformation parameter~$\alpha$, namely: 
    \begin{itemize}
        \item[2a.] 
        The map $\alpha \mapsto Z^{(\alpha)}$ is continuous. This is the content of Proposition~\ref{prop:continuous-Z}.
        \item[2b.] 
        The map $\alpha \mapsto \lambda(\alpha)$ is differentiable. This is the content of Proposition~\ref{prop:lambda-diff}.
    \end{itemize}
    \item[Step 3.] We deduce the large deviation principle from the G\"artner--Ellis theorem after relating the top Lyapunov exponent $\lambda(\alpha)$ to the cumulant-generating functions via 
     \begin{equation}
    \label{eq:Lyap-as-rcgf}
       \lambda(\alpha) =  \lim_{n\to\infty} \frac 1n \log
       M_{\omega,\rho}^{(n)}(\alpha) .
    \end{equation}
\end{description}

\begin{remark}
    As far as we are aware, previous mathematical works on random repeated measurements  were restricted to the setup of Example~\ref{ex:examples-intro}.i. 
    In the deterministic case $|\mathcal{S}|=1$, the analyses of~\cite[\S{4}]{CJPS19} and~\cite[\S{1.2}]{BCJP21} bypass~{\textnormal{\ref{A2}}} as long as $\rho \in \mathbb{S}^{\circ}$ and $\varphi_{\omega_0}(\rho) = \rho$. The method there does not involve differentiability in~$\alpha$ of the limit~\eqref{eq:Lyap-as-rcgf}, and instead relies on decoupling conditions and Ruelle--Lanford functions. 
    Our results encompass the cases where $1 < |\mathcal{S}|<\infty$, considered in~\cite{BB20,BJP22} with $\pp$ a Bernoulli or an irreducible Markov measure, respectively. While combining Lemma~2.3.iii and Theorem~4.2.vi in \cite{BJP22} bypasses~{\textnormal{\ref{A2}}} in the Markovian regime, it should be noted that the large deviation principle there obtained  concerns \emph{annealed} randomness of the environment, as opposed to quenched randomness. The proof relies on differentiability of an annealed analogue of the limit~\eqref{eq:Lyap-as-rcgf} where $M^{(n)}(\alpha)$ is integrated with respect to~$\pp$ \emph{inside} the logarithm.
\end{remark}

\begin{remark}
    Suppose that the distribution of $\omega\mapsto \varphi_\omega$ is supported on finitely many points, i.e.\ there are only finitely many possible channels.  Then the only way {\textnormal{\ref{A1}}} could fail would be if one of the possible channels satisfies $\ker \varphi_{\omega}^* \cap \mathbb{S} \neq \emptyset$. Such maps are called \emph{transient} in~\cite{PS23} and would contradict our standing irreducibility assumption; see Lemma~\ref{lem:irred-ker-star} below. Therefore, if there are finitely many possible channels, then~{\textnormal{\ref{A1}}} cannot fail. Similarly, in this case it is sufficient for~{\textnormal{\ref{A2}}} that one of the possible channels be positivity improving. However, we want to emphasize that this is not necessary, as shown by the example below.
\end{remark}

\begin{example}
    Consider the dimension $d=2$ and suppose that the only two possible CPTP maps are of the form 
    \begin{align*}
        \phi_i(\rho)\ &=\epsilon_i \gamma_i \tr(\rho) + (1-\epsilon_i) \rho
    \end{align*}
    where $\gamma_i \in \mathbb{S}\setminus\mathbb{S}^\circ$ (for $d=2$, this means pure) and $\epsilon_i \in (0,1)$. Such channels, taken individually, are not positivity improving. However, their composition is such that
    \begin{align*}
        \phi_2(\phi_1(\rho)) 
           &=\epsilon_2 \gamma_2 + \epsilon_1(1-\epsilon_2) \gamma_1 + (1-\epsilon_1)(1-\epsilon_2) \rho
    \end{align*}
    will be positive definite for all $\rho\in\mathbb{S}$ if the kernels of $\gamma_1$ and $\gamma_2$ intersect trivially, which is generic. This provides a setup where~\textnormal{\ref{A2}} holds without any individual channel being positivity improving.
\end{example}

\section{Application to entropy production}
\label{sec:EP}

Following e.g.~\cite[\S{3}]{HP13}\,---\,also see~\cite[\S{3.1}]{HJPR18}, \cite[\S{5.1}]{BB20} and~\cite[\S{1.4}]{BCJP21}\,---, there is a choice of instruments that holds a particular significance from the point of view of quantum thermodynamics. We use this section to investigate consequences of Theorem~\ref{thm:main-abstract} in this regard. 
This is related to the foundational progress from~\cite{Ku00,Ta00,TM05,JOPS12,BBJPP23} on the quantum counterparts of~\cite{ECM93,ES94,GC95,Cr99} in the two-time measurement framework. 

Suppose that we have measurable maps $\omega \mapsto U_\omega \in \operatorname{U}(\cc^d\otimes\cc^m)$ and $\omega \mapsto \xi_\omega$, where $\xi_\omega$ is a Gibbs state on~$\cc^m$ with spectral decomposition 
\[ 
    \xi_\omega = \sum_{\epsilon=1}^m \Exp{-\beta_\omega E_{\omega,\epsilon}}\pi_{\omega,\epsilon}
\]
using rank-one projectors, an increasing energy labeling $\epsilon \mapsto E_{\omega,\epsilon}$ and a positive inverse temperature~$\beta_\omega$.
    {The assumption that the eigenvalues of every corresponding Hamiltonian are all simple is made for ease of presentation. The argument could be easily adapted to Hamiltonians with degenerate eigenvalues, although if the degeneracy of eigenvalues depends non-trivially on~$\omega$, some care would need to be taken to adjust the argument to account for the fact that the number of possible measurement outcomes is then random, e.g.\ by adding fictitious outcomes of probability~0.}
    
Then, consider instruments labeled by pairs $(\epsilon,\epsilon') \in \{1,2,\dotsc,m\}^2$ of indices for the spectral decomposition, that are of the form
\[
    \psi_{\omega,(\epsilon,\epsilon')}(\rho) = \tr[\pi_{\omega, \epsilon'}(U_\omega (\rho \otimes \pi_{\omega,\epsilon}\xi_\omega) U_\omega^*)] .
\]
The quantity
\[
    f_{\omega}(\epsilon,\epsilon') 
    =
    \beta_\omega(E_{\omega,\epsilon'}-E_{\omega,\epsilon}) 
\]
has a natural physical interpretation. There is a quantum system and a random (i.e.\ $\omega$-dependent) sequence of probes, initially in Gibbs states. Sequentially, a probe arises, we measure its energy, it interacts with the quantum system through a random unitary evolution, we measure its energy again, and it leaves forever. We record the resulting energy differences, rescaled by the appropriate temperature.
Our main result provides the large-deviation principle for 
\[
    \Sigma_{\omega,n}((\epsilon_1,\epsilon'_1), (\epsilon_2,\epsilon'_2), \dotsc, (\epsilon_n,\epsilon'_n))
    =  \sum_{j=0}^{n-1} \beta_{\theta^j\omega}(E_{{\theta^j\omega},\epsilon'_j}-E_{{\theta^j\omega},\epsilon_j}) ,
\]
i.e.\ for the \emph{Clausius entropy change} of the chain of Gibbs states, provided that one can verify (A1)--(A3). Note that, in terms of the properties of the probes, (A3) amounts to a lower bound on the temperature and an upper bound on the width of the energy spectrum. 

There is another, more information theoretic notion of \emph{entropy production} that is relevant to such setups, but its definition requires additional structures.
We make the following assumption of \emph{time-reversal invariance} (TRI) at the level of the driving ergodic process and at the level of the microscopic quantum dynamics:
\begin{description}
    \item[TRI Assumption] We suppose that there exists a measurable involution~$\mathsf{T}:\Omega \to \Omega$ such that 
    \begin{itemize}
        \item $\mathsf{T} \circ \theta  = \theta^{-1} \circ \mathsf{T}$,
        \item $\pp \circ \mathsf{T}^{-1} = \pp$,
    \end{itemize}
    and antiunitary involutions~$\tau : \cc^d \to \cc^d$ and $\tau' : \cc^m \to \cc^m$ such that 
    \begin{itemize}
        \item $(\tau \otimes \tau')U_\omega= U_{\mathsf{T}\omega}^* (\tau \otimes \tau')$,
        \item $\tau' \xi_{\omega}= \xi_{\mathsf{T}\omega} \tau' $.
    \end{itemize}
\end{description}

\begin{example}
\label{rem:TRI}
    We can discuss TRI in the context of Example~\ref{ex:examples-intro}.
    \begin{enumerate}
        \item[i.] In the setup of Example~\ref{ex:examples-intro}.i, the map $\mathsf{T} : (\omega_n)_{n\in\zz} \mapsto (\omega_{-n})_{n\in\zz}$ clearly satisfies the condition on the interplay of~$\mathsf{T}$ and $\theta$, and the condition on the interplay of~$\pp$ and~$\mathsf{T}$ is then naturally interpreted as a generalization of detailed balance for Markov chains. 
        With $\omega \mapsto U_\omega \in \operatorname{U}(\cc^d\otimes\cc^m)$ and $\omega \mapsto \xi_\omega$ depending only on~$\omega_0$, the conditions on $\tau$ and $\tau'$ further simplify to $(\tau \otimes \tau')U_\omega = U^*_{\omega} (\tau \otimes \tau')$ and $\tau' \xi_{\omega} = \xi_{\omega} \tau'$. Writing $U_\omega = \exp(\mathrm{i} t_\omega H_\omega)$, note that such $\tau$ and $\tau'$ exist as complex conjugations if there exist orthonormal bases of~$\cc^d$ and~$\cc^m$ (used to construct an orthonormal basis of~$\cc^d \otimes \cc^m$) with respect to which $\xi_{\omega}$ and $H_\omega$ have real matrix elements.
        
        \item[ii.] In the setup of Example~\ref{ex:examples-intro}.ii, the map $\mathsf{T}: (q,p) \mapsto (q,-p)$ that changes sign of the  momentum variables satisfies the condition on the interplay of~$\mathsf{T}$ and $\theta$, and the condition on the interplay of~$\mathsf{T}$ and $\pp$ is then naturally interpreted as a symmetry of the initial distribution of the momentum variables.
    \end{enumerate}
\end{example}

\begin{remark}
\label{rem:xi-labeling}
    Fix $\omega$ and suppose that $ \xi_{\omega} = \tau'\xi_{\mathsf{T}\omega} \tau'$ as required in TRI. Comparing the spectral decompositions following the above labeling conventions, we find that  
    $\pi_{\mathsf{\omega},\epsilon}= \tau'\pi_{\mathsf{T}\omega,\epsilon}\tau'$ for each $\epsilon$.
\end{remark}

\begin{thought}
    Consider the setup of Remark~\ref{rem:TRI}.i and suppose that we are given an ordered set of $n$ observations, indexed by $j$, and with the following content:
    \begin{itemize}
        \item the provenance of the $j$-th probe (and hence its temperature, Hamiltonian, etc.);
        \item the values of energy measurements of the $j$-th probe before and after it interacted with the system of interest.
    \end{itemize}
    However, we do not know if the ordered set of observations has been reversed ($j \leftrightarrow n+1-j$) before it was given to us, and we are asked to place a bet on either of those two hypotheses.
\end{thought}

To perform well in this thought experiment, the Neyman--Pearson lemma suggests comparing the likelihood ratio
\[
    \frac{
    \pp_n[\omega_0,\omega_1,\dotsc,\omega_n] p^{(n)}_{(\omega_0, \omega_1, \dotsc, \omega_{n-1}),\rho}((\epsilon_1,\epsilon'_1), (\epsilon_2,\epsilon'_2), \dotsc, (\epsilon_n,\epsilon'_n))}
    {\pp_n[\mathsf{T}\omega_{n-1},\mathsf{T}\omega_{n-2},\dotsc,\mathsf{T}\omega_0] p^{(n)}_{(\mathsf{T} \omega_{n-1}, \mathsf{T}\omega_{n-2}, \dotsc, \omega_0),\rho} ((\epsilon'_n,\epsilon_n), (\epsilon'_{n-1},\epsilon_{n-1}), \dotsc, (\epsilon'_1,\epsilon_1))}
\]
to some threshold; see e.g.~\cite[\S{7.1}]{DeZe} for abstract theory and~\cite[\S{2.9}]{BJPP18} for an application to repeated quantum measurements.

Under the TRI Assumption, the $n$-th marginals of $\pp$ at the numerator and denominator cancel each other out, and that motivates the other notion of entropy production we are about to discuss:
the random variable
\begin{equation}
\label{eq:info-EP}
    \sigma_{\omega,\rho,n}((\epsilon_1,\epsilon'_1), (\epsilon_2,\epsilon'_2), \dotsc, (\epsilon_n,\epsilon'_n)) :=  \log \frac
        {p^{(n)}_{\omega,\rho}((\epsilon_1,\epsilon'_1), (\epsilon_2,\epsilon'_2), \dotsc, (\epsilon_n,\epsilon'_n))}
        {p^{(n)}_{\mathsf{T}(\theta^{n-1} \omega),\rho} ((\epsilon'_n,\epsilon_n), (\epsilon'_{n-1},\epsilon_{n-1}), \dotsc, (\epsilon'_1,\epsilon_1))}
\end{equation}
is a log-likelihood ratio that can be interpreted as an information-theoretic measure of irreversibility of the sequence of measurement outcomes~\cite{JOPS12,BJPP18}. 
The following lemmas are minor adaptations of e.g.~\cite[App.~C]{HJPR18} or~\cite[\S{5.4}]{BJP22} to our slightly more intricate notion of TRI. We still provide the proofs in Section~\ref{sec:EP-proofs} for completeness.
\begin{lemma}
\label{lem:dual-1-minus}
    Under the TRI Assumption, we have the symmetry  $\varphi_{\mathsf{T}\omega}^{(\alpha)} (\tau X \tau) = \tau (\varphi_{\omega}^{(1-\alpha)})^*(X) \tau$ 
    for all~$X\in\mathbb{S}$,~$\alpha\in\cc$ and $\omega\in\Omega$.
\end{lemma}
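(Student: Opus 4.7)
The plan is to pass to an explicit Kraus representation of the instruments and reduce the desired identity to a bilinear relation among operators on $\cc^d$, which I then verify using TRI at the level of $U_\omega$ and $\xi_\omega$. First, I fix unit vectors $e_{\omega,\epsilon}\in\cc^m$ with $\pi_{\omega,\epsilon} = |e_{\omega,\epsilon}\rangle\langle e_{\omega,\epsilon}|$ and set
\[
V_{\omega,\epsilon,\epsilon'} := (\one\otimes\langle e_{\omega,\epsilon'}|)\,U_\omega\,(\one\otimes|e_{\omega,\epsilon}\rangle) \in \cB(\cc^d).
\]
Using $\pi_{\omega,\epsilon}\xi_\omega = \Exp{-\beta_\omega E_{\omega,\epsilon}}\pi_{\omega,\epsilon}$ and the partial-trace identity for rank-one projectors on the $\cc^m$-factor, I obtain $\psi_{\omega,(\epsilon,\epsilon')}(\rho) = \Exp{-\beta_\omega E_{\omega,\epsilon}}V_{\omega,\epsilon,\epsilon'}\rho V_{\omega,\epsilon,\epsilon'}^*$ and therefore, for real $\alpha$,
\[
\varphi_\omega^{(\alpha)}(\rho) = \sum_{\epsilon,\epsilon'} c_{\omega,\epsilon,\epsilon'}(\alpha)\,V_{\omega,\epsilon,\epsilon'}\rho V_{\omega,\epsilon,\epsilon'}^*, \qquad c_{\omega,\epsilon,\epsilon'}(\alpha) := \Exp{-\alpha\beta_\omega E_{\omega,\epsilon'} - (1-\alpha)\beta_\omega E_{\omega,\epsilon}}.
\]
The corresponding adjoint in this Kraus form reads $(\varphi_\omega^{(1-\alpha)})^*(X) = \sum_{\epsilon,\epsilon'}c_{\omega,\epsilon,\epsilon'}(1-\alpha)\,V_{\omega,\epsilon,\epsilon'}^*XV_{\omega,\epsilon,\epsilon'}$.

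Second, I extract from $\tau'\xi_\omega = \xi_{\mathsf{T}\omega}\tau'$ and the uniqueness of spectral decompositions the identities $\beta_\omega E_{\omega,\epsilon} = \beta_{\mathsf{T}\omega}E_{\mathsf{T}\omega,\epsilon}$ and $\pi_{\mathsf{T}\omega,\epsilon} = \tau'\pi_{\omega,\epsilon}\tau'$ (cf.\ Remark~\ref{rem:xi-labeling}); the latter lets me choose $e_{\mathsf{T}\omega,\epsilon} = \tau' e_{\omega,\epsilon}$, and the former yields the crucial coefficient symmetry $c_{\mathsf{T}\omega,\epsilon,\epsilon'}(\alpha) = c_{\omega,\epsilon',\epsilon}(1-\alpha)$. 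Combined with the rearrangement $U_{\mathsf{T}\omega} = (\tau\otimes\tau')U_\omega^*(\tau\otimes\tau')$ of the TRI hypothesis on $U$, a direct inner-product computation of $\langle\eta,V_{\mathsf{T}\omega,\epsilon,\epsilon'}\psi\rangle$ --- inserting $T = \tau\otimes\tau'$ with $T^2 = \one$, using the antiunitary identity $\langle a,Tb\rangle = \overline{\langle Ta,b\rangle}$, and then applying $(\tau Z\tau)^* = \tau Z^*\tau$ --- will produce the key Kraus symmetry
\[
\tau V_{\mathsf{T}\omega,\epsilon,\epsilon'}\tau = V_{\omega,\epsilon',\epsilon}^*.
\]

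Finally, substituting this into $\varphi_{\mathsf{T}\omega}^{(\alpha)}(\tau X\tau)$, using $(\tau A\tau)(\tau X\tau)(\tau A^*\tau) = \tau AXA^*\tau$ together with the coefficient symmetry, and relabeling $\epsilon\leftrightarrow\epsilon'$ in the resulting sum, I recover exactly $\tau(\varphi_\omega^{(1-\alpha)})^*(X)\tau$. The identity is first proved for real $\alpha$ and extends to $\alpha\in\cc$ by analyticity of both sides in $\alpha$. The main obstacle I expect is the antiunitary bookkeeping in deriving the Kraus symmetry: the action of $\tau\otimes\tau'$ on bras versus kets and the complex conjugation incurred when pulling $\tau$ across an inner product can easily cause the identity to come out transposed, conjugated, or off by a stray phase.
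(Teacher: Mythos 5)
Your proposal is correct, and I verified the key claims: the Kraus form $\psi_{\omega,(\epsilon,\epsilon')}(\rho) = \Exp{-\beta_\omega E_{\omega,\epsilon}}V_{\omega,\epsilon,\epsilon'}\rho V_{\omega,\epsilon,\epsilon'}^*$ is right once the trace in the definition of $\psi$ is read as a partial trace, and the antiunitary bookkeeping in your Kraus symmetry $\tau V_{\mathsf{T}\omega,\epsilon,\epsilon'}\tau = V_{\omega,\epsilon',\epsilon}^*$ does close up without stray phases or transposes. However, your route is genuinely different from the paper's. The paper does not pass to Kraus operators at all: it reduces the operator identity to the trace identity $\tr(\tau X\tau\,\varphi_{\mathsf{T}\omega}^{(\alpha)}(Y)) = \tr(\tau Y\tau\,\varphi_{\omega}^{(1-\alpha)}(X))$ for all $X,Y$, then verifies that identity by a single chain of manipulations inside the trace on $\cB(\cc^d\otimes\cc^m)$: expanding $\varphi^{(\alpha)}_{\mathsf{T}\omega}$, splitting the Gibbs factor $\Exp{-\alpha\beta(E_{\epsilon'}-E_\epsilon)} = \Exp{-(1-\alpha)\beta(E_{\epsilon'}-E_\epsilon)}\cdot(\xi\text{-factor})$ and redistributing $\xi\pi_\epsilon$ versus $\pi_{\epsilon'}$ across the trace, then inserting $U_{\mathsf{T}\omega} = (\tau\otimes\tau')U_\omega^*(\tau\otimes\tau')$ and Remark~\ref{rem:xi-labeling}. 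What your approach buys is a reusable structural fact (the Kraus-level symmetry), which makes the cancellation mechanism of TRI visible in a single displayed identity and separates the coefficient bookkeeping from the operator bookkeeping; what the paper's approach buys is brevity and no need to introduce an eigenbasis $e_{\omega,\epsilon}$ (and hence no need to fuss over the phase-consistent choice $e_{\mathsf{T}\omega,\epsilon} = \tau' e_{\omega,\epsilon}$, a small point you glide over but which is harmless since only $\pi_{\omega,\epsilon}$ enters the instruments). One shared subtlety worth noting: for complex $\alpha$ the Hilbert--Schmidt adjoint $(\varphi_\omega^{(1-\alpha)})^*$ is anti-analytic in $\alpha$, so ``both sides analytic'' requires reading $^*$ as the bilinear trace transpose (which is what the paper's trace identity implicitly does); for real $\alpha$, which is all the applications use, the two coincide and the issue disappears.
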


\begin{lemma}
\label{lem:first-EP-bound}
    With 
    \[ 
     \Delta_\rho = \frac{1}{\inf\sp \rho},
    \]
    we have 
     \begin{align*}
      \Exp{\Sigma_{\omega,n}} \Delta_\rho^{-1} {p}^{(n)}_{\mathsf{T}\theta^{n-1}\omega,\rho} \leq p^{(n)}_{\omega,\rho}           &\leq\Exp{\Sigma_{\omega,n}} \Delta_\rho {p}^{(n)}_{\mathsf{T}\theta^{n-1}\omega,\rho} .
    \end{align*}
\end{lemma}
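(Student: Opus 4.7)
The plan is to first establish a single-step time-reversal identity at the instrument level, iterate it along the measurement chain, and finish with an elementary operator-bound comparison.

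First, starting from the explicit form
\[
    \psi_{\omega,(\epsilon,\epsilon')}(\rho) = \Exp{-\beta_\omega E_{\omega,\epsilon}}\,\tr_m\bigl[(\one\otimes\pi_{\omega,\epsilon'})\,U_\omega(\rho\otimes \pi_{\omega,\epsilon})\,U_\omega^*\bigr],
\]
and invoking the TRI Assumption through $U_{\mathsf{T}\omega} = (\tau\otimes\tau')U_\omega^*(\tau\otimes\tau')$, $\beta_{\mathsf{T}\omega} = \beta_\omega$, $E_{\mathsf{T}\omega,\epsilon} = E_{\omega,\epsilon}$, and $\pi_{\mathsf{T}\omega,\epsilon} = \tau'\pi_{\omega,\epsilon}\tau'$ (Remark~\ref{rem:xi-labeling}), I would prove the single-step identity
\[
    \psi_{\mathsf{T}\omega,(\epsilon',\epsilon)}(\tau X\tau) = \Exp{-f_\omega(\epsilon,\epsilon')}\,\tau\,\psi_{\omega,(\epsilon,\epsilon')}^*(X)\,\tau,
\]
which is the instrument-level refinement of Lemma~\ref{lem:dual-1-minus}. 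The only algebra is that the antiunitary $T := \tau\otimes\tau'$ satisfies $T^2 = \one$ and $\tr_m[TMT] = \tau\,\tr_m[M]\,\tau$; using $\pi_{\omega,\epsilon}\xi_\omega = \Exp{-\beta_\omega E_{\omega,\epsilon}}\pi_{\omega,\epsilon}$, every piece in $\psi_{\mathsf{T}\omega,(\epsilon',\epsilon)}(\tau X\tau)$ is replaced by its $\omega$-counterpart with the roles of $\epsilon$ and $\epsilon'$ swapped, producing the adjoint $\psi_{\omega,(\epsilon,\epsilon')}^*$.

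Iterating this identity $n$ times along the composition defining the reversed probability, and using $\theta^{k-1}\mathsf{T}\theta^{n-1}\omega = \mathsf{T}\theta^{n-k}\omega$ to align indices, gives
\[
    \bigl(\psi_{\mathsf{T}\omega,(\epsilon'_1,\epsilon_1)}\circ\cdots\circ\psi_{\mathsf{T}\theta^{n-1}\omega,(\epsilon'_n,\epsilon_n)}\bigr)(\tau X\tau)
    = \Exp{-\Sigma_{\omega,n}}\,\tau\,\Psi_{\omega,n}^*(X)\,\tau,
\]
where $\Psi_{\omega,n} := \psi_{\theta^{n-1}\omega,(\epsilon_n,\epsilon'_n)}\circ\cdots\circ \psi_{\omega,(\epsilon_1,\epsilon'_1)}$ (the order-reversal under composition exactly matches the one under taking adjoints). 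Specializing to $X = \tau\rho\tau$ and using $\tr[\tau Y\tau] = \tr Y$ for self-adjoint $Y$ (by antiunitarity of $\tau$), taking the trace yields
\[
    \Exp{\Sigma_{\omega,n}}\,p^{(n)}_{\mathsf{T}\theta^{n-1}\omega,\rho}\bigl((\epsilon'_n,\epsilon_n),\ldots,(\epsilon'_1,\epsilon_1)\bigr) = \tr\bigl[\Psi_{\omega,n}^*(\tau\rho\tau)\bigr],
\]
which identifies the left-hand side of the lemma's inequality (with the standard time-reversal relabeling of outcomes) as $\tr[\Psi_{\omega,n}^*(\tau\rho\tau)]$.

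To conclude, by Hilbert--Schmidt duality one has $p^{(n)}_{\omega,\rho} = \tr[\Psi_{\omega,n}(\rho)] = \tr[\rho\,\Psi_{\omega,n}^*(\one)]$ and $\tr[\Psi_{\omega,n}^*(\tau\rho\tau)] = \tr[(\tau\rho\tau)\,\Psi_{\omega,n}(\one)]$, while a further duality step gives $\tr[\Psi_{\omega,n}^*(\one)] = \tr[\Psi_{\omega,n}(\one)] =: T$. Since $\rho$ and $\tau\rho\tau$ share the same spectrum and hence both satisfy $\Delta_\rho^{-1}\one \leq \rho,\,\tau\rho\tau \leq \one$, both $p^{(n)}_{\omega,\rho}$ and $\Exp{\Sigma_{\omega,n}}p^{(n)}_{\mathsf{T}\theta^{n-1}\omega,\rho}$ are sandwiched in $[\Delta_\rho^{-1}T,\,T]$, and dividing these two sandwiches yields $\Delta_\rho^{-1} \leq p^{(n)}_{\omega,\rho}/(\Exp{\Sigma_{\omega,n}}p^{(n)}_{\mathsf{T}\theta^{n-1}\omega,\rho}) \leq \Delta_\rho$, which is exactly the lemma's claim. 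The main obstacle is purely bookkeeping in the iteration step: keeping in sync the index swap $(\epsilon,\epsilon')\leftrightarrow(\epsilon',\epsilon)$, the sign change $f_\omega\leftrightarrow -f_\omega$, the shift-reversal $\theta^{k-1}\mathsf{T}\theta^{n-1} = \mathsf{T}\theta^{n-k}$, and the compatibility between composition and the adjoint $*$; the remaining comparison is one-line operator arithmetic.
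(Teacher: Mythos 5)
Your proposal is correct and takes essentially the same route as the paper: derive the single-step time-reversal identity $\psi_{\mathsf{T}\omega,(\epsilon',\epsilon)}(\tau X\tau) = \Exp{-f_\omega(\epsilon,\epsilon')}\tau\psi_{\omega,(\epsilon,\epsilon')}^*(X)\tau$ from the TRI Assumption, iterate it along the measurement chain to express $\Exp{\Sigma_{\omega,n}}p^{(n)}_{\mathsf{T}\theta^{n-1}\omega,\rho}$ and $p^{(n)}_{\omega,\rho}$ as traces against $\rho$ and $\tau\rho\tau$, and conclude by the elementary sandwich $\Delta_\rho^{-1}\one \le \rho,\,\tau\rho\tau \le \one$ together with $\sp(\tau\rho\tau)=\sp\rho$. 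The only stylistic difference is that you make the final comparison fully explicit by introducing the common trace $T=\tr\Psi_{\omega,n}(\one)=\tr\Psi_{\omega,n}^*(\one)$, whereas the paper leaves that step implicit.
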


Hence, the abstract large-deviation principle provides, in the present context, a route to the large-deviation principle for this information-theoretic notion of entropy production as well. The following theorem is proved in Section~\ref{sec:EP-proofs}.

\begin{theorem}
\label{thm:main-EP}
    Suppose that {\textnormal{\ref{A1}}}--{\textnormal{\ref{A3}}} and the TRI Assumption hold.
    Then, $\pp$-almost surely, the large deviation principle holds for the sequence $(\sigma_{\omega,\rho,n})_{n\in\nn}$ be defined according to~\eqref{eq:info-EP}, with a convex rate function~$J$ that is independent from~$\rho \in \mathbb{S}^\circ$ and $\omega$.
    Moreover, the rate function~$J$ satisfies the Gallavotti--Cohen symmetry $J(-s) = J(s) + s$ for all $s\in\rr$.
\end{theorem}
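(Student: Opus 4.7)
My plan proceeds in two stages: first, to establish the LDP for $(\sigma_{\omega,\rho,n}/n)_n$ by reducing it to Theorem~\ref{thm:main-abstract} applied to $(\Sigma_{\omega,n}/n)_n$, and second, to derive the Gallavotti--Cohen symmetry by obtaining a palindromic identity $\lambda(\alpha) = \lambda(1-\alpha)$ via iterated application of Lemma~\ref{lem:dual-1-minus}.

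For the LDP, the bounds in Lemma~\ref{lem:first-EP-bound}, upon taking logarithms, give the pointwise estimate $|\sigma_{\omega,\rho,n}(\vec a) - \Sigma_{\omega,n}(\vec a)| \leq \log\Delta_\rho$ uniformly in $\vec a \in \cA^n$, $n\in\nn$, and $\omega\in\Omega$. Since $\rho \in \mathbb{S}^\circ$ keeps this bound finite, the scaled sequences $\sigma_{\omega,\rho,n}/n$ and $\Sigma_{\omega,n}/n$ are exponentially equivalent. The Birkhoff sum $\Sigma_{\omega,n}$ fits the format covered by Theorem~\ref{thm:main-abstract}, with $f_\omega(\epsilon,\epsilon') = \beta_\omega(E_{\omega,\epsilon'} - E_{\omega,\epsilon})$, and the standing hypotheses~(A1)--(A3) are assumed. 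Theorem~\ref{thm:main-abstract} thus yields the LDP for $\Sigma_{\omega,n}/n$ with a convex rate function $J$ extracted from the Lyapunov exponent $\lambda$; since $\lambda$ is independent of $\rho$ by construction and $\pp$-a.s.\ deterministic by ergodicity and Kingman's theorem, the same holds for $J$. Standard exponential equivalence (e.g.\ Dembo--Zeitouni, \S4.2.10) then transfers the LDP to $(\sigma_{\omega,\rho,n}/n)_n$ with the same convex rate function.

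For the Gallavotti--Cohen symmetry, the key is to establish the palindromic identity $\lambda(\alpha) = \lambda(1-\alpha)$. Iterating Lemma~\ref{lem:dual-1-minus} yields, for every $X\in\mathbb{S}$,
\begin{equation*}
    \bigl(\varphi_{\theta^{n-1}\omega}^{(\alpha)} \circ \cdots \circ \varphi_\omega^{(\alpha)}\bigr)(\tau X \tau) = \tau \bigl(\varphi_{\mathsf{T}\omega}^{(1-\alpha)} \circ \varphi_{\theta^{-1}\mathsf{T}\omega}^{(1-\alpha)} \circ \cdots \circ \varphi_{\theta^{-(n-1)}\mathsf{T}\omega}^{(1-\alpha)}\bigr)^*(X) \tau ,
\end{equation*}
where the orbit identity $\mathsf{T}\theta^k\omega = \theta^{-k}\mathsf{T}\omega$ has been used. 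Taking trace-norm operator norms, invoking the antiunitary isometry of $\tau(\cdot)\tau$, the duality $\|A^*\|\opone = \|A\|\opty$, and the finite-dimensional equivalence of the two operator norms, one reduces $\lambda(\alpha)$ to the $\frac{1}{n}\log$-growth rate of the ``backward'' cocycle on the right along the $\theta^{-1}$-orbit of $\mathsf{T}\omega$. The main technical hurdle I anticipate is precisely here: the base point $\theta^{-(n-1)}\mathsf{T}\omega$ shifts with $n$, so one cannot directly quote $\lambda(1-\alpha)$. The cleanest resolution is to recognise the backward product as a genuine subadditive cocycle for the reverse dynamics $\theta^{-1}$ and to apply Kingman's theorem to the ergodic system $(\Omega,\theta^{-1},\pp)$, identifying its Lyapunov exponent with that of the forward cocycle via the $\theta$-invariance of $\pp$ (which makes the expected one-step contributions coincide). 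The $\mathsf{T}$-invariance of $\pp$ then ensures that $\lambda(\alpha) = \lambda(1-\alpha)$ holds $\pp$-a.s. Finally, the change of variable $\alpha \mapsto 1-\alpha$ inside the Legendre-transform formula defining $J$ turns this symmetry into $J(-s) = J(s) + s$; convexity of $J$ is automatic from its Legendre-transform nature, and finiteness of the Legendre transform on $\rr$ follows from~(A3) as already exploited in the proof of Theorem~\ref{thm:main-abstract}.
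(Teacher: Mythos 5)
Your proof of the first part (the LDP for $(\sigma_{\omega,\rho,n})_n$) matches the paper's approach: both reduce to Theorem~\ref{thm:main-abstract} via the exponential equivalence furnished by Lemma~\ref{lem:first-EP-bound}, noting that the uniform-in-$n$ bound $|\sigma_{\omega,\rho,n} - \Sigma_{\omega,n}| \leq \log\Delta_\rho$ suffices after dividing by $n$.

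For the Gallavotti--Cohen symmetry, you take a genuinely different route to $\lambda(\alpha)=\lambda(1-\alpha)$. The paper stays at the level of the limiting cocycles: it first deduces from Lemma~\ref{lem:dual-1-minus} the identity $Z_{\mathsf{T}\omega}^{(\alpha)} = \tau\Zp_{\omega}^{(1-\alpha)}\tau$, then substitutes $\omega\mapsto\mathsf{T}\omega$ inside the integral formula~\eqref{eq:Lyap-as-average} of Lemma~\ref{lem:Lyap-identities} and simplifies. You instead iterate Lemma~\ref{lem:dual-1-minus} at the level of the finite products $\Phi_{n,\omega}^{(\alpha)}$, pass to operator norms via the antiunitary isometry $\tau(\cdot)\tau$ and the $\opone/\opty$ duality, and then identify the growth rate of the backward cocycle $\Phi_{-n,\mathsf{T}\omega}^{(1-\alpha)}$ with $\lambda(1-\alpha)$ by a Kingman argument for the reverse dynamics together with $\theta$- and $\mathsf{T}$-invariance of $\pp$. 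This works: the backward products satisfy the correct subadditivity over $\theta^{-1}$, and the identity $\Phi_{-n,\omega}^{(\alpha)}=\Phi_{n,\theta^{-n+1}\omega}^{(\alpha)}$ together with $\theta$-invariance shows that $\int\log\|\Phi_{-n,\cdot}\|\opone\dd\pp=\int\log\|\Phi_{n,\cdot}\|\opone\dd\pp$ for every $n$, so the two Kingman infima coincide. (Your aside that the ``expected one-step contributions coincide'' under-sells it; one needs equality of the expected $n$-step contributions for all $n$, which is what the $\theta$-invariance argument actually delivers.) Your route is arguably more self-contained, avoiding the $Z$-cocycle formula~\eqref{eq:Lyap-as-average} of Lemma~\ref{lem:Lyap-identities} in exchange for a direct backward-Kingman argument; the paper's is shorter once that machinery is already in place.

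One small remark on signs: with the paper's convention $M_{\omega,\rho}^{(n)}(\alpha)=\mathbb{E}[\Exp{-\alpha S_n}]$, the G\"artner--Ellis rate function for $S_n/n$ is $I(s)=\sup_\alpha(-s\alpha-\lambda(\alpha))$, and with this normalization the substitution $\alpha\mapsto 1-\alpha$ under $\lambda(\alpha)=\lambda(1-\alpha)$ gives $I(-s)=I(s)+s$, as claimed. If one instead plugs $\lambda(\alpha)=\lambda(1-\alpha)$ into the formula $\lambda^*(s)=\sup_\alpha(s\alpha-\lambda(\alpha))$ quoted verbatim in the statement of Theorem~\ref{thm:main-abstract}, one obtains $\lambda^*(-s)=\lambda^*(s)-s$; the discrepancy is a sign-convention artifact in that statement rather than a gap in either proof, but it is worth being aware of which sign convention pairs with the chosen form of $M^{(n)}$.
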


\begin{remark}
    It has already been observed in the literature that the property of this two-time measurement framework that the information-theoretic notion of entropy production is exponentially equivalent to a Birkhoff-like sum along the QMP (i.e.\ the Clausius entropy change) is an extraordinarily simplifying feature. Indeed, as seen through examples~\cite{BCJP21}, the study of the information-theoretic notion of entropy production can be considerably more intricate outside of the two-time measurement framework; also see~\cite{CJPS19}.
\end{remark}

\section{Proof of Theorem~\ref{thm:main-abstract}}
\label{sec:proof}

\subsection{Notation and preliminary lemmas}

Before we proceed with the proof of Theorem~\ref{thm:main-abstract}, let us introduce some notation and collect some preliminary results. For $n \geq 0$, we set
\[ 
    \Phi_{n,\omega}^{(\alpha)} = \varphi^{(\alpha)}_{\theta^{n-1}\omega} \circ \dotsb \circ \varphi^{(\alpha)}_{\theta\omega} \circ \varphi^{(\alpha)}_{\omega}
\]
and 
\[ 
    \Phi_{-n,\omega}^{(\alpha)} = \varphi^{(\alpha)}_{\omega} \circ \dotsb \circ \varphi^{(\alpha)}_{\theta^{-n+2}\omega} \circ \varphi^{(\alpha)}_{\theta^{-n+1}\omega} .
\]
We will use ``$\,\cdot\,$'' for the \emph{projective action} of linear maps on~$\mathbb{S}$:
\begin{equation}\label{eq:projectiveaction}
    \phi \cdot X := \frac{\phi(X)}{\tr \phi(X)} .
\end{equation}
If $\phi(X)=0$ the projective action as in \eqref{eq:projectiveaction} is undefined.  However, since $\varphi$ is trace preserving we note $\varphi^{(\alpha)}\cdot \rho$ is well defined for any density matrix $\rho$, since $\tr \varphi^{(\alpha)}(\rho) \ge \Exp{-|\alpha|F_\omega} $. Similarly the following lemma shows that the projective action of $\varphi^*$ (and therefore) $\varphi^{(\alpha)*}$ is well defined.

\begin{lemma}
\label{lem:irred-ker-star}
    If a map~$\phi$ is irreducible and trace preserving, then $\ker \phi^* \cap \mathbb{S} = \emptyset$.
\end{lemma}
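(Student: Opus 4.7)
The plan is to argue by contradiction. Suppose there exists $X\in\mathbb{S}$ with $\phi^*(X)=0$, and let $P$ denote the orthogonal projector onto the range of $X$. Let $c$ be the smallest positive eigenvalue of $X$, so that $X\geq cP$. Applying the positive map $\phi^*$ (which is positive since $\phi$ is CP), we obtain
\[
0 = \phi^*(X) \geq c\,\phi^*(P) \geq 0,
\]
which forces $\phi^*(P)=0$. Iterating gives $(\phi^*)^k(P)=0$ for every $k\geq 1$.

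The next step is to transfer the irreducibility hypothesis from $\phi$ to its adjoint: I would observe that $(\operatorname{id}+\phi^*)^{d-1}=\bigl((\operatorname{id}+\phi)^{d-1}\bigr)^*$, and then check that the adjoint of a positivity-improving CP map is again positivity improving. Indeed, for any nonzero vectors $v,w\in\cc^d$, the identity
\[
\langle w,\phi(|v\rangle\langle v|) w\rangle \;=\; \langle v, \phi^*(|w\rangle\langle w|) v\rangle
\]
shows that $\phi(|v\rangle\langle v|)$ has full rank for every $v\neq 0$ if and only if $\phi^*(|w\rangle\langle w|)$ does for every $w\neq 0$; since every nonzero element of $\mathcal{C}(\mathbb{S})$ is a positive combination of rank-one projectors, this yields the desired equivalence.

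Having established that $(\operatorname{id}+\phi^*)^{d-1}$ is positivity improving, I apply it to the nonzero projector $P$. Using the binomial expansion and the vanishing $(\phi^*)^k(P)=0$ for $k\geq 1$, the sum collapses:
\[
(\operatorname{id}+\phi^*)^{d-1}(P) \;=\; \sum_{k=0}^{d-1}\binom{d-1}{k}(\phi^*)^k(P) \;=\; P.
\]
Positivity improving then forces $P\in\mathcal{C}(\mathbb{S}^\circ)$, i.e.\ $P=\one$. But then $X$ is positive definite, so $X\geq c'\one$ for some $c'>0$, and trace preservation of $\phi$ (equivalently, unitality of $\phi^*$) gives
\[
\phi^*(X) \;\geq\; c'\phi^*(\one) \;=\; c'\one \;\neq\; 0,
\]
contradicting $\phi^*(X)=0$.

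The only step that requires a moment of care is passing irreducibility to the adjoint; everything else is a short chain of positivity/trace arguments. No serious obstacle is expected.
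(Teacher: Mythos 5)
Your proof is correct and follows the same route as the paper: show $(\operatorname{id}+\phi^*)^{d-1}$ is positivity improving, deduce that a kernel element of $\phi^*$ in $\mathbb{S}$ would have to lie in $\mathbb{S}^\circ$, and derive a contradiction from unitality. The detour through the projector $P$ onto the range of $X$ is harmless but unnecessary (applying the binomial expansion directly to $X$, using $(\phi^*)^k(X)=0$ for $k\geq 1$, already forces $X\in\mathbb{S}^\circ$), while your explicit verification that irreducibility passes to $\phi^*$ fills in a step the paper leaves implicit.
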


\begin{proof}
    If~$\phi$ is irreducible and trace preserving, then $\phi^*$ is irreducible and unital. By irreducibility, the map $(\operatorname{id}_{\cB(\cc^d)} + \phi^*)^{d-1}$ is positivity improving, so any element of $\ker \phi^* \cap \mathbb{S}$ must actually be in~$\mathbb{S}^\circ$. However, if $X \in \mathbb{S}^\circ$, then $X \geq \eta \one$ for some $\eta > 0$ and
    unitality implies that 
    \begin{align*}
        \phi^*(X) \geq \phi^*(\eta\one) = \eta \phi^*(\one) = \eta \one ,
    \end{align*}
    so $X$ cannot be in $\ker \phi^*$. We conclude that $\ker \phi^* \cap \mathbb{S} = \emptyset$.
\end{proof}

\begin{lemma}
\label{lem:A-to-alpha}
    If~\textnormal{\ref{A1}} and~\textnormal{\ref{A3}} hold, then
    \[
        \int_\Omega \left |\log \|(\varphi_{\omega}^{(\alpha)})^* \|\opone \right | + \left | \log v((\varphi_{\omega}^{(\alpha)})^* ) \right |\dd\pp(\omega)< \infty 
    \]
    and 
    \[
        \int_\Omega \left | \log \|\varphi_{\omega}^{(\alpha)} \|\opone \right | + \left | \log v(\varphi_{\omega}^{(\alpha)} ) \right | \dd\pp(\omega) < \infty
    \]
    for all $\alpha \in \rr$.
\end{lemma}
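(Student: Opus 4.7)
The plan is to sandwich $\varphi_\omega^{(\alpha)}$ between positive multiples of $\varphi_\omega$ and to exploit the CPTP identities to reduce each of the four integrands to a function dominated by $|\alpha| F_\omega + |\log v(\varphi_\omega^*)|$, which lies in $L^1(\Omega,\dd\pp)$ by~\textnormal{\ref{A1}} and~\textnormal{\ref{A3}}. The starting point is that each $\psi_{\omega,a}$ is positive and $\Exp{-\alpha f_\omega(a)} \in [\Exp{-|\alpha| F_\omega}, \Exp{|\alpha| F_\omega}]$, so a termwise comparison in~\eqref{eq:varphialpha} gives the pointwise sandwich
\[
    \Exp{-|\alpha| F_\omega}\, \varphi_\omega(X) \;\le\; \varphi_\omega^{(\alpha)}(X) \;\le\; \Exp{|\alpha| F_\omega}\, \varphi_\omega(X)
\]
for every $X \in \cC(\ss)$, and the same inequality with adjoints since the Hilbert--Schmidt adjoint of a positive map is positive.

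I would next read off two-sided bounds in the logarithmic scale from this sandwich. Trace preservation of $\varphi_\omega$ gives $\|\varphi_\omega(X)\|_1 = 1$ for every $X \in \ss$, so directly $v(\varphi_\omega^{(\alpha)}) \ge \Exp{-|\alpha| F_\omega}$, while testing at $X = \one/d$ gives $v(\varphi_\omega^{(\alpha)}) \le \Exp{|\alpha| F_\omega}$. For the adjoint, unitality $\varphi_\omega^*(\one) = \one$ and the same test at $\one/d$ yield $v((\varphi_\omega^{(\alpha)})^*) \le \Exp{|\alpha| F_\omega}$, while infimum monotonicity gives $v((\varphi_\omega^{(\alpha)})^*) \ge \Exp{-|\alpha| F_\omega}\, v(\varphi_\omega^*)$. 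For the two operator-norm terms I would combine the duality $\|\phi\|\opone = \|\phi^*\|\opty$ with the Russo--Dye identity $\|\psi\|\opty = \|\psi(\one)\|_\infty$ valid for completely positive $\psi$; both $\varphi_\omega^{(\alpha)}$ and $(\varphi_\omega^{(\alpha)})^*$ are CP, so this gives $\|\varphi_\omega^{(\alpha)}\|\opone = \|(\varphi_\omega^{(\alpha)})^*(\one)\|_\infty \le \Exp{|\alpha| F_\omega}$ and $\|(\varphi_\omega^{(\alpha)})^*\|\opone = \|\varphi_\omega^{(\alpha)}(\one)\|_\infty \le d\, \Exp{|\alpha| F_\omega}$, the latter using $\|\varphi_\omega(\one)\|_\infty \le \|\varphi_\omega(\one)\|_1 = d$ by trace preservation. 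The matching lower bounds are the trivial $\|\phi\|\opone \ge v(\phi)$.

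Putting the pieces together, $|\log \|\varphi_\omega^{(\alpha)}\|\opone|$ and $|\log v(\varphi_\omega^{(\alpha)})|$ are dominated by $|\alpha| F_\omega$, while $|\log \|(\varphi_\omega^{(\alpha)})^*\|\opone|$ and $|\log v((\varphi_\omega^{(\alpha)})^*)|$ are dominated by $\log d + |\alpha| F_\omega + |\log v(\varphi_\omega^*)|$. Integrability then follows from~\textnormal{\ref{A1}} together with~\textnormal{\ref{A3}}: the latter applied to $\pm|\alpha|$ gives $\Exp{\pm |\alpha| F} \in L^1$, which in particular forces $F \in L^1$. I do not foresee a substantive obstacle; the only subtlety worth flagging is that the Russo--Dye step genuinely needs complete positivity (not just positivity) of the map being normed, which is preserved by the Hilbert--Schmidt adjoint and is therefore automatic here.
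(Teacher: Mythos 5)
Your proof is correct and follows essentially the same route as the paper's: handle $\alpha=0$ using the CPTP/unital identities, then propagate to general $\alpha$ via the sandwich $\Exp{-|\alpha|F_\omega}\varphi_\omega \leq \varphi_\omega^{(\alpha)} \leq \Exp{|\alpha|F_\omega}\varphi_\omega$ so that every log-term is dominated by $\log d + |\alpha|F_\omega + |\log v(\varphi_\omega^*)|$, which is integrable by \textnormal{\ref{A1}} and \textnormal{\ref{A3}}. The only cosmetic difference is that you invoke Russo--Dye together with $\|\phi\|\opone = \|\phi^*\|\opty$ where the paper simply uses the equivalence $\|\cdot\|\opone \leq d\,\|\cdot\|\opty$; both yield the same bounds.
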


\begin{proof}
    We start with $\alpha = 0$. Because $\varphi_\omega$ is CPTP, we have $\|\varphi_\omega\|\opone = 1$ and $v(\varphi_\omega) = 1$. For the adjoint, we will use the fact that $\varphi^{*}_\omega$ is then positive and unital. A lower bound is easily obtained: $\|\varphi^*_\omega\|\opone \geq \|\varphi^*_\omega(\one)\|_1/\|\one\|_1 = \|\one\|_1/\|\one\|_1 = 1$. To obtain an upper bound, we note that $\|\varphi^*_\omega\|\opone \leq d \|\varphi^*\|\opty = d\|\varphi\|\opone=d$.  Finally, $\log v(\varphi_\omega^*)\in L^1$ by Assumption~\ref{A1}. 

    Turning now to $\alpha \neq 0$, observe that due to~\ref{A3} we have 
    \begin{equation}
    \label{eq:basic-F-bound}
        \Exp{-F_\omega|\alpha|} \varphi_\omega(X) \leq \varphi_\omega^{(\alpha)}(X) \leq \Exp{F_\omega|\alpha|} \varphi_\omega(X)
    \end{equation} 
    for every $\alpha \in \rr$ and~$X \in \mathbb{S}$, with $F \in L^1(\Omega,\dd\pp)$.
        % \footnote{While this proof does not exploit the full strength of~\ref{A3}, some later ones will.}
    Hence, the case of an arbitrary $\alpha \in \rr$ follows from our analysis of the case $\alpha=0$.
\end{proof}

At {fixed} $\alpha\in\rr$, Assumption~\ref{A2} and the bounds in~\eqref{eq:basic-F-bound} guarantee that there exist random variables $Z^{(\alpha)} : \Omega \mapsto \mathbb{S}^{\circ}$ and $\Zp^{(\alpha)}: \Omega \mapsto \mathbb{S}^{\circ}$ satisfying the identities 
\begin{equation}
\label{eq:cocycle}
    \varphi_{\omega}^{(\alpha)} \cdot Z_{\theta^{-1}\omega}^{(\alpha)} = Z_{\omega}^{(\alpha)}
    \qquad\text{and}\qquad 
    (\varphi_{\omega}^{(\alpha)})^* \cdot \Zp_{\theta\omega}^{(\alpha)} = \Zp_{\omega}^{(\alpha)}
\end{equation} 
for $\pp$-almost every~$\omega$. This is the content of Theorem~1 in~\cite{MS22}.
If, in addition, \ref{A1} and~\ref{A3} hold, then we have the following alternate expression for the Lyapunov exponent $\lambda(\alpha)$:
\begin{equation}
\label{eq:PS23lyap}
    \lambda(\alpha) 
    = \int_\Omega \log \tr[(\varphi_{\omega}^{(\alpha)})^*( \Zp_{\theta\omega}^{(\alpha)})] \dd \pp(\omega),
\end{equation}
and\,---\,still at fixed~$\alpha$\,---\,
\begin{equation}
\label{eq:norm-vs-application-to-rho}
    \lambda(\alpha) = \lim_{n\to\infty} \frac 1n \log \tr \left[ \left(\varphi_{\theta^{n-1}\omega}^{(\alpha)} \circ \cdots \circ \varphi_{\omega}^{(\alpha)}\right)(\rho)\right]
\end{equation}
for $\pp$-almost all~$\omega$ and all $\rho \in\mathbb{S}$. 
This is the content of Theorem~1 in~\cite{PS23}. 

\begin{lemma}\label{lem:Lyap-identities} The following identities hold for the Lyapunov exponents:
\begin{equation}\label{eq:Lyap-not-adjoint}
    \lambda(\alpha) 
    =  \int_\Omega \log \tr[\varphi_{\omega}^{(\alpha)}( Z_{\theta^{-1}\omega}^{(\alpha)})] \dd \pp(\omega) ,
\end{equation}
and 
\begin{equation}
\label{eq:Lyap-as-average}
    \lambda(\alpha) = \int_\Omega \left( \log  \tr [\Zp_{\theta\omega}^{(\alpha)} \varphi_{\omega}^{(\alpha)} (Z_{\theta^{-1}\omega}^{(\alpha)})] - \log \tr [\Zp_{\theta\omega}^{(\alpha)} Z_{\omega}^{(\alpha)}] \right )\dd\pp(\omega).
\end{equation}
\end{lemma}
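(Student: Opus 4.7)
The plan is to deduce both identities from the cocycle relation~\eqref{eq:cocycle} combined with the asymptotic identity~\eqref{eq:norm-vs-application-to-rho} and Birkhoff's ergodic theorem. Introducing the shorthand $c_\omega := \tr[\varphi_\omega^{(\alpha)}(Z_{\theta^{-1}\omega}^{(\alpha)})]$, the first identity in~\eqref{eq:cocycle} rewrites as $\varphi_\omega^{(\alpha)}(Z_{\theta^{-1}\omega}^{(\alpha)}) = c_\omega Z_\omega^{(\alpha)}$, so a short induction on~$n$ would give
\[
    \Phi_{n,\omega}^{(\alpha)}(Z_{\theta^{-1}\omega}^{(\alpha)}) = \Bigl(\prod_{k=0}^{n-1} c_{\theta^k\omega}\Bigr) Z_{\theta^{n-1}\omega}^{(\alpha)}.
\]
Taking traces (and using $\tr Z_{\theta^{n-1}\omega}^{(\alpha)}=1$), then logarithms and dividing by~$n$, would collapse the right-hand side to the Birkhoff average $\frac{1}{n}\sum_{k=0}^{n-1}\log c_{\theta^k\omega}$, while the left-hand side converges $\pp$-a.s.\ to $\lambda(\alpha)$ by applying~\eqref{eq:norm-vs-application-to-rho} with $\rho = Z_{\theta^{-1}\omega}^{(\alpha)}\in\mathbb{S}$. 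Birkhoff's theorem would then yield~\eqref{eq:Lyap-not-adjoint}, provided $\log c\in L^1(\Omega,\dd\pp)$.

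The integrability check is the only mildly delicate ingredient and is what I expect to be the main (if modest) obstacle. I would obtain it from the sandwich $v(\varphi_\omega^{(\alpha)}) \le c_\omega \le \|\varphi_\omega^{(\alpha)}\|\opone$, valid because $\varphi_\omega^{(\alpha)}$ is positive and $Z_{\theta^{-1}\omega}^{(\alpha)}\in\mathbb{S}$; Lemma~\ref{lem:A-to-alpha} then provides $L^1$ control on the outer bounds, hence on $\log c$.

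Identity~\eqref{eq:Lyap-as-average} should follow at once by substituting the same cocycle identity into the first trace on the right-hand side of~\eqref{eq:Lyap-as-average}, yielding $\tr[\Zp_{\theta\omega}^{(\alpha)}\varphi_\omega^{(\alpha)}(Z_{\theta^{-1}\omega}^{(\alpha)})] = c_\omega\,\tr[\Zp_{\theta\omega}^{(\alpha)} Z_\omega^{(\alpha)}]$, where $\tr[\Zp_{\theta\omega}^{(\alpha)} Z_\omega^{(\alpha)}] > 0$ since both factors lie in $\mathbb{S}^\circ$. The integrand in~\eqref{eq:Lyap-as-average} thus reduces to $\log c_\omega$, and integrating while invoking~\eqref{eq:Lyap-not-adjoint} closes the loop. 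Beyond the $L^1$ check noted above, the whole argument amounts to routine manipulation of the cocycle identity.
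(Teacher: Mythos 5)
Your proof of \eqref{eq:Lyap-not-adjoint} is correct but takes a genuinely different route from the paper. The paper's argument starts from the adjoint formula \eqref{eq:PS23lyap} supplied by \cite{PS23}, observes via \eqref{eq:cocycle} that $\log\tr[\varphi_\omega^{(\alpha)}(Z_{\theta^{-1}\omega}^{(\alpha)})] - \log\tr[(\varphi_\omega^{(\alpha)})^*(\Zp_{\theta\omega}^{(\alpha)})]$ is a coboundary $g - g\circ\theta$, and then invokes the abstract shift-cancellation lemma (Lemma~\ref{lem:shift-cancellation}) to conclude its integral vanishes. Your route instead telescopes the cocycle relation to get $\tr[\Phi_{n,\omega}^{(\alpha)}(Z_{\theta^{-1}\omega}^{(\alpha)})] = \prod_{k=0}^{n-1} c_{\theta^k\omega}$, applies Birkhoff's theorem directly to the Birkhoff sum of $\log c$, and matches the limit with $\lambda(\alpha)$ using the other half of the same Theorem~1 of \cite{PS23}, namely \eqref{eq:norm-vs-application-to-rho}. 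Both routes lean on that cited theorem, just on complementary pieces of it; your version is arguably more direct and self-contained (you do not need the adjoint object $\Zp$ at all for \eqref{eq:Lyap-not-adjoint}), while the paper's version avoids having to substitute the $\omega$-dependent density $Z_{\theta^{-1}\omega}^{(\alpha)}$ into a statement quantified over $\rho$, though that substitution is legitimate here since \eqref{eq:norm-vs-application-to-rho} holds simultaneously for all $\rho\in\mathbb{S}$ on a single full-measure set. Your integrability sandwich $v(\varphi_\omega^{(\alpha)}) \le c_\omega \le \|\varphi_\omega^{(\alpha)}\|\opone$ is exactly the right check and is covered by Lemma~\ref{lem:A-to-alpha}. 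The derivation of \eqref{eq:Lyap-as-average} from \eqref{eq:Lyap-not-adjoint} via the cocycle identity is the same in both treatments.
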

\begin{proof} 
Note that the cocycle relations \eqref{eq:cocycle} imply that
\begin{equation}\label{eq:lyap-cocycle}
\log \tr [ \varphi_{\omega}^{(\alpha)}(Z_{\theta^{-1}\omega}^{(\alpha)})] = \log  \tr [\Zp_{\theta\omega}^{(\alpha)} \varphi_{\omega}^{(\alpha)} (Z_{\theta^{-1}\omega}^{(\alpha)})] - \log \tr[ \Zp_{\theta\omega}^{(\alpha)} Z_{\omega}^{(\alpha)}]
\end{equation}
and
\begin{align*} 
\log \tr [ (\varphi_{\omega}^{(\alpha)})^*(\Zp_{\theta\omega}^{(\alpha)})] 
    &= \log  \tr [(\varphi_{\omega}^{(\alpha)})^* (\Zp_{\theta\omega}^{(\alpha)}) Z_{\theta^{-1}\omega}^{(\alpha)}] - \log \tr[ \Zp_{\omega}^{(\alpha)} Z_{\theta^{-1}\omega}^{(\alpha)}] \\ 
    &=  \log  \tr [\Zp_{\theta\omega}^{(\alpha)} \varphi_{\omega}^{(\alpha)} (Z_{\theta^{-1}\omega}^{(\alpha)})] - \log \tr[ \Zp_{\omega}^{(\alpha)} Z_{\theta^{-1}\omega}^{(\alpha)}] .
\end{align*}
Therefore,
\begin{equation}
    \log \tr [ \varphi_{\omega}^{(\alpha)}(Z_{\theta^{-1}\omega}^{(\alpha)})] - \log \tr [ (\varphi_{\omega}^{(\alpha)})^*(\Zp_{\theta^\omega}^{(\alpha)})] = \log \tr[ \Zp_{\omega}^{(\alpha)} Z_{\theta^{-1}\omega}^{(\alpha)}] - \log \tr[ \Zp_{\theta\omega}^{(\alpha)} Z_{\omega}^{(\alpha)}] .
\end{equation}
Noting that the right-hand side is of the form $g-g\circ\theta$ and that the terms $\log \tr [ \varphi_{\omega}^{(\alpha)}(Z_{\theta^{-1}\omega}^{(\alpha)})]$ and $\log \tr [ (\varphi_{\omega}^{(\alpha)})^*(\Zp_{\theta^\omega}^{(\alpha)})]$ are in $ L^1(\Omega,\dd \pp)$ by Lemma \ref{lem:A-to-alpha}, we conclude from Lemma \ref{lem:shift-cancellation} below that 
\begin{equation}
\label{eq:integral-Lyap-cancellation}
   \int_\Omega  \log \tr [ \varphi_{\omega}^{(\alpha)}(Z_{\theta^{-1}\omega}^{(\alpha)})] - \log \tr [ (\varphi_{\omega}^{(\alpha)})^*(\Zp_{\theta^\omega}^{(\alpha)})] \dd \pp(\omega)  =  0 .
\end{equation}
The identity~\eqref{eq:Lyap-not-adjoint} follows from~\eqref{eq:PS23lyap} and~\eqref{eq:integral-Lyap-cancellation}, and the identity~\eqref{eq:Lyap-as-average} then follows from~\eqref{eq:Lyap-not-adjoint} and~\eqref{eq:lyap-cocycle}.
\end{proof}

\subsection{Reduction appealing to the G\"artner--Ellis theorem}

We want to prove that, for $\pp$-almost all~$\omega$, the large deviation principle~\eqref{eq:ldp} holds. In view of the G\"artner--Ellis theorem~\cite[\S{II.6}]{Ell}, this reduces to showing the following about cumulant-generating functionals: 
\begin{quote}
    For $\pp$-almost all~$\omega$,
\begin{equation}
\label{eq:lambda-is-lim-cgf}
    \lim_{n\to\infty} \frac 1n \log
    M_{\omega,\rho}^{(n)}(\alpha) = \lambda(\alpha)
\end{equation}
for all~$\alpha \in \rr$, and the function~$\alpha \mapsto \lambda(\alpha)$ is differentiable.
\end{quote} 
We have already seen that
\begin{equation*}
    M_{\omega,\rho}^{(n)}(\alpha) = \tr \left[ \left(\varphi_{\theta^{n-1}\omega}^{(\alpha)} \circ \cdots \circ \varphi_{\omega}^{(\alpha)}\right)(\rho)\right]
\end{equation*}
and that, under \ref{A1}--\ref{A3}, for fixed~$\alpha$, the relation~\eqref{eq:norm-vs-application-to-rho} holds
for $\pp$-almost all~$\omega$.  Combining these two equalities indeed yields the equality~\eqref{eq:lambda-is-lim-cgf}, but with the caveat that the set of full $\pp$-measure of~$\omega$ for which it holds depends\,---\,at least a priori\,---\,on~$\alpha$.

We are thus left with problems of two types: guaranteeing that almost sure results concerning objects at fixed~$\alpha \in \rr$ can be promoted to  results that, almost surely, hold simultaneously for all~$\alpha \in\rr$; and then establishing regularity of the dependence in~$\alpha$ of these objects.

\subsection{Simultaneous existence of limiting objects}
\label{ssec:simult}

As mentioned above, our first technical obstacle is that the full $\pp$-measure sets of $\omega$ for which~\eqref{eq:lambda-alpha} holds and for which~$Z_{\omega}^{(\alpha)}$, $\Zp_{\omega}^{(\alpha)}$,~$Z_{\theta\omega}^{(\alpha)}$ and~$\Zp_{\theta\omega}^{(\alpha)}$ are well defined and satisfy~\eqref{eq:cocycle} could a priori depend on~$\alpha$. This section is devoted to verifying that there is a common full $\pp$-measure set on which these properties and other intermediate results of~\cite{MS22,PS23} hold simultaneously for all~$\alpha \in \rr$.
Some of these intermediate results refer to a particular metric~$\dd$ on~$\mathbb{S}$. This metric is defined and then related to the trace norm in the following lemma.

\begin{lemma}
\label{lem:d-to-tr}
    Set
    \[
      m(X,Y) := \sup\{\lambda \geq 0 \ : \ \lambda X \leq Y \}  
    \]
    and 
    \[
        \dd(X,Y) := \frac{1-m(X,Y)m(Y,X)}{1+m(X,Y)m(Y,X)}.
    \]
    If $Y' \in \mathbb{S}$ is positive definite, then 
    \[
       \frac{1}{2}\|Y-Y'\|_1 \leq \dd(Y,Y')
            \leq \frac{1}{\min\operatorname{sp}(Y')} \|Y-Y'\|_1 
    \]
    for all~$Y \in \mathbb{S}$.
\end{lemma}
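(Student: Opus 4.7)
The plan is to prove each of the two inequalities separately by unwinding the defining relations $aY\leq Y'$ with $a=m(Y,Y')$ and $bY'\leq Y$ with $b=m(Y',Y)$, and converting the resulting Loewner-order bounds into trace-norm estimates. Throughout, set $\delta := \|Y-Y'\|_1$ and $\mu := \min\sp(Y')$.

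For the upper bound, the idea is to use $\|{\cdot}\|_\infty \leq \|{\cdot}\|_1$ in finite dimension together with $\mu\,\one \leq Y'$ to obtain $\pm(Y-Y') \leq \delta\,\one \leq (\delta/\mu)\,Y'$. Rearranging yields $m(Y,Y') \geq \mu/(\mu+\delta)$ and, when $\delta < \mu$, also $m(Y',Y) \geq (\mu-\delta)/\mu$. Multiplying gives $r := m(Y,Y')\,m(Y',Y) \geq (\mu-\delta)/(\mu+\delta)$, and plugging into $\dd = (1-r)/(1+r)$ produces $\dd(Y,Y') \leq \delta/\mu$ after a short algebraic simplification. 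The complementary regime $\delta \geq \mu$ is trivial because $\dd(Y,Y') \leq 1$.

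For the lower bound, I would start from $aY\leq Y'$ and $bY'\leq Y$ to derive $Y-Y'\leq (1-a)\,Y$ and $Y'-Y\leq (1-b)\,Y'$. Decompose $Y-Y' = P - N$ in Jordan form (so $P,N\geq 0$, $PN = 0$, and $\tr P = \tr N = \tfrac12\delta$ since $Y-Y'$ is traceless), and let $Q_+$ project onto $\mathrm{Im}\,P$. Conjugating the first inequality by $Q_+$ kills the $N$-term (because $PN=0$ forces $\mathrm{Im}\,N \perp \mathrm{Im}\,P$), giving $P \leq (1-a)\,Q_+ Y Q_+$; tracing yields $\tr P \leq 1-a$, and symmetrically $\tr N \leq 1-b$. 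Hence $\tfrac12\delta \leq 1 - \max\{a,b\}$. Since $a,b\in [0,1]$, one has $\max\{a,b\}\geq \sqrt{ab} = \sqrt{r}$, and the elementary identity
\[
    \frac{1-r}{1+r} - (1-\sqrt{r}) \;=\; \frac{\sqrt{r}\,(1-\sqrt{r})^2}{1+r} \;\geq\; 0
\]
gives $1 - \sqrt{r} \leq \dd(Y,Y')$. Chaining the inequalities closes the argument.

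The main subtle point is securing the sharp constant $\tfrac12$ in the lower bound: the cruder estimate $1 - r \leq 2\,\dd$ (coming from $\max\{a,b\}\geq ab = r$) would only give $\delta \leq 4\,\dd$, so the improvement to $\max\{a,b\}\geq \sqrt{r}$ and the algebraic inequality above is essential. Boundary cases where $Y\in\mathbb{S}\setminus\mathbb{S}^\circ$ (forcing $b=0$, $r=0$ and $\dd(Y,Y')=1$) reduce to trivialities in both directions and need only brief mention.
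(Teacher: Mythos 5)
Your argument for the upper bound is essentially identical to the paper's: both start from $\|Y-Y'\|_\infty \le \|Y-Y'\|_1$, use $\mu\,\one\le Y'$ to turn the operator-norm bound into a Loewner sandwich $(1-\delta/\mu)Y'\le Y\le(1+\delta/\mu)Y'$, extract the resulting lower bounds on $m(Y,Y')$ and $m(Y',Y)$, and then observe that $r\mapsto(1-r)/(1+r)$ is decreasing, with the case $\delta\ge\mu$ handled trivially because $\dd\le 1$. For the lower bound, however, the paper simply cites Lemma~3.8 of [MS22] and gives no argument; you instead supply a complete proof. Your chain is correct: $aY\le Y'$ and $bY'\le Y$ give $Y-Y'\le(1-a)Y$ and $Y'-Y\le(1-b)Y'$; compressing by the range projections of the positive and negative parts of the Jordan decomposition $Y-Y'=P-N$ kills the cross terms (since $PN=0$ forces orthogonal ranges), and tracing the resulting $P\le(1-a)Q_+YQ_+$ and $N\le(1-b)Q_-Y'Q_-$ with $\tr Q_\pm\,\cdot\,Q_\pm\le\tr(\cdot)$ yields $\tfrac12\delta\le 1-\max\{a,b\}$; then $\max\{a,b\}\ge\sqrt{ab}=\sqrt r$ (valid since $a,b\in[0,1]$, which follows by taking traces in $aY\le Y'$ and $bY'\le Y$) and the algebraic identity $(1-r)/(1+r)-(1-\sqrt r)=\sqrt r(1-\sqrt r)^2/(1+r)\ge 0$ close the argument. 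So the proposal is correct and, relative to the paper, additionally self-contains the lower bound; you correctly identified that the crude step $\max\{a,b\}\ge ab$ would lose a factor of $2$, and the passage through $\sqrt r$ is exactly what recovers the stated constant.
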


\begin{proof}
    Let $Y,Y'$ be as in the statement of the lemma. The first inequality is established as Lemma~3.8 in~\cite{MS22}, so we need only prove the second. Let $\xi=\|Y-Y'\|_1$ and $\eta=1/\min \operatorname{sp}(Y')$. We aim to show $\dd(Y,Y') \le \xi\eta$. The result is trivial if $\xi\eta\ge 1$ (since $\dd(Y,Y')\le 1$ in any case); so we may assume $\xi \eta < 1$.  Since $-\xi\one \leq Y-Y' \leq \xi\one$ and $\eta Y' \ge \one $, we have
    $$
        (1-\xi \eta) Y' \leq Y \le (1+\xi \eta) Y'.
    $$
    Thus $m(Y,Y')\geq \frac 1{1+\xi\eta}$, $m(Y',Y) \geq 1-\xi\eta>0$, and
    \[
        \dd(Y,Y')
           \leq \frac{1 - \frac{1-\xi\eta}{1+\xi\eta}}{1 + \frac{1-\xi\eta}{1+\xi\eta}} = \xi \eta.
          \qedhere 
    \]
\end{proof}

In the proof of Theorem~1 in~\cite{MS22}, the notion of \emph{contraction coefficient} of the projective action of a map with respect to this metric plays an important role. This is defined as
\begin{align*}
    c(\Phi) &:= \operatorname{diam}_{\dd}(\Phi\cdot\mathbb{S}) \\
    &\phantom{:}= \sup\left\{\dd(\Phi\cdot X, \Phi\cdot Y): X,Y \in \mathbb{S} \right\}.
\end{align*}
More precisely, we will be interested in 
\begin{equation}\label{eq:kappa}
    \log \kappa(\alpha)  := \inf_{N\in\nn} \frac{1}{N} \int \log c(\Phi_{N}^{(\alpha)}) \dd\pp.
\end{equation}
By Lemmas 3.12--3.14 in~\cite{MS22}, for each $\alpha\in\rr$, the number $\kappa(\alpha)$ lies in the interval~$[0,1)$ and there is a set of full $\pp$-measure on which
\begin{equation}\label{eq:ctokappa}
    \lim_{N\to \infty} c(\Phi_{N;\omega}^{(\alpha)})^{1/N} = \lim_{N\to \infty} c(\Phi_{-N;\omega}^{(\alpha)})^{1/N} = \kappa(\alpha).
\end{equation}
Furthermore, for each $Y\in \mathbb{S}$ and $N\ge 0$, 
\begin{equation}\label{eq:fundbound}
    \dd(\Phi_{-N;\omega}^{(\alpha)}\cdot Y, Z_\omega^{(\alpha)}) \le c(\Phi_{-N;\omega}^{(\alpha)}) \quad \text{and} \quad \dd((\Phi_{N;\omega}^{(\alpha)})^*\cdot Y, \Zp_\omega^{(\alpha)}) \le c(\Phi_{N;\omega}^{(\alpha)}).
\end{equation}

\begin{lemma}
\label{lem:simul-pos-def}
    If \textnormal{\ref{A2}} and~\textnormal{\ref{A3}} hold, then there exists a full $\pp$-measure set $\Omega_0$ with the following property: for all $\alpha \in \rr$ and $\omega \in \Omega_0$, there exists $N$ such that $\Phi_{\pm n,\omega}^{(\alpha)}$ and $(\Phi_{\pm n,\omega}^{(\alpha)})^*$ are positivity improving for all~$n \geq N$.
\end{lemma}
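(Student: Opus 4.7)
The plan is to exploit the observation that the positivity-improving property of $\Phi_{N,\omega}^{(\alpha)}$ does not depend on $\alpha$. Indeed, expanding over measurement sequences gives
\[
    \Phi_{N,\omega}^{(\alpha)}(X) = \sum_{(a_1, \dotsc, a_N) \in \cA^N} \Exp{-\alpha \sum_{j=1}^N f_{\theta^{j-1}\omega}(a_j)} (\psi_{\theta^{N-1}\omega, a_N} \circ \dotsb \circ \psi_{\omega, a_1})(X),
\]
a strictly-positive-coefficient combination of a fixed family of completely positive maps applied to $X$, so the sum is positive definite if and only if the unweighted sum $\Phi_{N,\omega}^{(0)}(X)$ is, and the same reasoning applies to the adjoint. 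The lemma therefore reduces to constructing a full-measure set $\Omega_0$ on which the four CPTP products $\Phi_{\pm n,\omega}$ and $(\Phi_{\pm n,\omega})^*$ are positivity improving for all $n$ past an $\omega$-dependent threshold, working entirely at $\alpha = 0$.

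For this reduced problem I would invoke the following elementary criterion for each of the four products: a positive linear map $\psi$ on $\cB(\cc^d)$ is positivity improving as soon as $c(\psi) < 1$ \emph{and} there exists some $Y_0 \in \mathbb{S}$ with $\psi(Y_0) \in \cC(\mathbb{S}^\circ)$, because $c(\psi) < 1$ forces every $\psi \cdot Y$ to share the (full) support of $\psi \cdot Y_0$. The contraction input comes from~\eqref{eq:ctokappa} at $\alpha = 0$: on a full-measure set, both $c(\Phi_{n,\omega})$ and $c(\Phi_{-n,\omega})$ decay exponentially and are thus $<1$ past some threshold. For the adjoints, the triangle inequality for $\dd$ applied to~\eqref{eq:fundbound} yields $c((\Phi_{n,\omega})^*) \leq 2 c(\Phi_{n,\omega})$, and the analogous bound $c((\Phi_{-n,\omega})^*) \leq 2 c(\Phi_{-n,\omega})$ follows after passing to the shifted point $\theta^{-n+1}\omega$ via the identity $(\Phi_{-n,\omega})^* = (\Phi_{n, \theta^{-n+1}\omega})^*$.

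The $Y_0$'s come from iterating the cocycle relations~\eqref{eq:cocycle} at $\alpha = 0$:
\begin{align*}
    \Phi_{n,\omega}(Z_{\theta^{-1}\omega}^{(0)}) &\propto Z_{\theta^{n-1}\omega}^{(0)}, &
    \Phi_{-n,\omega}(Z_{\theta^{-n}\omega}^{(0)}) &\propto Z_{\omega}^{(0)}, \\
    (\Phi_{n,\omega})^*(\Zp_{\theta^n\omega}^{(0)}) &\propto \Zp_{\omega}^{(0)}, &
    (\Phi_{-n,\omega})^*(\Zp_{\theta\omega}^{(0)}) &\propto \Zp_{\theta^{-n+1}\omega}^{(0)},
\end{align*}
with strictly positive proportionality constants; each right-hand side sits in $\cC(\mathbb{S}^\circ)$ since $Z^{(0)}$ and $\Zp^{(0)}$ are $\mathbb{S}^\circ$-valued almost surely. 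I would then take $\Omega_0$ to be the intersection over $k \in \zz$ of the $\theta^k$-shifts of the full-measure sets on which~\eqref{eq:ctokappa},~\eqref{eq:fundbound} and~\eqref{eq:cocycle} all hold at $\alpha = 0$; on it the criterion delivers, at every $\omega$, simultaneous positivity improvement of all four products for $n$ beyond a common $\omega$-dependent threshold.

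The step I anticipate as the main technical hurdle is a bookkeeping subtlety that the above strategy is designed precisely to avoid: obtaining positivity improvement of $\Phi_{n,\omega}$ directly from the fundamental bound~\eqref{eq:fundbound} would require one to track the ``moving target'' $Z_{\theta^{n-1}\omega}^{(\alpha)}$, and it is not clear a priori that its smallest eigenvalue cannot shrink along the orbit faster than $c(\Phi_n)$ decays. The ``$c(\psi)<1$ plus some positive-definite image'' formulation sidesteps this issue entirely, since it only requires the \emph{existence} of a positive-definite image and not any quantitative lower bound on the target.
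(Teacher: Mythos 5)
Your proof is correct but takes a genuinely different route from the paper's own. Both arguments start from the same reduction: the positivity-improving property of $\Phi_{\pm n,\omega}^{(\alpha)}$ and its adjoint is $\alpha$-independent (the paper invokes \eqref{eq:basic-F-bound}; you expand $\Phi_{n,\omega}^{(\alpha)}$ as a strictly-positive combination of the $\alpha$-independent maps $\psi_{\theta^{N-1}\omega,a_N}\circ\dotsb\circ\psi_{\omega,a_1}$\,---\,these are two phrasings of the same observation). At $\alpha=0$ the paths diverge. The paper's sketch re-derives the result from first principles: by ergodicity and \ref{A2}, almost every orbit contains infinitely many length-$N_0$ blocks whose product is positivity improving, and elementary composition facts (positivity improving is preserved under post-composition by a positive map, and under pre-composition by a positive map with $\ker\psi^*\cap\mathbb{S}=\emptyset$, for which Lemma~\ref{lem:irred-ker-star} is used) then propagate this to all sufficiently long products and their adjoints; this is the argument of~\cite[\S 3.3]{MS22}. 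Your route instead leans on the downstream contraction machinery already cited as standing facts\,---\,\eqref{eq:ctokappa}, \eqref{eq:fundbound}, \eqref{eq:cocycle} and the $\mathbb{S}^\circ$-valuedness of $Z^{(0)}$, $\Zp^{(0)}$\,---\,plus the tidy criterion that $c(\psi)<1$ together with a single positive-definite image forces $\psi$ to be positivity improving. Both are sound. What each buys: the paper's approach is lighter on assumed machinery and is the argument that actually underlies~\cite{MS22}; yours is shorter given the standing facts but is somewhat a posteriori, since in~\cite{MS22} those very standing facts (that $\kappa(0)<1$ and that $Z^{(0)}_\omega\in\mathbb{S}^\circ$) are derived by first establishing the positivity-improving property you are after, so your chain of reasoning works within the logical structure of this paper but would not serve as an independent proof of the~\cite{MS22} inputs. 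One small thing worth making explicit: for $c(\Phi_{\pm n,\omega}^*)$ to be defined at all, one needs $\ker\Phi_{\pm n,\omega}^*\cap\mathbb{S}=\emptyset$, which follows by iterating Lemma~\ref{lem:irred-ker-star} across the composition\,---\,you rely on this implicitly when applying the criterion to the adjoints.
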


\begin{proof}[Proof sketch]
    The bounds~\eqref{eq:basic-F-bound} in the proof of Lemma~\ref{lem:A-to-alpha} show that the desired positivity-improving properties are independent of~$\alpha$. Hence, the proof boils down to the case $\alpha = 0$, for which one can use ergodicity, Lemma~\ref{lem:irred-ker-star} and the following facts about positivity:
        if $\Phi$ is positivity improving, then $\Phi^*$ is positivity improving;
        if $\Phi$ is positivity improving and $\psi$ is positive, then $\Phi \circ \psi$ is positivity improving;
        and
        if $\Phi$ is positivity improving and $\psi$ is positive with $\ker \psi^* \cap \mathbb{S} = \emptyset$, then $\psi \circ \Phi$ is positivity improving.
    We refer the reader to~\cite[\S{3.3}]{MS22} for more details.
\end{proof}

\begin{lemma}\label{lem:lyapunovlemma}
    If~\textnormal{\ref{A2}} and~\textnormal{\ref{A3}} hold, then there exists a set $\Omega_0$ of full $\pp$-measure with the following property: for all $\omega \in \Omega_0$, the convergences~\eqref{eq:lambda-alpha} and~\eqref{eq:norm-vs-application-to-rho} hold simultaneously for all~$\alpha\in\rr$.
\end{lemma}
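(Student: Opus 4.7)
The plan is to bootstrap from the fixed-$\alpha$ almost sure statements (supplied by \cite{MS22,PS23}) to a simultaneous statement for all $\alpha\in\rr$ via countability plus a uniform Lipschitz-like control of the dependence on $\alpha$. Iterating the elementary bound \eqref{eq:basic-F-bound} along the orbit yields, for any $\alpha,\alpha'\in\rr$, $n\geq 1$, and $X\in\mathcal{C}(\mathbb{S})$,
\begin{equation*}
    \Exp{-|\alpha-\alpha'|\, S_n F(\omega)}\, \Phi_{n,\omega}^{(\alpha')}(X) \leq \Phi_{n,\omega}^{(\alpha)}(X) \leq \Exp{|\alpha-\alpha'|\, S_n F(\omega)}\, \Phi_{n,\omega}^{(\alpha')}(X),
\end{equation*}
where $S_n F(\omega) := \sum_{j=0}^{n-1} F_{\theta^j\omega}$. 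Taking either the operator norm $\|\,\cdot\,\|\opone$ or the trace against any $\rho\in\mathbb{S}$ preserves the monotonicity, so applying $\tfrac{1}{n}\log$ produces the uniform Lipschitz bound
\begin{equation*}
    \left|\tfrac{1}{n}\log\|\Phi^{(\alpha)}_{n,\omega}\|\opone - \tfrac{1}{n}\log\|\Phi^{(\alpha')}_{n,\omega}\|\opone\right| \leq |\alpha-\alpha'|\cdot\tfrac{1}{n} S_n F(\omega),
\end{equation*}
and likewise with $\tr\Phi^{(\alpha)}_{n,\omega}(\rho)$ in place of the norm.

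Next, I pass to a countable dense set of parameters. For each $\alpha\in\qq$, the results cited below Lemma~\ref{lem:A-to-alpha} provide a full $\pp$-measure set $\Omega^{(\alpha)}$ on which both \eqref{eq:lambda-alpha} and \eqref{eq:norm-vs-application-to-rho} (the latter simultaneously for every $\rho\in\mathbb{S}$) hold. Let $\Omega_1 := \bigcap_{\alpha\in\qq} \Omega^{(\alpha)}$, still of full measure by countability. Assumption~\textnormal{\ref{A3}} forces $F\in L^1(\Omega,\dd\pp)$ (since $0\leq F\leq \Exp{F}$), so Birkhoff's ergodic theorem provides a full measure set $\Omega_2$ on which $\tfrac{1}{n} S_n F(\omega) \to \bar F := \int F\,\dd\pp<\infty$. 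Set $\Omega_0 := \Omega_1\cap\Omega_2$.

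Finally, for $\omega\in\Omega_0$ and $\alpha,\alpha'\in\qq$, passing to the limit $n\to\infty$ in the Lipschitz estimate above transfers to the Lyapunov exponents themselves: $|\lambda(\alpha)-\lambda(\alpha')|\leq |\alpha-\alpha'|\bar F$. Hence $\lambda|_\qq$ admits a unique Lipschitz extension to $\rr$, which we continue to denote by $\lambda$. Now fix any $\omega\in\Omega_0$ and any $\alpha\in\rr$; for each $\alpha'\in\qq$ the Lipschitz estimate at fixed $\omega$ gives
\begin{equation*}
    \lambda(\alpha') - |\alpha-\alpha'|\bar F \leq \liminf_{n\to\infty}\tfrac{1}{n}\log\|\Phi^{(\alpha)}_{n,\omega}\|\opone \leq \limsup_{n\to\infty}\tfrac{1}{n}\log\|\Phi^{(\alpha)}_{n,\omega}\|\opone \leq \lambda(\alpha')+|\alpha-\alpha'|\bar F,
\end{equation*}
and sending $\alpha'\to\alpha$ through $\qq$ yields the existence of the limit and its equality with $\lambda(\alpha)$, proving \eqref{eq:lambda-alpha} at $\alpha$. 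The identical argument with $\tr\Phi^{(\alpha)}_{n,\omega}(\rho)$ in place of $\|\Phi^{(\alpha)}_{n,\omega}\|\opone$ establishes \eqref{eq:norm-vs-application-to-rho} simultaneously for every $\alpha\in\rr$ and every $\rho\in\mathbb{S}$.

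The only obstacle here is the uncountability of $\rr$: a naive intersection of full-measure sets indexed by $\alpha$ is meaningless. This is resolved by noticing that, thanks to~\textnormal{\ref{A3}}, the dependence on $\alpha$ is Lipschitz with slope controlled by a Birkhoff average that converges on a single full-measure set, so the continuity reduction to rationals is tight enough to pin down the limit everywhere.
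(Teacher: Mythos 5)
Your proof is correct, but it takes a genuinely different route from the paper's. The paper observes that for each fixed $N$ and $\rho$, the finite-volume quantities $\tfrac 1N\log\tr[\Phi_{N,\omega}^{(\alpha)}(\rho)]$ and $\tfrac 1N\log\|\Phi_{N,\omega}^{(\alpha)}\|\opone$ are \emph{convex} in $\alpha$ (the first being a rescaled cumulant-generating function, the second a supremum of such by the Russo--Dye theorem), and then invokes the soft fact\,---\,isolated as Lemma~\ref{lem:convexlemma}\,---\,that a sequence of random convex functions converging almost surely pointwise on a dense set to a deterministic convex limit converges everywhere, simultaneously, on a single full-measure set. You instead extract a quantitative \emph{Lipschitz} modulus: iterating~\eqref{eq:basic-F-bound} yields $|\tfrac 1n\log\|\Phi_{n,\omega}^{(\alpha)}\|\opone-\tfrac 1n\log\|\Phi_{n,\omega}^{(\alpha')}\|\opone|\le|\alpha-\alpha'|\tfrac 1n S_nF(\omega)$, which on a Birkhoff set has slope converging to $\bar F=\int F\,\dd\pp<\infty$ (finite since \textnormal{\ref{A3}} gives $F\le\Exp{F}\in L^1$), and then the rationals density argument closes. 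Both arguments reduce to a countable dense set and a continuity principle; the paper's convexity route is slightly softer in that it uses no explicit modulus (only that the limit is convex, hence continuous), whereas your Lipschitz route is more quantitative and gives the rate of continuity of $\lambda$ for free. One small remark, which applies to the paper as well: the fixed-$\alpha$ identity~\eqref{eq:norm-vs-application-to-rho} that both proofs start from is derived in the text assuming~\textnormal{\ref{A1}} (via the cited Theorem~1 of~\cite{PS23}), so strictly speaking the hypotheses of the lemma should include~\textnormal{\ref{A1}} or the derivation should note it holds there; you inherit this, but it is not a defect of your argument specifically.
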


\begin{proof}
    Let 
    \begin{equation}
        G_{N,\omega}(\alpha)  := \frac{1}{N} \log \left \| \Phi_{N,\omega}^{(\alpha)} \right \|_{\mathrm{op},1}
    \end{equation}
    and 
    \begin{equation}
        G'_{N,\rho,\omega}(\alpha) := \frac 1N \log \tr \left[ \Phi_{N,\omega}^{(\alpha)} (\rho)\right].
    \end{equation}
    We have already established that, for every fixed~$\alpha$, we have $G_N(\alpha) \to \lambda(\alpha)$ and $G'_{N,\rho}(\alpha) \to \lambda(\alpha)$ almost surely as $N\to\infty$. Hence, in view of Lemma~\ref{lem:convexlemma} below, it suffices to establish that, for every fixed $N$ and~$\rho$, the functions $\alpha \mapsto G_{N,\omega}(\alpha)$ and $\alpha \mapsto G'_{N,\rho,\omega}(\alpha)$ are convex. 
    
    Because $G_{N,\omega}(\alpha) = \sup_{\rho \in \mathbb{S}} G'_{N,\rho,\omega}(\alpha)$ by the Russo--Dye theorem and monotonicity of the logarithm, the problem further reduces to showing that $G'_{N,\rho,\omega}$ is convex. 
    But \eqref{eq:mgf-def}--\eqref{eq:mgf-as-tr} show that $G'_{N,\rho,\omega}$ is 
    a multiple of a cumulant-generating function and is therefore convex\,---\,this is a well-known consequence of H\"older's inequality.
\end{proof}

\begin{lemma}
\label{lem:simul-alpha}
    If {\textnormal{\ref{A2}}} and~{\textnormal{\ref{A3}}} hold, then there exists a set $\Omega_0$ of full $\pp$-measure with the following property: for all $\omega\in \Omega_0$ and all $\alpha \in \rr$, 
    \begin{equation}
    \label{eq:cbelowkappa}
        \limsup_{N\to \infty} c(\Phi_{N;\omega}^{(\alpha)})^{1/N} \leq \kappa(\alpha) \quad \text{and} \quad  \limsup_{N\to \infty} c(\Phi_{-N;\omega}^{(\alpha)})^{1/N} \leq \kappa(\alpha) .
    \end{equation}
\end{lemma}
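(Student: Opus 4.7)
My plan is to upgrade the pointwise-in-$\alpha$ subadditive structure underlying \eqref{eq:ctokappa} to one that is uniform over small intervals of the deformation parameter, so that a single application of Kingman's subadditive ergodic theorem handles an entire interval at a time, and then to take a countable basis of such intervals.

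For a closed bounded interval $I \subseteq \rr$, I would define the uniform contraction coefficients $\bar c^{(I)}_{\pm N}(\omega) := \sup_{\alpha \in I} c\bigl(\Phi^{(\alpha)}_{\pm N,\omega}\bigr)$. Measurability of $\omega \mapsto \bar c^{(I)}_{\pm N}(\omega)$ follows from continuity of $\alpha \mapsto c(\Phi^{(\alpha)}_{\pm N,\omega})$ on the positivity-improving region supplied by Lemma~\ref{lem:simul-pos-def} combined with compactness of $I$. The sub-multiplicative relation $c\bigl(\Phi^{(\alpha)}_{N+M,\omega}\bigr) \leq c\bigl(\Phi^{(\alpha)}_{N,\theta^M\omega}\bigr) \cdot c\bigl(\Phi^{(\alpha)}_{M,\omega}\bigr)$ provided by Birkhoff's contraction principle in the framework of~\cite{MS22} survives taking the supremum over $\alpha \in I$, so $\log \bar c^{(I)}_{\pm N}$ is subadditive with respect to $\theta$. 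Kingman's theorem then yields, for each $I$, a full-measure set $\Omega^{(I)}$ on which
\[
    \tfrac{1}{N} \log \bar c^{(I)}_{\pm N}(\omega) \to \bar\kappa(I) := \inf_N \tfrac{1}{N} \int \log \bar c^{(I)}_N \dd\pp.
\]
Fixing a countable basis $\{I_k\}_{k \in \nn}$ of closed intervals of $\rr$, the intersection $\Omega_0 := \bigcap_k \Omega^{(I_k)}$ remains of full measure; I further intersect with the set from Lemma~\ref{lem:simul-pos-def}.

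A straightforward interchange argument then gives $\lim_{I \searrow \{\alpha\}} \bar\kappa(I) = \log\kappa(\alpha)$: the direction $\bar\kappa(I) \geq \log\kappa(\alpha)$ is monotonic; for the reverse, given $\epsilon > 0$, pick $N_0$ with $\tfrac{1}{N_0}\int \log c(\Phi^{(\alpha)}_{N_0}) \dd\pp \leq \log\kappa(\alpha) + \epsilon$, and apply monotone convergence to the integrand $\sup_{\alpha' \in I}\log c(\Phi^{(\alpha')}_{N_0,\omega}) \searrow \log c(\Phi^{(\alpha)}_{N_0,\omega})$ as $I \searrow \{\alpha\}$ to conclude $\bar\kappa(I) \leq \tfrac{1}{N_0}\int \log \bar c^{(I)}_{N_0} \dd\pp \leq \log\kappa(\alpha) + 2\epsilon$ for $I$ small enough around $\alpha$. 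With this in hand, on $\Omega_0$, for arbitrary $\alpha \in \rr$ and $\epsilon > 0$, picking $I_k \ni \alpha$ with $\bar\kappa(I_k) \leq \log\kappa(\alpha) + \epsilon$ and using $c(\Phi^{(\alpha)}_{\pm N,\omega}) \leq \bar c^{(I_k)}_{\pm N}(\omega)$ gives
\[
    \limsup_{N \to \infty} \tfrac{1}{N} \log c\bigl(\Phi^{(\alpha)}_{\pm N,\omega}\bigr) \leq \bar\kappa(I_k) \leq \log\kappa(\alpha) + \epsilon,
\]
and $\epsilon \to 0$ closes both the forward and backward cases.

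The main technical obstacle is establishing the submultiplicativity of $c$ (or, if need be, a convenient surrogate controlling $c$ from above) in the form used above: this is not transparent from the definition $c = \tanh(D/2)$ alone and must be inferred from the Birkhoff contraction principle via the machinery of~\cite{MS22}. Once that ingredient and Lemma~\ref{lem:simul-pos-def} are in hand, the remainder of the argument is routine measure theory combined with one application of Kingman's theorem per interval.
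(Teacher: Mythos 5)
Your proposal takes a genuinely different route from the paper's and is, in outline, workable. Whereas you apply Kingman's subadditive ergodic theorem once per closed interval $I$ in a countable basis (to $\log\bar c^{(I)}_{\pm N}$ with $\bar c^{(I)}_{\pm N} := \sup_{\alpha\in I}c(\Phi^{(\alpha)}_{\pm N})$) and then recover the pointwise-in-$\alpha$ bound by monotone convergence as $I\searrow\{\alpha\}$, the paper instead isolates a bespoke abstract lemma (Lemma~\ref{lem:appendixlemma}) whose proof applies Birkhoff's theorem directly to the truncations $\log\max\{t,c^{(\beta)}_k\}$ at rational $\beta,t,\delta$ and then sends $t,\delta\to 0$. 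Your ``Kingman-on-the-supremum'' packaging is arguably more conceptual, and the $g_1^+\in L^1$ hypothesis for Kingman is trivially satisfied because $c\leq 1$ forces $\log\bar c^{(I)}_N\leq 0$; the paper's route never invokes Kingman at all beyond the deterministic subadditivity underlying~\eqref{eq:Fekete}.

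There is, however, a misattribution that conceals the main technical ingredient. You invoke Lemma~\ref{lem:simul-pos-def} as the source of continuity of $\alpha\mapsto c(\Phi^{(\alpha)}_{N,\omega})$, but that lemma only guarantees positivity improving of $\Phi^{(\alpha)}_{\pm n,\omega}$ for $n$ large; it says nothing about how $c$ varies with the deformation parameter, and upper semicontinuity of the diameter $c$ on the positivity-improving region is not automatic (it is a supremum, hence \emph{lower} semicontinuous a priori). The continuity you need---both for measurability of $\bar c^{(I)}_N$ via a countable dense subset and for the pointwise decrease $\sup_{\alpha'\in I}\log c(\Phi^{(\alpha')}_{N_0,\omega})\searrow\log c(\Phi^{(\alpha)}_{N_0,\omega})$ as $I\searrow\{\alpha\}$ feeding the monotone convergence theorem---is exactly the content of Lemma~\ref{lem:equiconlemma}, which establishes the almost-sure Lipschitz bound $\bigl|c(\Phi^{(\alpha)}_{n,\omega})-c(\Phi^{(\beta)}_{n,\omega})\bigr|\leq 2|\beta-\alpha|\sum_{m=0}^{n-1}F_{\theta^m\omega}$ from Assumption~\ref{A3} and the explicit exponential form of the $\alpha$-deformation. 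That bound is the genuinely non-routine step here; the submultiplicativity of $c$, which you flag as the ``main technical obstacle,'' is already provided by~\cite[\S{3.2}]{MS22}. With Lemma~\ref{lem:equiconlemma} cited in place of Lemma~\ref{lem:simul-pos-def}, your argument goes through.
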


\begin{proof}
    For every~$\alpha \in \rr$ and $\omega \in \Omega$, the contraction coefficients are submultiplicative by~\cite[\S{3.2}]{MS22} in the sense that
    \[
        c(\Phi_{N+M;\omega}^{(\alpha)}) \leq c(\Phi_{M;\theta^N\omega}^{(\alpha)}) c(\Phi_{N;\omega}^{(\alpha)}).
    \]
    Therefore, the inequalities in~\eqref{eq:cbelowkappa} follow from the continuity bound of Lemma~\ref{lem:equiconlemma} below and an abstract convergence result for continuous submultiplicative families stated and proved in Appendix~\ref{app:together} (Lemma~\ref{lem:appendixlemma}).
\end{proof}

\begin{lemma}\label{lem:equiconlemma}
    There exists a set $\Omega_0$ of full $\pp$-measure such that
    \begin{equation}
    \label{eq:equiconlemma}
        \left| c(\Phi_{n,\omega}^{(\alpha)})-c(\Phi_{n,\omega}^{(\beta)}) \right| \leq 2 |\beta-\alpha| \sum_{m=0}^{n-1} F_{\theta^m\omega}
    \end{equation}
    for all $\omega\in\Omega_0$, $\alpha,\beta\in \rr$ and $n\in \nn$.
\end{lemma}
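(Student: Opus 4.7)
The plan is to compare the deformed channels $\varphi_\omega^{(\alpha)}$ and $\varphi_\omega^{(\beta)}$ as positive maps at each step, propagate that comparison through the $n$-fold composition, and then use the scale-invariance and the triangle inequality for $\dd$ to transfer the estimate from individual orbits of $\mathbb{S}$ to the contraction coefficient.

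First, I would establish a pointwise operator sandwich. Since
\[
\varphi^{(\beta)}_{\omega}(X) = \sum_{a \in \cA} \Exp{(\alpha-\beta)f_{\omega}(a)}\Exp{-\alpha f_{\omega}(a)} \psi_{\omega,a}(X)
\]
with $|(\alpha-\beta)f_{\omega}(a)| \leq |\beta-\alpha| F_{\omega}$ and each $\psi_{\omega,a}$ positive,
\[
\Exp{-|\beta-\alpha|F_\omega}\, \varphi^{(\alpha)}_{\omega}(X) \leq \varphi^{(\beta)}_{\omega}(X) \leq \Exp{|\beta-\alpha|F_\omega}\, \varphi^{(\alpha)}_{\omega}(X)
\]
on $\mathcal{C}(\mathbb{S})$. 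Iterating this layer by layer, using positivity of each $\varphi^{(\beta)}_{\theta^m\omega}$ to propagate the inequalities, yields
\[
\Exp{-|\beta-\alpha|S_{n,\omega}}\, \Phi^{(\alpha)}_{n,\omega}(X) \leq \Phi^{(\beta)}_{n,\omega}(X) \leq \Exp{|\beta-\alpha|S_{n,\omega}}\, \Phi^{(\alpha)}_{n,\omega}(X),
\]
where $S_{n,\omega} := \sum_{m=0}^{n-1} F_{\theta^m\omega}$. Assumption~\textnormal{\ref{A3}} forces $F < \infty$ $\pp$-almost surely, so by measure preservation $\Omega_0 := \bigcap_{m \geq 0} \{F \circ \theta^m < \infty\}$ has full $\pp$-measure, and $S_{n,\omega}$ is finite for every $n$ on $\Omega_0$.

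Next, the sandwich gives $m(\Phi^{(\alpha)}_{n,\omega}(X), \Phi^{(\beta)}_{n,\omega}(X)) \geq \Exp{-|\beta-\alpha|S_{n,\omega}}$ together with the analogous bound after swapping $\alpha$ and $\beta$. A direct check shows that $m(X,Y)\,m(Y,X)$ is invariant under positive rescalings of its arguments, so $\dd$ descends to the projective quotient and
\[
\dd(\Phi^{(\alpha)}_{n,\omega}\cdot X, \Phi^{(\beta)}_{n,\omega}\cdot X) = \dd(\Phi^{(\alpha)}_{n,\omega}(X), \Phi^{(\beta)}_{n,\omega}(X)) \leq \tanh(|\beta-\alpha|S_{n,\omega}) \leq |\beta-\alpha|S_{n,\omega}
\]
for every $X \in \mathbb{S}$, $\omega \in \Omega_0$, and $\alpha,\beta \in \rr$.

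Finally, I would close using the triangle inequality for $\dd$. Writing $d_H(X,Y) := -\log m(X,Y)m(Y,X)$ for the Hilbert projective pseudometric, the identity $\dd = \tanh(d_H/2)$ together with the fact that $\tanh(\,\cdot\,/2)$ is concave, non-decreasing, and vanishes at the origin imply that $\dd$ is a genuine (bounded) metric on $\mathbb{S}$. Applying the triangle inequality through the intermediate points $\Phi^{(\beta)}_{n,\omega}\cdot X$ and $\Phi^{(\beta)}_{n,\omega}\cdot Y$, and bounding the two flanking terms by $|\beta-\alpha|S_{n,\omega}$ from the previous step, yields
\[
\dd(\Phi^{(\alpha)}_{n,\omega}\cdot X, \Phi^{(\alpha)}_{n,\omega}\cdot Y) \leq 2|\beta-\alpha|S_{n,\omega} + \dd(\Phi^{(\beta)}_{n,\omega}\cdot X, \Phi^{(\beta)}_{n,\omega}\cdot Y);
\]
taking the supremum over $X,Y \in \mathbb{S}$ and then swapping $\alpha$ with $\beta$ gives the two-sided bound stated in the lemma. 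No step is a serious obstacle; the main point to verify carefully is the scale-invariance of $\dd$, which is what lets the unnormalized operator inequality pass to a clean bound on the projective images without spurious factors.
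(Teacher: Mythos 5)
Your proposal is correct and follows essentially the same route as the paper: derive a two-sided comparison showing $\tilde{m}(\Phi_{n,\omega}^{(\alpha)}\cdot X,\Phi_{n,\omega}^{(\beta)}\cdot X)\geq \Exp{-2|\beta-\alpha|S_{n,\omega}}$, pass to the bound $\dd\leq |\beta-\alpha|S_{n,\omega}$, and close with the triangle inequality for $\dd$ and a supremum over $X,Y\in\mathbb{S}$. The only stylistic difference is that you obtain the $\tilde m$ bound directly from the operator-order sandwich and an explicit check of the scale-invariance of $\dd$, whereas the paper invokes the trace-ratio identity for $\tilde m$ from~\cite[Lemma~3.3]{MS22}; these are the same estimate derived at slightly different levels of reliance on the cited machinery.
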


\begin{proof}
    Let $\tilde{m}(X,Y)=m(X,Y)m(Y,X)$. Then, by \cite[Lemma 3.3]{MS22},
    \begin{equation}
    \label{eq:mtildeidentity}
    \begin{split}
        \tilde{m}(\Phi_n^{(\beta)} \cdot X, \Phi_n^{(\alpha)}\cdot X ) 
        &= \inf_{A,B\in \ss^\circ } \frac{\tr[A \Phi_n^{(\beta)}\cdot X ]  \tr[B \Phi_n^{(\alpha)}\cdot X]}{\tr[A\Phi_n^{(\alpha)}\cdot X] \tr[B\Phi_n^{(\beta)}\cdot X]} \\
        &= \inf_{A,B\in \ss^\circ } \frac{\tr[A \Phi_n^{(\beta)}(X)]  \tr[B \Phi_n^{(\alpha)}(X)]}{\tr[A\Phi_n^{(\alpha)}(X)] \tr[B\Phi_n^{(\beta)}(X)]},
    \end{split}
    \end{equation}
    where due to the homogeneity of the expression, the projective actions can be replaced by linear actions. By the definition of~$F$ in~\ref{A3} and countable additivity of probability, there exists a full $\pp$-measure set~$\Omega_0$ with the property that, for all~$\omega \in \Omega_0$, we have
    \begin{equation}\label{eq:trAphiXbound}
        \Exp{-|\beta-\alpha|\sum_{m=0}^{n-1} F_{\theta^m\omega}} \tr[ A \Phi_{n,\omega}^{(\alpha)}(X)] \leq \tr[A \Phi_{n,\omega}^{(\beta)}(X)] \leq 
        \Exp{|\beta-\alpha|\sum_{m=0}^{n-1} F_{\theta^m\omega}} \tr [ A \Phi_{n,\omega}^{(\alpha)}(X)],
    \end{equation}
    and similarly with $A\mapsto B$, for all $\alpha,\beta\in \rr$.
    Now fix $\alpha,\beta \in \rr$ and $\omega \in \Omega_0$. Combining \eqref{eq:mtildeidentity} and \eqref{eq:trAphiXbound}, we find that
    \begin{equation}
        \tilde{m}(\Phi_{n,\omega}^{(\beta)} \cdot X, \Phi_{n,\omega}^{(\alpha)}\cdot X ) \ge \Exp{-2|\beta-\alpha|\sum_{m=0}^{n-1} F_{\theta^m\omega}}  .
    \end{equation}
    Thus 
    \begin{equation}
        \dd(\Phi_{n,\omega}^{(\beta)} \cdot X, \Phi_{n,\omega}^{(\alpha)} \cdot X) \leq \frac{1 - \Exp{-2|\beta-\alpha|\sum_{m=0}^{n-1} F_{\theta^m\omega}}}{1 + \Exp{-2|\beta-\alpha|\sum_{m=0}^{n-1} F_{\theta^m\omega}}} \leq |\beta-\alpha|\sum_{m=0}^{n-1} F_{\theta^m\omega} .
    \end{equation}
    Hence
    \begin{equation} 
    \begin{split} \dd(\Phi_{n,\omega}^{(\beta)}\cdot X, \Phi_{n,\omega}^{(\beta)}\cdot Y) &\le \dd(\Phi_{n,\omega}^{(\beta)}\cdot X, \Phi_{n,\omega}^{(\alpha)}\cdot X) + \dd(\Phi_{n,\omega}^{(\alpha)}\cdot X, \Phi_{n,\omega}^{(\alpha)}\cdot Y) +  \dd(\Phi_{n,\omega}^{(\beta)}\cdot Y, \Phi_{n,\omega}^{(\alpha)}\cdot Y)   \\
    & \le 2 |\beta-\alpha|\sum_{m=0}^{n-1} F_{\theta^m\omega} + c(\Phi^{(\alpha)}_{n,\omega}) \
    .
    \end{split}
    \end{equation}
    It follows that 
    \begin{equation}
        c(\Phi_n^{(\beta)}) \leq 2 |\beta-\alpha|\sum_{m=0}^{n-1} F_{\theta^m\omega} +  c(\Phi^{(\alpha)}_n).
    \end{equation}
    The bound~\eqref{eq:equiconlemma} follows from this inequality together with that for $\alpha$ and $\beta$ interchanged.
\end{proof}

\begin{lemma}
\label{lem:simul-alpha-ii}
    If {\textnormal{\ref{A2}}} and~{\textnormal{\ref{A3}}} hold, then  there is a $\theta$-invariant set $\Omega_0'$ of full $\pp$-measure with the following property:
    $Z^{(\alpha)}_\omega$ and $\Zp^{(\alpha)}_\omega$ belong to $\mathbb{S}^\circ$ and satisfy the identities~\eqref{eq:cocycle}
    and~\eqref{eq:fundbound} 
    for all $\omega\in\Omega_0$, $\alpha\in\rr$, $Y\in \mathbb{S}$ and $N\in\nn$.
\end{lemma}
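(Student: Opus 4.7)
The plan is to define $Z^{(\alpha)}_\omega$ and $\Zp^{(\alpha)}_\omega$ as explicit limits that, thanks to Lemma~\ref{lem:simul-alpha}, converge simultaneously for every $\alpha \in \rr$ on a common full $\pp$-measure set. Start from the intersection $\Omega_0$ of the full-measure sets supplied by Lemmas~\ref{lem:simul-pos-def} and~\ref{lem:simul-alpha}; the required $\theta$-invariance can then be obtained by passing to $\Omega_0' := \bigcap_{n \in \zz} \theta^n \Omega_0$, which still has full measure. Recall also that $\kappa(\alpha) \in [0,1)$, so $c(\Phi_{\pm N, \omega}^{(\alpha)}) \to 0$ geometrically for every $\alpha \in \rr$ and every $\omega \in \Omega_0$.

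The first step is to observe that the sequence $(\Phi_{-N,\omega}^{(\alpha)} \cdot Y)_{N \geq 0}$ is Cauchy in $(\mathbb{S}, \dd)$, uniformly in $Y \in \mathbb{S}$: for $N' \geq N$, writing $\Phi_{-N',\omega}^{(\alpha)} = \Phi_{-N,\omega}^{(\alpha)} \circ \Psi$ for an appropriate CP map $\Psi$, the image $\Phi_{-N',\omega}^{(\alpha)} \cdot Y$ lies in $\Phi_{-N,\omega}^{(\alpha)} \cdot \mathbb{S}$, so
\[
    \dd\bigl(\Phi_{-N',\omega}^{(\alpha)} \cdot Y, \Phi_{-N,\omega}^{(\alpha)} \cdot Y'\bigr) \leq c\bigl(\Phi_{-N,\omega}^{(\alpha)}\bigr) \longrightarrow 0.
\]
By Lemma~\ref{lem:d-to-tr}, this $\dd$-convergence implies $\|\cdot\|_1$-convergence, so we may define $Z^{(\alpha)}_\omega$ as the common limit (picking, say, $Y = \one/d$). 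Once $Z^{(\alpha)}_\omega$ is defined, the bound in~\eqref{eq:fundbound} follows directly by letting $N' \to \infty$ inside the inequality above and using continuity of $\dd$ in each variable.

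Next I would verify that $Z^{(\alpha)}_\omega \in \mathbb{S}^\circ$. By Lemma~\ref{lem:simul-pos-def}, for $N$ large enough $\Phi_{-N,\omega}^{(\alpha)}$ is positivity improving, so $\Phi_{-N,\omega}^{(\alpha)} \cdot Y \in \mathbb{S}^\circ$; were $Z^{(\alpha)}_\omega$ merely positive semidefinite with a nontrivial kernel, then inspection of the definition of $m$ would give $m\bigl(\Phi_{-N,\omega}^{(\alpha)} \cdot Y, Z^{(\alpha)}_\omega\bigr) = 0$ for all such $N$, forcing $\dd(\Phi_{-N,\omega}^{(\alpha)} \cdot Y, Z^{(\alpha)}_\omega) = 1$, in contradiction with $c(\Phi_{-N,\omega}^{(\alpha)}) \to 0$. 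The cocycle relation $\varphi_{\omega}^{(\alpha)} \cdot Z_{\theta^{-1}\omega}^{(\alpha)} = Z_\omega^{(\alpha)}$ then comes from pushing the limit through $\varphi_\omega^{(\alpha)} \cdot (\,\cdot\,)$: this projective map is continuous near $Z_{\theta^{-1}\omega}^{(\alpha)} \in \mathbb{S}^\circ$ since $\tr[\varphi_\omega^{(\alpha)}(X)] \geq \Exp{-|\alpha| F_\omega} > 0$ for all $X \in \mathbb{S}$, and $\Phi_{-N-1,\omega}^{(\alpha)} = \varphi_\omega^{(\alpha)} \circ \Phi_{-N,\theta^{-1}\omega}^{(\alpha)}$. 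The corresponding construction of $\Zp^{(\alpha)}_\omega$ via $(\Phi_{N,\omega}^{(\alpha)})^*$ is identical, replacing Lemma~\ref{lem:simul-pos-def} on $\Phi$ with its $\Phi^*$ counterpart (also covered by that lemma).

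The main obstacle is the simultaneity in $\alpha$: a naive application of the one-$\alpha$ existence theorem from~\cite{MS22} would yield a null-set depending on $\alpha$, and a straightforward countable-dense extension would still leave one to interpolate between the fixed $\alpha$'s. The present approach sidesteps this by defining $Z^{(\alpha)}_\omega$ pointwise from the contraction coefficients themselves, leveraging Lemma~\ref{lem:simul-alpha} to guarantee the Cauchy property holds for every $\alpha$ on the \emph{same} full-measure set. Once the existence and $\mathbb{S}^\circ$-valuedness are secured uniformly in $\alpha$, the algebraic identities~\eqref{eq:cocycle} and the geometric bound~\eqref{eq:fundbound} are inherited from their pre-limit versions by continuity, without any further reference to the particular value of $\alpha$.
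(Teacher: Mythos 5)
Your proof is correct and is essentially the same argument as the paper's: the paper's own proof simply observes that the almost-sure inputs to Lemma~3.12 of \cite{MS22} are exactly those delivered simultaneously for all~$\alpha$ by Lemmas~\ref{lem:simul-pos-def} and~\ref{lem:simul-alpha}, and then cites that construction line by line; you have reconstructed that construction explicitly. Both you and the paper then pass to the $\theta$-invariant intersection of shifts of the common full-measure set (your choice $\bigcap_{n\in\zz}\theta^n\Omega_0$ is slightly cleaner than the paper's one-sided intersection, and is manifestly invariant under both $\theta$ and $\theta^{-1}$).

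One small point of care you glossed over: $\dd$ is \emph{not} continuous on all of $\mathbb{S}\times\mathbb{S}$, only where one argument lies in $\mathbb{S}^\circ$, so one cannot simply invoke ``continuity of $\dd$ in each variable'' to pass to the limit $N'\to\infty$ in $\dd(\Phi_{-N',\omega}^{(\alpha)}\cdot Y,\,\Phi_{-N,\omega}^{(\alpha)}\cdot Y')\le c(\Phi_{-N,\omega}^{(\alpha)})$ before you know $Z_\omega^{(\alpha)}\in\mathbb{S}^\circ$. The argument still closes, however, because \eqref{eq:fundbound} is trivial whenever $c(\Phi_{-N,\omega}^{(\alpha)})=1$, while if $c(\Phi_{-N,\omega}^{(\alpha)})<1$ then $\Phi_{-N,\omega}^{(\alpha)}$ is positivity improving, hence $\Phi_{-N,\omega}^{(\alpha)}\cdot Y'\in\mathbb{S}^\circ$ and Lemma~\ref{lem:d-to-tr} supplies the continuity needed to take the limit in the first argument; your positive-definiteness argument for $Z_\omega^{(\alpha)}$ then proceeds as written. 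With this ordering made explicit, the circularity concern vanishes and the proof is complete.
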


\begin{proof}
    For fixed $\omega \in \Omega_0$ as in Lemmas~\ref{lem:simul-pos-def} and~\ref{lem:simul-alpha} and $\alpha \in \rr$, the proof of Lemma~3.12 in~\cite{MS22} can be followed line by line to deduce that $Z^{(\alpha)}_\omega$ and $\Zp^{(\alpha)}_\omega$ belong to $\mathbb{S}^\circ$ and satisfy~\eqref{eq:cocycle}
    and~\eqref{eq:fundbound} for all~$Y \in \mathbb{S}$ and~$N\in\nn$. Indeed, the $\pp$-almost sure properties that are used in that proof are those concerned by Lemmas~\ref{lem:simul-pos-def} and~\ref{lem:simul-alpha}.
    Finally, by $\theta$-invariance of $\pp$ and countable additivity, taking
    \[
        \Omega'_0 := \bigcap_{n\in\nn} \theta^{1-n}(\Omega_0),
    \]
    gives a $\theta$-invariant set with the same properties.
\end{proof}

\subsection{Regularity of the limiting objects}

Recall that, because we want to employ the G\"artner--Ellis theorem, we wish to show differentiability of~$\alpha \mapsto \lambda(\alpha)$.
This will be done in several steps. 

\begin{proposition}
\label{prop:continuous-Z}
    Suppose that {\textnormal{\ref{A2}}} and~{\textnormal{\ref{A3}}} hold. Then, for $\pp$-almost all~$\omega$, the function~$\alpha \mapsto Z_\omega^{(\alpha)}$ is continuous on~$\rr$ (in either of the topologies involved in Lemma~\ref{lem:d-to-tr}).
\end{proposition}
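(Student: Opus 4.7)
The plan is to leverage the fundamental approximation bound~\eqref{eq:fundbound}, which gives $\dd(\Phi_{-N;\omega}^{(\alpha)} \cdot Y, Z_\omega^{(\alpha)}) \leq c(\Phi_{-N;\omega}^{(\alpha)})$ for every auxiliary $Y \in \mathbb{S}$, combined with a triangle inequality to reduce continuity of the infinite-time limit $\alpha \mapsto Z_\omega^{(\alpha)}$ to continuity of the finite-time object $\alpha \mapsto \Phi_{-N;\omega}^{(\alpha)} \cdot Y$, which is a composition of $N$ maps that depend analytically on $\alpha$.

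Concretely, fix $\omega$ in the intersection of the full-measure sets from Lemmas~\ref{lem:simul-alpha}, \ref{lem:simul-alpha-ii}, and (a $\theta$-invariant version of) Lemma~\ref{lem:equiconlemma}, and fix an arbitrary $Y \in \mathbb{S}$. For $\alpha, \beta \in \rr$ and $N \in \nn$, the triangle inequality gives
\begin{equation*}
    \dd(Z_\omega^{(\alpha)}, Z_\omega^{(\beta)}) \leq c(\Phi_{-N;\omega}^{(\alpha)}) + \dd(\Phi_{-N;\omega}^{(\alpha)}\cdot Y, \Phi_{-N;\omega}^{(\beta)}\cdot Y) + c(\Phi_{-N;\omega}^{(\beta)}).
\end{equation*}
Using the identity $\Phi_{-N;\omega}^{(\gamma)} = \Phi_{N;\theta^{-N+1}\omega}^{(\gamma)}$, the backward-orbit analog of Lemma~\ref{lem:equiconlemma} (obtained by applying its proof at $\theta^{-N+1}\omega$) controls the middle term by $|\beta - \alpha| \sum_{k=-N+1}^{0} F_{\theta^k\omega}$ and also yields the continuity bound $|c(\Phi_{-N;\omega}^{(\alpha)}) - c(\Phi_{-N;\omega}^{(\beta)})| \leq 2|\beta-\alpha|\sum_{k=-N+1}^{0} F_{\theta^k\omega}$.

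Given $\epsilon > 0$, the argument would then proceed in two steps: first, invoke Lemma~\ref{lem:simul-alpha} to pick $N$ large enough (depending on $\omega$ and $\alpha$) that $c(\Phi_{-N;\omega}^{(\alpha)}) < \epsilon/4$; second, with this $N$ now fixed, observe that $\sum_{k=-N+1}^{0} F_{\theta^k\omega}$ is a finite constant (by~\textnormal{\ref{A3}}, $F$ is $\pp$-a.s.\ finite), so that for $|\beta - \alpha|$ small enough both the middle term and the excess $c(\Phi_{-N;\omega}^{(\beta)}) - c(\Phi_{-N;\omega}^{(\alpha)})$ are each at most $\epsilon/4$. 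This yields $\dd(Z_\omega^{(\alpha)}, Z_\omega^{(\beta)}) < \epsilon$, and the analogous statement in trace norm is then immediate via $\|Z_\omega^{(\alpha)} - Z_\omega^{(\beta)}\|_1 \leq 2\,\dd(Z_\omega^{(\alpha)}, Z_\omega^{(\beta)})$ from Lemma~\ref{lem:d-to-tr}, noting that $Z_\omega^{(\beta)} \in \mathbb{S}^\circ$ by Lemma~\ref{lem:simul-alpha-ii}.

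The only mildly delicate point is that the approximation rate $N$ depends on $\alpha$, not just on $\omega$, so the argument must fix $\alpha$ first and then shrink $|\beta - \alpha|$; since the claim is pointwise continuity (not uniform continuity), this ordering causes no issue. The backward analog of Lemma~\ref{lem:equiconlemma} is not spelled out explicitly in the excerpt, but follows from substituting $\theta^{-N+1}\omega$ for $\omega$ in its statement and proof, using that a countable intersection of $\theta$-shifts of its full-measure set is again of full measure and that $\pp$ is $\theta$-invariant.
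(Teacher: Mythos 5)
Your proof is correct and is close in spirit to the paper's, but the decomposition is genuinely different enough to be worth commenting on. The paper writes $Z_\omega^{(\gamma)} = \Phi^{(\gamma)}_{-N}\cdot Z^{(\gamma)}_{-N}$ via the cocycle identity, converts to trace norm using $\dd(Z^{(\alpha)},Z^{(\beta)}) \le \delta^{-1}\|Z^{(\alpha)}-Z^{(\beta)}\|_1$ with $\delta = \min\operatorname{sp} Z^{(\alpha)}$, splits as $\Phi^{(\alpha)}_{-N}\cdot Z^{(\alpha)}_{-N} \to \Phi^{(\alpha)}_{-N}\cdot Z^{(\beta)}_{-N} \to \Phi^{(\beta)}_{-N}\cdot Z^{(\beta)}_{-N}$, bounds the first jump by $2c(\Phi^{(\alpha)}_{-N})/\delta$, and then does a fresh pointwise comparison (decomposing $\Phi^{(\beta)}_{-N}$ into instruments) to show the second jump vanishes as $\beta\to\alpha$. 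You instead work entirely in the projective metric $\dd$, insert an arbitrary fixed $Y\in\mathbb{S}$, use the fundamental approximation bound~\eqref{eq:fundbound} directly to kill both tails at cost $c(\Phi^{(\alpha)}_{-N})+c(\Phi^{(\beta)}_{-N})$, and reuse the intermediate estimate $\dd(\Phi^{(\alpha)}_{-N}\cdot Y,\Phi^{(\beta)}_{-N}\cdot Y) \le |\beta-\alpha|\sum_{k=-N+1}^{0}F_{\theta^k\omega}$ from the proof of Lemma~\ref{lem:equiconlemma} (together with its consequence for $|c(\Phi^{(\alpha)}_{-N})-c(\Phi^{(\beta)}_{-N})|$) for the middle jump. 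This buys you a cleaner, more symmetric argument that avoids the trace-norm detour and does not require a new computation for the finite-time continuity, at the modest cost of having to point out explicitly that the backward-orbit version of Lemma~\ref{lem:equiconlemma} holds on a common full-measure set after intersecting countably many $\theta$-shifts. Your handling of the order of quantifiers (fix $\alpha$, then pick $N$, then shrink $|\beta-\alpha|$) is exactly right, and your application of Lemma~\ref{lem:d-to-tr} to pass from $\dd$-continuity to trace-norm continuity, using $Z^{(\beta)}_\omega\in\mathbb{S}^\circ$ from Lemma~\ref{lem:simul-alpha-ii}, is sound.
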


\begin{proof}
    We will show that $\alpha \mapsto Z^{(\alpha)}_\omega$ is continuous on the set of full $\pp$-measure where the results of Section~\ref{ssec:simult} hold. Throughout the proof, let $\omega$ denote a fixed element of this set. To lighten the notation, we suppress $\omega$ from formulas by replacing subscripts of the form~``$_{\theta^{j-1}\omega}$'' or ``$_{j,\omega}$'' with simply ``$_j$''.
    
    Fix $\alpha \in \rr$ and let $\epsilon >0$.  By Lemmas~\ref{lem:d-to-tr} and~\ref{lem:simul-alpha-ii},
    \[ \dd(Z_0^{(\alpha)},Z_0^{(\beta)}) \leq \frac{1}{\delta} \left \| Z_0^{(\alpha)} - Z_0^{(\beta)} \right \|_1 = \frac{1}{\delta} \left \| \Phi_{-N}^{(\alpha)} \cdot Z_{-N}^{(\alpha)}  - \Phi_{-N}^{(\beta)} \cdot Z_{-N}^{(\beta)}  \right \|_1 , \]
    where  $\delta= \min \sigma(Z_0^{(\alpha)}) >0$ and $N\in \nn$.
    Using the triangle inequality, we find that for all $N\in \nn$ and $\beta \in \rr$,
    \begin{align*}
        \dd(Z_0^{(\alpha)},Z_0^{(\beta)}) &\le \frac{1}{\delta} \left \| \Phi_{-N}^{(\alpha)} \cdot Z_{-N}^{(\alpha)} - \Phi_{-N}^{(\alpha)} \cdot Z_{-N}^{(\beta)} \right \|_1 + \frac{1}{\delta}  \left \| \Phi_{-N}^{(\alpha)} \cdot Z_{-N}^{(\beta)} - \Phi_{-N}^{(\beta)} \cdot Z_{-N}^{(\beta)} \right \|_1 \\
        &\le \frac{2}{\delta} \dd (\Phi_{-N}^{(\alpha)} \cdot Z_{-N}^{(\alpha)}, \Phi_{-N}^{(\alpha)} \cdot Z_{-N}^{(\beta)})  + \frac{1}{\delta}  \left \| \Phi_{-N}^{(\alpha)} \cdot Z_{-N}^{(\beta)} - \Phi_{-N}^{(\beta)} \cdot Z_{-N}^{(\beta)} \right \|_1 \\
        &\le \frac{2}{\delta} c(\Phi_{-N}^{(\alpha)}) + \frac{1}{\delta}  \left \| \Phi_{-N}^{(\alpha)} \cdot Z_{-N}^{(\beta)} - \Phi_{-N}^{(\beta)} \cdot Z_{-N}^{(\beta)} \right \|_1 .
    \end{align*}
    where we have used Lemma~\ref{lem:d-to-tr} again in the second line.  By Lemma~\ref{lem:simul-alpha}, there exists~$N$ such that~$c(\Phi_{-N}^{(\alpha)}) < \tfrac{\delta}{2} \epsilon$, so that 
    \[ \dd(Z_0^{(\alpha)},Z_0^{(\beta)}) < \epsilon +  \frac{1}{\delta} \left \| \Phi_{-N}^{(\alpha)} \cdot Z_{-N}^{(\beta)} - \Phi_{-N}^{(\beta)} \cdot Z_{-N}^{(\beta)} \right \|_1 .  \]
    To complete the proof it suffices to show that
    \begin{equation}\label{eq:completecontinuity}
    \lim_{\beta \to \alpha} \left \| \Phi_{-N}^{(\alpha)} \cdot Z_{-N}^{(\beta)} - \Phi_{-N}^{(\beta)} \cdot Z_{-N}^{(\beta)} \right \|_1 = 0,
    \end{equation}
    as then $\limsup_{\beta\to \alpha} \dd(Z_0^{(\alpha)},Z_0^{(\beta)}) \leq \epsilon$ for every $\epsilon$.

    To prove \eqref{eq:completecontinuity}, note that
    \[ \left\| \Phi_{-N}^{(\alpha)} \cdot Z_{-N}^{(\beta)} - \Phi_{-N}^{(\beta)} \cdot Z_{-N}^{(\beta)} \right\|_1 \leq \sup \left\{\left\|\Phi_{-N}^{(\alpha)} \cdot Z - \Phi_{-N}^{(\beta)} \cdot Z \right\|_1 \ : \ Z \in \mathbb{S} \right\} . \]
    Using \eqref{eq:varphialpha} we find, for each $N$, that 
    \[  \Phi_{-N}^{(\beta)} = \sum_{j=1}^{M} \Exp{- F_j} \Psi_{j} , \]
    where $\sum_{j=1}^{M} \Psi_j = \Phi_{-N}^{(0)}$,  $M = \cal{A}^N$, $\Psi_j$ are CP maps, and $F_j$ are certain random functions (which are finite for the fixed $\omega$ under consideration).
    It follows that 
    \[ \Exp{-C |\beta-\alpha|} \Phi_{-N}^{(\alpha)}\cdot Z \leq \Phi_{-N}^{(\beta)}\cdot Z \leq \Exp{C|\beta-\alpha|} \Phi_{-N}^{(\alpha)}\cdot Z  \]
    with $C= 2 \max_j |F_j|$. 
    Thus 
    \[ \left \| \Phi_{-N}^{(\alpha)} \cdot Z_{-N}^{(\beta)} - \Phi_{-N}^{(\beta)} \cdot Z_{-N}^{(\beta)} \right \|_1 \leq  \sup_{Z\in \mathbb{S}} \left \| \Phi_{-N}^{(\alpha)} \cdot Z \right \|_1 \left ( \Exp{C|\beta-\alpha|} - 1 \right ) , \]
    from which \eqref{eq:completecontinuity} follows.
\end{proof}

\begin{lemma}
\label{lem:Delta-lambda}
    If \ref{A1},~\ref{A2} and~\ref{A3} hold, then
    \begin{align*}
        \lambda(\alpha)-\lambda(\beta) 
            &= \int_\Omega \log 
            \frac{\tr [\Zp_{\theta\omega}^{(\alpha)} \varphi_{\omega}^{(\alpha)} (Z_{\theta^{-1}\omega}^{(\beta)})]}
            {\tr [\Zp_{\theta\omega}^{(\alpha)} \varphi_{\omega}^{(\beta)} (Z_{\theta^{-1}\omega}^{(\beta)})]}
            \dd\pp(\omega)
    \end{align*}
    for all~$\alpha$ and $\beta$ in~$\rr$.
\end{lemma}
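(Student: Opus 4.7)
The plan is to mimic the manipulations used in the proof of Lemma~\ref{lem:Lyap-identities}: use both cocycle relations to introduce the ``cross'' quantities $\tr[\Zp_{\theta\omega}^{(\alpha)} \varphi_{\omega}^{(\alpha)} (Z_{\theta^{-1}\omega}^{(\beta)})]$ and $\tr[\Zp_{\theta\omega}^{(\alpha)} \varphi_{\omega}^{(\beta)} (Z_{\theta^{-1}\omega}^{(\beta)})]$, and then integrate, matching up terms with~\eqref{eq:PS23lyap} and~\eqref{eq:Lyap-not-adjoint} while showing that the leftover shift-coboundary term integrates to zero via Lemma~\ref{lem:shift-cancellation}.

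Concretely, first I would apply the forward cocycle $\varphi_\omega^{(\beta)}\cdot Z_{\theta^{-1}\omega}^{(\beta)} = Z_\omega^{(\beta)}$ to write $\varphi_\omega^{(\beta)}(Z_{\theta^{-1}\omega}^{(\beta)}) = Z_\omega^{(\beta)}\,\tr[\varphi_\omega^{(\beta)}(Z_{\theta^{-1}\omega}^{(\beta)})]$, which yields
\[
    \log \tr [\Zp_{\theta\omega}^{(\alpha)} \varphi_{\omega}^{(\beta)} (Z_{\theta^{-1}\omega}^{(\beta)})] = \log\tr[\varphi_\omega^{(\beta)}(Z_{\theta^{-1}\omega}^{(\beta)})] + \log \tr[\Zp_{\theta\omega}^{(\alpha)} Z_{\omega}^{(\beta)}].
\]
Symmetrically, using the adjoint cocycle $(\varphi_\omega^{(\alpha)})^*\cdot \Zp_{\theta\omega}^{(\alpha)} = \Zp_\omega^{(\alpha)}$ together with the adjoint trace identity $\tr[\Zp_{\theta\omega}^{(\alpha)}\varphi_\omega^{(\alpha)}(Z_{\theta^{-1}\omega}^{(\beta)})] = \tr[Z_{\theta^{-1}\omega}^{(\beta)}(\varphi_\omega^{(\alpha)})^*(\Zp_{\theta\omega}^{(\alpha)})]$, I would obtain
\[
    \log \tr [\Zp_{\theta\omega}^{(\alpha)} \varphi_{\omega}^{(\alpha)} (Z_{\theta^{-1}\omega}^{(\beta)})] = \log \tr[(\varphi_\omega^{(\alpha)})^*(\Zp_{\theta\omega}^{(\alpha)})] + \log \tr[\Zp_{\omega}^{(\alpha)} Z_{\theta^{-1}\omega}^{(\beta)}].
\]
Subtracting these two identities shows that the integrand on the right-hand side of the claim equals
\[
    \log \tr[(\varphi_\omega^{(\alpha)})^*(\Zp_{\theta\omega}^{(\alpha)})] - \log\tr[\varphi_\omega^{(\beta)}(Z_{\theta^{-1}\omega}^{(\beta)})] + \bigl(g(\theta^{-1}\omega) - g(\omega)\bigr),
\]
where $g(\omega) := \log \tr[\Zp_{\theta\omega}^{(\alpha)} Z_\omega^{(\beta)}]$.

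Integrating against $\pp$, the first two terms contribute $\lambda(\alpha)$ and $-\lambda(\beta)$ by~\eqref{eq:PS23lyap} and~\eqref{eq:Lyap-not-adjoint} respectively, so I only need the coboundary term to vanish in integral. Before doing so I would justify that this term lies in $L^1$: by the sandwich~\eqref{eq:basic-F-bound} applied to $\varphi_\omega^{(\alpha)}$ and $\varphi_\omega^{(\beta)}$, the ratio appearing inside the target logarithm is bounded above and below by $\Exp{\pm(|\alpha|+|\beta|)F_\omega}$, so the target integrand is dominated by $(|\alpha|+|\beta|)F_\omega$, which is integrable by~\ref{A3}. Since the two Lyapunov pieces are also in $L^1$ by Lemma~\ref{lem:A-to-alpha}, the shift-coboundary $g\circ\theta^{-1} - g$ is in $L^1$ even though $g$ itself need not be. Lemma~\ref{lem:shift-cancellation} then delivers $\int(g\circ\theta^{-1} - g)\dd\pp = 0$, completing the proof.

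The main obstacle I anticipate is exactly this integrability check for the coboundary: $g(\omega) = \log\tr[\Zp_{\theta\omega}^{(\alpha)} Z_\omega^{(\beta)}]$ involves inner products of random density matrices whose eigenvalues may approach zero, so one cannot expect $g \in L^1$ directly. The key point, borrowed from the proof of Lemma~\ref{lem:Lyap-identities}, is that only the \emph{difference} needs integrability, which the bound~\eqref{eq:basic-F-bound} together with~\ref{A3} supplies, and that Lemma~\ref{lem:shift-cancellation} is precisely tailored to this situation.
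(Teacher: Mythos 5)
Your proof is correct, and it uses the same essential ingredients as the paper: both cocycle relations from~\eqref{eq:cocycle}, the two Lyapunov representations~\eqref{eq:PS23lyap} and~\eqref{eq:Lyap-not-adjoint}, the $L^1$ domination coming from~\ref{A3}, and the shift--coboundary cancellation of Lemma~\ref{lem:shift-cancellation}. The bookkeeping is organized differently: you decompose the target integrand directly into the $\lambda(\alpha)$-integrand minus the $\lambda(\beta)$-integrand plus a coboundary $g\circ\theta^{-1}-g$ with $g(\omega)=\log\tr[\Zp_{\theta\omega}^{(\alpha)}Z_\omega^{(\beta)}]$, whereas the paper starts from the representation~\eqref{eq:Lyap-as-average} and telescopes $\Delta_\omega(\alpha,\beta)$ into the target integrand plus a coboundary $g-g\circ\theta$ with a different, two-term $g$. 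Your route bypasses Eq.~\eqref{eq:Lyap-as-average} at the cost of invoking both~\eqref{eq:PS23lyap} and~\eqref{eq:Lyap-not-adjoint}; the paper needs only~\eqref{eq:Lyap-as-average} (which itself was built from those two). One small presentational point: Lemma~\ref{lem:shift-cancellation} is stated for $g\circ\theta-g$, while you use $g\circ\theta^{-1}-g$; you should note that the statement applies equally with $\theta^{-1}$ in place of $\theta$ (since $\theta^{-1}$ is also measure preserving and ergodic), or equivalently apply the lemma to $h:=g\circ\theta^{-1}$, for which $h\circ\theta-h=g-g\circ\theta^{-1}$.
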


\begin{proof}
    We consider only $\omega$ in a set of full $\pp$-measure where the results of Section~\ref{ssec:simult} hold. Considering the difference between the integrands for~\eqref{eq:Lyap-as-average} in $\alpha$ and in $\beta$, and then adding and subtracting some intermediate terms, we find that
    \[ \lambda(\alpha)-\lambda(\beta)   = \int_\Omega \Delta_\omega(\alpha,\beta) \dd \pp(\omega) ,  \]
    with
    \begin{align*}
        \Delta_\omega(\alpha,\beta)
         &:=\log  \tr [\Zp_{\theta\omega}^{(\alpha)} \varphi_{\omega}^{(\alpha)} (Z_{\theta^{-1}\omega}^{(\alpha)})] 
        - \log  \tr [\Zp_{\theta\omega}^{(\beta)} \varphi_{\omega}^{(\beta)} (Z_{\theta^{-1}\omega}^{(\beta)})]
        - \log \tr [\Zp_{\theta\omega}^{(\alpha)} Z_{\omega}^{(\alpha)}] 
        + \log \tr [\Zp_{\theta\omega}^{(\beta)} Z_{\omega}^{(\beta)}] 
        \\ 
        &\phantom{:}= 
            \log  \tr [\Zp_{\theta\omega}^{(\alpha)} \varphi_{\omega}^{(\alpha)} (Z_{\theta^{-1}\omega}^{(\alpha)})] 
            - \log  \tr [\Zp_{\theta\omega}^{(\alpha)} \varphi_{\omega}^{(\alpha)} (Z_{\theta^{-1}\omega}^{(\beta)})] \\
            &\qquad\qquad + \log  \tr [\Zp_{\theta\omega}^{(\alpha)} \varphi_{\omega}^{(\alpha)} (Z_{\theta^{-1}\omega}^{(\beta)})]
            - \log  \tr [\Zp_{\theta\omega}^{(\alpha)} \varphi_{\omega}^{(\beta)} (Z_{\theta^{-1}\omega}^{(\beta)})] \\
            &\qquad\qquad + \log  \tr [\Zp_{\theta\omega}^{(\alpha)} \varphi_{\omega}^{(\beta)} (Z_{\theta^{-1}\omega}^{(\beta)})]
            - \log  \tr [\Zp_{\theta\omega}^{(\beta)} \varphi_{\omega}^{(\beta)} (Z_{\theta^{-1}\omega}^{(\beta)})] \\
            &\qquad\qquad - \log \tr [\Zp_{\theta\omega}^{(\alpha)} Z_{\omega}^{(\alpha)}]
            + \log \tr [\Zp_{\theta\omega}^{(\alpha)} Z_{\omega}^{(\beta)}] \\
            &\qquad\qquad - \log \tr [\Zp_{\theta\omega}^{(\alpha)} Z_{\omega}^{(\beta)}]
            + \log \tr [\Zp_{\theta\omega}^{(\beta)} Z_{\omega}^{(\beta)}].
    \end{align*}
    By~\eqref{eq:cocycle}, the third and fifth lines in the last right-hand side cancel, so 
    \begin{align*}
        \Delta_\omega(\alpha,\beta)
        &= 
            \log  \tr [\Zp_{\theta\omega}^{(\alpha)} \varphi_{\omega}^{(\alpha)} (Z_{\theta^{-1}\omega}^{(\alpha)})] 
            - \log  \tr [\Zp_{\theta\omega}^{(\alpha)} \varphi_{\omega}^{(\alpha)} (Z_{\theta^{-1}\omega}^{(\beta)})] \\
            &\qquad\qquad + \log  \tr [\Zp_{\theta\omega}^{(\alpha)} \varphi_{\omega}^{(\alpha)} (Z_{\theta^{-1}\omega}^{(\beta)})]
            - \log  \tr [\Zp_{\theta\omega}^{(\alpha)} \varphi_{\omega}^{(\beta)} (Z_{\theta^{-1}\omega}^{(\beta)})] \\
            &\qquad\qquad - \log \tr [\Zp_{\theta\omega}^{(\alpha)} Z_{\omega}^{(\alpha)}]
            + \log \tr [\Zp_{\theta\omega}^{(\alpha)} Z_{\omega}^{(\beta)}] \\
        &= 
            \log  \tr [\Zp_{\omega}^{(\alpha)}  Z_{\theta^{-1}\omega}^{(\alpha)}] 
            - \log  \tr [\Zp_{\omega}^{(\alpha)}  Z_{\theta^{-1}\omega}^{(\beta)}] \\
            &\qquad\qquad + \log  \tr [\Zp_{\theta\omega}^{(\alpha)} \varphi_{\omega}^{(\alpha)} (Z_{\theta^{-1}\omega}^{(\beta)})]
            - \log  \tr [\Zp_{\theta\omega}^{(\alpha)} \varphi_{\omega}^{(\beta)} (Z_{\theta^{-1}\omega}^{(\beta)})] \\
            &\qquad\qquad - \log \tr [\Zp_{\theta\omega}^{(\alpha)} Z_{\omega}^{(\alpha)}]
            + \log \tr [\Zp_{\theta\omega}^{(\alpha)} Z_{\omega}^{(\beta)}],
    \end{align*}
    where we have used \eqref{eq:cocycle} again to obtain the final expression.
    Note that this is a relation of the form
    \begin{align*}
        \Delta_\omega(\alpha,\beta)
            &= \log 
            \frac{\tr [\Zp_{\theta\omega}^{(\alpha)} \varphi_{\omega}^{(\alpha)} (Z_{\theta^{-1}\omega}^{(\beta)})]}
            {\tr [\Zp_{\theta\omega}^{(\alpha)} \varphi_{\omega}^{(\beta)} (Z_{\theta^{-1}\omega}^{(\beta)})]} + [g_\omega - g_{\theta\omega}].
    \end{align*}
    Recall that $\Delta_\omega(\alpha,\beta)$ on the left-hand side is a difference of two integrable functions. On the right-hand side, the logarithm admits the straightforward bound
    \begin{equation}\label{eq:independentofalphabeta}
        \left|\log 
            \frac{\tr [\Zp_{\theta\omega}^{(\alpha)} \varphi_{\omega}^{(\alpha)} (Z_{\theta^{-1}\omega}^{(\beta)})]}
            {\tr [\Zp_{\theta\omega}^{(\alpha)} \varphi_{\omega}^{(\beta)} (Z_{\theta^{-1}\omega}^{(\beta)})]}\right|
            \leq 2F_\omega|\beta-\alpha|
    \end{equation}
    ---\,thanks to~\ref{A3}\,---\,and is therefore also integrable.
    This implies that $g-g\circ\theta$ is integrable as well.
    It thus follows from a consequence of Birkhoff's theorem (stated as Lemma~\ref{lem:shift-cancellation} below) that
    \[
         \lambda(\alpha) - \lambda(\beta) = 
        \int_\Omega \Delta_\omega(\alpha,\beta) \dd\pp(\omega)
        = \int_\Omega \log 
            \frac{\tr [\Zp_{\theta\omega}^{(\alpha)} \varphi_{\omega}^{(\alpha)} (Z_{\theta^{-1}\omega}^{(\beta)})]}
            {\tr [\Zp_{\theta\omega}^{(\alpha)} \varphi_{\omega}^{(\beta)} (Z_{\theta^{-1}\omega}^{(\beta)})]}
            \dd\pp(\omega).  \qedhere
    \]
\end{proof}

\begin{proposition}
\label{prop:lambda-diff}
    If~\ref{A1},~\ref{A2} and~\ref{A3} hold, then
    the function~$\alpha \mapsto \lambda(\alpha)$ is differentiable on~$\rr$.
\end{proposition}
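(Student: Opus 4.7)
The plan is to obtain differentiability of $\alpha\mapsto\lambda(\alpha)$ at every fixed $\alpha\in\rr$ by computing $\lim_{\beta\to\alpha}(\lambda(\alpha)-\lambda(\beta))/(\alpha-\beta)$ directly, using the exact finite-difference formula from Lemma~\ref{lem:Delta-lambda} as a starting point. Set
\[
    N_\omega(\alpha,\beta) := \tr[\Zp_{\theta\omega}^{(\alpha)} \varphi_{\omega}^{(\alpha)} (Z_{\theta^{-1}\omega}^{(\beta)})] , \qquad D_\omega(\alpha,\beta) := \tr[\Zp_{\theta\omega}^{(\alpha)} \varphi_{\omega}^{(\beta)} (Z_{\theta^{-1}\omega}^{(\beta)})] ,
\]
so that Lemma~\ref{lem:Delta-lambda} reads $\lambda(\alpha) - \lambda(\beta) = \int_\Omega \log(N_\omega(\alpha,\beta)/D_\omega(\alpha,\beta)) \dd\pp(\omega)$. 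The idea is to apply dominated convergence to the integrand $(\alpha-\beta)^{-1}\log(N_\omega/D_\omega)$ as $\beta\to\alpha$.

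For the pointwise limit, I would expand $\log(N/D)=\log(1+(N-D)/D)$, and use the identity
\[
    \varphi_\omega^{(\alpha)} - \varphi_\omega^{(\beta)} = \sum_{a\in\cA} \left(\Exp{-\alpha f_\omega(a)} - \Exp{-\beta f_\omega(a)}\right) \psi_{\omega,a}
\]
to rewrite $N_\omega - D_\omega$ as a finite sum over $a\in\cA$. The elementary identity $(\Exp{-\alpha f} - \Exp{-\beta f})/(\alpha-\beta) \to -f\Exp{-\alpha f}$ as $\beta\to\alpha$ handles the scalar prefactors; for the matrix traces, I would invoke Proposition~\ref{prop:continuous-Z} to pass $Z^{(\beta)}_{\theta^{-1}\omega}\to Z^{(\alpha)}_{\theta^{-1}\omega}$ in trace norm. (The analogous continuity of $\alpha\mapsto \Zp^{(\alpha)}_\omega$, not stated explicitly in the excerpt but obtained by applying the argument of Proposition~\ref{prop:continuous-Z} verbatim to the adjoint cocycle $\Phi_{N,\omega}^{(\alpha)*}$, is not needed here since $\Zp_{\theta\omega}^{(\alpha)}$ does not depend on $\beta$.) Combining these with the fact that $D_\omega(\alpha,\alpha) = \tr[\Zp_{\theta\omega}^{(\alpha)} \varphi_\omega^{(\alpha)}(Z^{(\alpha)}_{\theta^{-1}\omega})] > 0$ (a strict inequality since $\Zp^{(\alpha)}_{\theta\omega},Z^{(\alpha)}_{\theta^{-1}\omega}\in\mathbb{S}^\circ$), the integrand converges $\pp$-almost everywhere to
\[
    g_\omega(\alpha) := -\frac{\sum_{a\in\cA} f_\omega(a)\Exp{-\alpha f_\omega(a)} \tr[\Zp_{\theta\omega}^{(\alpha)}\psi_{\omega,a}(Z^{(\alpha)}_{\theta^{-1}\omega})]}{\tr[\Zp_{\theta\omega}^{(\alpha)} \varphi_\omega^{(\alpha)}(Z^{(\alpha)}_{\theta^{-1}\omega})]} .
\]

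For the domination, the bound \eqref{eq:independentofalphabeta} already shows that
\[
    \left|\frac{1}{\alpha-\beta}\log \frac{N_\omega(\alpha,\beta)}{D_\omega(\alpha,\beta)}\right| \leq 2 F_\omega ,
\]
and $F\in L^1(\Omega,\dd\pp)$ by \ref{A3}. The dominated convergence theorem then gives the existence of the limit and the identity $\lambda'(\alpha) = \int_\Omega g_\omega(\alpha) \dd\pp(\omega)$, for every $\alpha\in\rr$; since $\alpha$ was arbitrary, $\lambda$ is differentiable on $\rr$.

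The main obstacle is essentially qualitative: verifying $\pp$-almost everywhere pointwise convergence of the integrand, which requires combining the continuity from Proposition~\ref{prop:continuous-Z} with the good behavior of the normalization $D_\omega(\alpha,\beta)$ near the diagonal $\beta=\alpha$. Once this is in hand, the uniform bound \eqref{eq:independentofalphabeta} is tailor-made for dominated convergence, so no further work is needed.
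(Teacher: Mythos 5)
Your proposal is correct and follows essentially the same strategy as the paper: starting from the exact difference formula of Lemma~\ref{lem:Delta-lambda}, applying dominated convergence with the majorant $2F_\omega$ from~\eqref{eq:independentofalphabeta}, and then using Proposition~\ref{prop:continuous-Z} together with the explicit analytic form of the deformation~\eqref{eq:varphialpha} to identify the pointwise limit of the integrand. The only minor difference is that the paper computes this pointwise limit via the mean value theorem applied to $\gamma\mapsto\log\tr[\Zp_{\theta\omega}^{(\alpha)}\varphi_\omega^{(\gamma)}(Z_{\theta^{-1}\omega}^{(\beta)})]$, while you expand $\log(N/D)=\log(1+(N-D)/D)$ directly and pass to the limit termwise; both are sound and lead to the same expression for $\lambda'(\alpha)$.
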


\begin{proof}
    Again, we consider only $\omega$ in a set of full $\pp$-measure where the results of Section~\ref{ssec:simult} hold.
    To lighten the notation, we will replace subscripts of the form~``$_{\theta^j\omega}$'' with simply ``$_j$''. 
    By Lemma~\ref{lem:Delta-lambda} (and its proof), we have 
    \begin{equation}
        \frac{\lambda(\beta)-\lambda(\alpha)}{\beta-\alpha}
            = \int_\Omega \frac{1}{\beta-\alpha} \log \frac{ \tr [\Zp_1^{(\alpha)} \varphi_0^{(\beta)} (Z_{-1}^{(\beta)})]}{\tr [\Zp_1^{(\alpha)} \varphi_0^{(\alpha)} (Z_{-1}^{(\beta)})]} \dd\pp
    \end{equation}
    with the integrand on the right-hand side being dominated by an integrable function of $\omega$ independent of~$\alpha$ and $\beta$ \textemdash \ see \eqref{eq:independentofalphabeta}.
    The Lebesgue dominated convergence theorem thus guarantees that we can exchange the integral with respect to~$\pp$ and the limit $\beta \to \alpha$. Therefore, the proof of differentiability reduces to showing that the limit
    \[ 
        \lim_{\beta\to\alpha} \frac{1}{\beta-\alpha} \log \frac{ \tr [\Zp_1^{(\alpha)} \varphi_0^{(\beta)} (Z_{-1}^{(\beta)})]}{\tr [\Zp_1^{(\alpha)} \varphi_0^{(\alpha)} (Z_{-1}^{(\beta)})]}
    \]
    exists $\pp$-almost surely.

    Note as an intermediate step that, for all~$\beta$ and $\gamma$, we have
    \begin{align*}
    \label{eq:proto-deriv-lambda}
        \frac{\dd}{\dd\gamma} \log\tr\left[\Zp_1^{(\alpha)} \varphi_0^{(\gamma)} (Z_{-1}^{(\beta)})\right]
        &= \frac{\tr\left[\Zp_1^{(\alpha)} \frac{\dd}{\dd\gamma}\varphi_0^{(\gamma)} (Z_{-1}^{(\beta)})\right]}{\tr\left[\Zp_1^{(\alpha)} \varphi_0^{(\gamma)} (Z_{-1}^{(\beta)})\right]},
    \end{align*}
    which is continuous in $\gamma$ and in $\beta$, thanks to the precise form of the deformations and to Proposition~\ref{prop:continuous-Z}.
    Hence, by the mean value theorem, for every $\beta$, there exists $\gamma_0$ between $\alpha$ and $\beta$ such that
    \begin{align*}
        \frac{1}{\beta-\alpha} \log \frac{ \tr [\Zp_1^{(\alpha)} \varphi_0^{(\beta)} (Z_{-1}^{(\beta)})]}{\tr [\Zp_1^{(\alpha)} \varphi_0^{(\alpha)} (Z_{-1}^{(\beta)})]}
        &= \frac
        {\tr
        \left[\Zp_1^{(\alpha)} \left.\frac{\dd}{\dd\gamma} \varphi_0^{(\gamma)}\right|_{\gamma=\gamma_0} (Z_{-1}^{(\beta)}) \right]}
        {\tr\left[\Zp_1^{(\alpha)} \varphi_0^{(\gamma_0)} (Z_{-1}^{(\beta)})\right]}.
    \end{align*}
    Taking $\beta \to \alpha$ makes $\gamma_0 \to \alpha$ and we find
    \begin{equation*}
        \lim_{\beta \to \alpha}\frac{1}{\beta-\alpha} \log \frac{ \tr [\Zp_1^{(\alpha)} \varphi_0^{(\beta)} (Z_{-1}^{(\beta)})]}{\tr [\Zp_1^{(\alpha)} \varphi_0^{(\alpha)} (Z_{-1}^{(\beta)})]}
        = \frac{\tr\left[\Zp_1^{(\alpha)} \left.\frac{\dd}{\dd\gamma}\varphi_0^{(\gamma)}\right|_{\gamma=\alpha} (Z_{-1}^{(\alpha)})\right]}{\tr\left[\Zp_1^{(\alpha)} \varphi_0^{(\alpha)} (Z_{-1}^{(\alpha)})\right]}
    \end{equation*}
    by Proposition~\ref{prop:continuous-Z}.
\end{proof}

\section{Proofs for Section~\ref{sec:EP}}
\label{sec:EP-proofs}

We start with our statement concerning symmetry in~$\alpha$ about $\alpha = \tfrac 12$.

\begin{proof}[Proof of Lemma~\ref{lem:dual-1-minus}]
    With $\alpha$ and $\omega$ fixed but arbitrary, it suffices to show that 
    $$\tr(\tau X\tau \varphi_{\mathsf{T}\omega}^{(\alpha)} (Y)) = \tr(\tau Y \tau \varphi_{\mathsf{T}\omega}^{(1-\alpha)} (X))
    $$
    for every $X$ and $Y$ in $\cB(\cc^d)$. Following closely computations found in e.g.~\cite[\S{3.2}]{HJPR18} or~\cite[\S\S{5.3--5.4}]{BJP22}, we indeed find
    \begin{align*}
        \tr(\tau X\tau \varphi_{\mathsf{T}\omega}^{(\alpha)} (Y)) 
            &= \sum_{\epsilon,\epsilon'} \Exp{-\alpha \beta_{\mathsf{T}\omega}(E_{\mathsf{T}\omega,\epsilon'} - E_{\mathsf{T}\omega,\epsilon})} \tr[(\tau X \tau \otimes \one ) (\one \otimes \pi_{\mathsf{T}\omega,\epsilon'} ) U_{\mathsf{T}\omega} (Y \otimes \xi_{\mathsf{T}\omega}\pi_{\mathsf{T}\omega,\epsilon}) U^*_{\mathsf{T}\omega}] \\
            &= \sum_{\epsilon',\epsilon}\Exp{-(1-\alpha) \beta_{\mathsf{T}\omega}(E_{\mathsf{T}\omega,\epsilon'} - E_{\mathsf{T}\omega,\epsilon})} \tr[(\tau X \tau \otimes \xi_{\mathsf{T}\omega} \pi_{\mathsf{T}\omega,\epsilon}) U_{\mathsf{T}\omega} (Y \otimes \one)(\one \otimes \pi_{\mathsf{T}\omega,\epsilon'}) U^*_{\mathsf{T}\omega}] \\
            &= \sum_{\epsilon',\epsilon}\Exp{-(1-\alpha) \beta_{\mathsf{T}\omega}(E_{\mathsf{T}\omega,\epsilon'} - E_{\mathsf{T}\omega,\epsilon})} \tr[( X \otimes \tau' \xi_{\mathsf{T}\omega} \pi_{\mathsf{T}\omega,\epsilon}) U_{\omega}^* (\tau Y \tau \otimes \one)(\one \otimes \pi_{\mathsf{T}\omega,\epsilon'}\tau') U_{\omega}] \\
            &= \sum_{\epsilon',\epsilon}\Exp{-(1-\alpha) \beta_{\mathsf{T}\omega}(E_{\mathsf{T}\omega,\epsilon'} - E_{\mathsf{T}\omega,\epsilon})} \tr[( X \otimes \xi_{\omega} \pi_{\omega,\epsilon}) U_{\omega}^* (\tau Y \tau \otimes \one)(\one \otimes \pi_{\omega,\epsilon'}) U_{\omega}] \\
            &= \tr(\tau Y \tau \varphi_{\omega}^{(1-\alpha)} (X)),
    \end{align*}
    as desired. Note that we have used Remark~\ref{rem:xi-labeling} in the perhaps more subtle step where $\tau' \xi_{\mathsf{T}\omega} \pi_{\mathsf{T}\omega,\epsilon}$ becomes $ \xi_{\omega} \pi_{\omega,\epsilon} \tau'$ and $\pi_{\mathsf{T}\omega,\epsilon'}\tau'$ becomes $\tau' \pi_{\omega,\epsilon'}$.
\end{proof}

We now turn to bounding $\log p_{\omega,\rho}^{(n)} - \log p_{\mathsf{T}\theta^{n-1}\omega,\rho}^{(n)}$ in the information-theoretic notion of entropy production in terms of $\Sigma_{\omega,n}$ in the Clausius-like notion of entropy production.

\begin{proof}[Proof of Lemma~\ref{lem:first-EP-bound}]
    A computation similar to that in the proof of Lemma~\ref{lem:dual-1-minus} shows that
    \begin{align*}
        \psi_{\mathsf{T}\omega,(\epsilon,\epsilon')}(\tau \rho \tau) 
        &=  \Exp{\beta_\omega(E_{\omega,\epsilon'}-E_{\omega,\epsilon})} \tau \psi_{\omega,(\epsilon',\epsilon)}^*(\rho)\tau.
    \end{align*}
    Now, this implies
    \begin{align*}
        p^{(n)}_{\omega,\rho}((\epsilon_1,\epsilon'_1), \dotsc, (\epsilon_n,\epsilon'_n))
            &= \tr[(\psi_{\theta^{n-1}\omega,(\epsilon_n,\epsilon'_n)}\circ\dotsb\circ\psi_{\omega,(\epsilon_1,\epsilon'_1))})(\rho)] \\
            &= \Exp{\Sigma_{\omega,n} }\tr[(\psi^*_{\mathsf{T}\theta^{n-1}\omega,(\epsilon_n,\epsilon'_n)}\circ\dotsb\circ\psi^*_{\mathsf{T}\omega,(\epsilon_1,\epsilon'_1)})(\tau\rho\tau)] \\
            &= \Exp{\Sigma_{\omega,n} }\tr[\tau\rho\tau \ (\psi_{\mathsf{T}\omega,(\epsilon_1,\epsilon'_1)}\circ\dotsb\circ\psi_{\mathsf{T}\theta^{n-1}\omega,(\epsilon_n,\epsilon'_n)})(\one)] \\
            &= \Exp{\Sigma_{\omega,n} } \tr[\tau\rho\tau \ (\psi_{\theta^{n-1}\mathsf{T}\theta^{n-1}\omega,(\epsilon_1,\epsilon'_1)}\circ\dotsb\circ\psi_{\mathsf{T}\theta^{n-1}\omega,(\epsilon_n,\epsilon'_n)})(\one)].
    \end{align*}
    It then indeed follows from the definition of~$\Delta_\rho$ and the fact that $\sp(\tau\rho\tau) = \sp \rho$ that 
    \begin{align*}
      \Exp{\Sigma_{\omega,n}} \Delta_\rho^{-1} {p}^{(n)}_{\mathsf{T}\theta^{n-1}\omega,\rho} \leq p^{(n)}_{\omega,\rho}
            &\leq \Exp{\Sigma_{\omega,n}} \Delta_\rho {p}^{(n)}_{\mathsf{T}\theta^{n-1}\omega,\rho}.
    \end{align*}
    Again, we refer the reader to~\cite[App.\,C]{HJPR18} and~\cite[\S{5.4}]{BJP22} for similar computations.
\end{proof}

We can then relate the large deviations of the information-theoretic notion of entropy production to the Clausius-like notion of entropy production using standard techniques.

\begin{proof}[Proof of Theorem~\ref{thm:main-EP}]
    Lemma~\ref{lem:first-EP-bound} guarantees that, for every $\rho \in \mathbb{S}^\circ$, the sequences~$(\sigma_{\omega,\rho,n})_{n\in\nn}$ and $(\Sigma_{\omega,n})_{n\in\nn}$ are exponentially equivalent and thus have equivalent large deviations; see e.g.~\cite[\S{4.2.2}]{DeZe}. Therefore, the large deviation principle follows from Theorem~\ref{thm:main-abstract}.
    
    Lemma~\ref{lem:dual-1-minus} implies that $Z_{\mathsf{T}\omega}^{(\alpha)} = \tau \Zp_{\omega}^{(1-\alpha)} \tau $ for all~$\alpha \in \rr$ and $\omega \in \Omega$. Hence, by~\eqref{eq:Lyap-as-average}, TRI and again Lemma~\ref{lem:dual-1-minus}, we have 
    \begin{align*}
        \lambda(\alpha) 
            &= \int \log  \tr [\Zp_{\theta\omega}^{(\alpha)} \varphi_{\omega}^{(\alpha)} (Z_{\theta^{-1}\omega}^{(\alpha)})] - \log \tr [\Zp_{\theta\omega}^{(\alpha)} Z_{\omega}^{(\alpha)}] \dd\pp(\omega) \\
            &= \int \log  \tr [\Zp_{\theta\mathsf{T}\omega}^{(\alpha)} \varphi_{\mathsf{T}\omega}^{(\alpha)} (Z_{\theta^{-1}\mathsf{T}\omega}^{(\alpha)})] - \log \tr [\Zp_{\theta\mathsf{T}\omega}^{(\alpha)} Z_{\mathsf{T}\omega}^{(\alpha)}] \dd\pp(\omega) \\
            &= \int \log  \tr [\Zp_{\mathsf{T}\theta^{-1}\omega}^{(\alpha)} \varphi_{\mathsf{T}\omega}^{(\alpha)} (Z_{\mathsf{T}\theta\omega}^{(\alpha)})] - \log \tr [\Zp_{\mathsf{T}\theta^{-1}\omega}^{(\alpha)} Z_{\mathsf{T}\omega}^{(\alpha)}] \dd\pp(\omega) \\
            &= \int \log  \tr [\tau Z_{\theta^{-1}\omega}^{(1-\alpha)} \tau \tau (\varphi_{\omega}^{(1-\alpha)})^*( \Zp_{\theta\omega}^{(1-\alpha)}) \tau] - \log \tr [\tau Z_{\theta^{-1}\omega}^{(1-\alpha)} \tau\tau \Zp_{\omega}^{(1-\alpha)}\tau] \dd\pp(\omega) \\
            &= \int \log  \tr [ \Zp_{\theta\omega}^{(1-\alpha)} \varphi_{\omega}^{(1-\alpha)}( Z_{\theta^{-1}\omega}^{(1-\alpha)})] - \log \tr [ \Zp_{\omega}^{(1-\alpha)} Z_{\theta^{-1}\omega}^{(1-\alpha)} ] \dd\pp(\omega) \\
            &= \lambda(1-\alpha)
    \end{align*}
    for all~$\alpha \in \rr$. Hence, the Gallavotti--Cohen symmetry follows by Legendre duality.
\end{proof}

\appendix
\section{Abstract measure-theoretic results}
\label{app:together}

Throughout this appendix, $(\Omega, \theta, \pp)$ is an ergodic dynamical system. We do not claim novelty of the results below, but  have included explicit statements and proofs for the convenience of the reader.
\subsection{Almost sure convergence of convex functions}
\begin{lemma}\label{lem:convexlemma} 
    Let $I\subset \rr$ be an open interval, and for each $N\in \nn$, let $\omega \mapsto G_{N;\omega}(\cdot)$ be a random convex function  with domain $I$.  Suppose, for each $\alpha\in I$, that 
    $$\lim_{N\to \infty} G_{N}(\alpha) = \lambda(\alpha) \ \quad \text{almost surely,}$$ 
    where $\lambda$ is a non-random convex function on $I$. Then with probability one 
    $$
        \lim_{N\to \infty} G_{N}(\alpha) = \lambda(\alpha) \quad \text{ for all }\alpha \in I.
    $$
\end{lemma}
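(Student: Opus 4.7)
The plan is to exploit separability of $I$ together with the classical principle that pointwise convergence of convex functions on a dense set forces pointwise convergence on the entire interior of the common domain.

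First, I would fix a countable dense subset $D \subset I$, e.g.\ $D := I \cap \qq$. For each $\alpha \in D$, the hypothesis gives a set $\Omega_\alpha$ of full $\pp$-measure on which $G_{N;\omega}(\alpha) \to \lambda(\alpha)$. Since $D$ is countable, the intersection $\Omega_0 := \bigcap_{\alpha \in D} \Omega_\alpha$ is again of full $\pp$-measure, and on $\Omega_0$ the convergence $G_N \to \lambda$ holds simultaneously at every point of $D$.

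Next, for $\omega \in \Omega_0$ fixed and an arbitrary $\alpha \in I$, I would sandwich $G_{N;\omega}(\alpha)$ between two quantities expressible in terms of values at points of $D$. Concretely, picking $\alpha_1 < \alpha_2 < \alpha < \alpha_3 < \alpha_4$ in $D$, the slope-monotonicity characterization of convexity applied to $G_{N;\omega}$ yields
\[
    G_{N;\omega}(\alpha_2) + (\alpha - \alpha_2)\frac{G_{N;\omega}(\alpha_2) - G_{N;\omega}(\alpha_1)}{\alpha_2 - \alpha_1}
    \leq G_{N;\omega}(\alpha)
    \leq G_{N;\omega}(\alpha_2) + (\alpha - \alpha_2)\frac{G_{N;\omega}(\alpha_3) - G_{N;\omega}(\alpha_2)}{\alpha_3 - \alpha_2}.
\]
Letting $N \to \infty$, the outer bounds converge to the analogous expressions with $G_{N;\omega}$ replaced by $\lambda$ (finite because $\lambda$ is real-valued on $I$), giving
\[
   \lambda(\alpha_2) + (\alpha-\alpha_2) \frac{\lambda(\alpha_2)-\lambda(\alpha_1)}{\alpha_2-\alpha_1}
   \leq \liminf_{N\to\infty} G_{N;\omega}(\alpha) \leq \limsup_{N\to\infty} G_{N;\omega}(\alpha) \leq \lambda(\alpha_2) + (\alpha-\alpha_2) \frac{\lambda(\alpha_3)-\lambda(\alpha_2)}{\alpha_3-\alpha_2}.
\]
Finally, letting $\alpha_1, \alpha_2 \nearrow \alpha$ and $\alpha_3, \alpha_4 \searrow \alpha$ along $D$, the continuity of $\lambda$ in the interior of $I$ (automatic for convex functions finite on an open interval) together with the boundedness of one-sided slopes of $\lambda$ at $\alpha$ forces both the upper and lower bounds to collapse to $\lambda(\alpha)$, which completes the proof.

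The main obstacle is essentially bookkeeping: keeping straight which slope-monotonicity inequalities give the upper versus the lower sandwich bound, and confirming that the slope quantities built from $\lambda$ at the approximating rationals remain bounded as one takes the limits. Both points are standard facts about convex functions on an open interval (existence of finite one-sided derivatives in the interior), and no randomness enters after the almost-sure set $\Omega_0$ has been constructed.
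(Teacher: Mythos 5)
Your proposal is correct and follows essentially the same route as the paper: restrict to a countable dense set via countable additivity, sandwich $G_{N;\omega}(\alpha)$ using convexity inequalities (slope-monotonicity, equivalent to the convex-combination form the paper writes) at nearby rational points, and pass to the limit using continuity of the finite convex function $\lambda$ on the open interval. The only cosmetic difference is that the paper's lower-bound step keeps the outer rational fixed while only the inner one tends to $\alpha$ (so no separate check on boundedness of secant slopes is needed), and your point $\alpha_4$ is introduced but never used.
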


\begin{remark}
    This particular result does not involve the transformation $\theta$; it holds on a general probability space $(\Omega,\pp)$.
\end{remark}

\begin{proof}
    By countable additivity, there exists a set $\Omega_0$ of full measure such that for $\omega\in \Omega_0$
    $$
        \lambda(q) = \lim_{N\to \infty} G_{N;\omega}(q) \qquad  \text{for all }q\in\qq \cap I .
    $$
    Fix a point $\omega\in \Omega_0$ and let $\alpha \in I$ be arbitrary.  If $q<\alpha <r$ are rational numbers, then
    $$
        \limsup_{N\to \infty} G_{N;\omega}(\alpha) \leq \limsup_{N\to \infty} \left (\frac{r-\alpha}{r-q} G_{N;\omega}(q) + \frac{\alpha-q}{r-q} G_{N;\omega}(r) \right ) = \frac{r-\alpha}{r-q} \lambda(q) + \frac{\alpha-q}{r-q} \lambda(r) . 
    $$
    Since $\lambda$ is continuous (being convex), we conclude by taking $q,r\to \alpha$ that
    \begin{equation}
        \limsup_{N\to\infty} G_{N;\omega}(\alpha) \leq \lambda(\alpha) . \label{eq:limsuplambda}
    \end{equation}
    Similarly, if $q<r<\alpha$ are rational, we have
    \begin{equation} 
    \begin{split}
        \lambda(r) &= \lim_{N\to \infty} G_{N;\omega}(r) \\ 
        &\le \liminf_{N\to \infty} \left ( \frac{\alpha- r}{\alpha-q} G_{N;\omega}(q) +  \frac{r-q}{\alpha-q} G_{N;\omega}(\alpha) \right ) \\ 
        &= \frac{\alpha- r}{\alpha-q} \lambda(q) + \frac{r-q}{\alpha-q} \liminf_{N\to \infty} G_{N;\omega}(\alpha) .
    \end{split}
    \end{equation}
    Taking $r \to \alpha$ with $q$ fixed, by continuity of $\lambda$ again, we conclude that
    \begin{equation} 
    \label{eq:liminflambda}
        \lambda(\alpha) \leq \liminf_{N\to \infty} G_{N;\omega}(\alpha) .
    \end{equation}
    The result follows from \eqref{eq:limsuplambda} and \eqref{eq:liminflambda}.
\end{proof}

\subsection{A sub-additive ergodic upper bound for continuous families}

Let $c_{n}^{(\alpha)}:\Omega \to (0,1] $ be a family of $(0,1]$-valued random variables indexed by $n\in \nn$ and $\alpha \in I\subset \rr$ with $I$ an interval. Suppose that 
\begin{enumerate}
    \item 
        For each fixed $\alpha$, the sequence is almost surely submultiplicative in the sense that, for $\pp$-almost every~$\omega$,
        \begin{equation}
        \label{eq:appendixsubadd} 
            c_{n}^{(\alpha)}(\omega) \leq c_{n-m}^{(\alpha)}(\theta^m\omega) c_{m}^{(\alpha)}(\omega) 
        \end{equation}
        whenever $n > m$.
    \item 
        For every~$n$, there exists a function $\epsilon^{(n)} : \Omega \times (0,1) \to (0,\infty)$ with the following properties: 
        \begin{itemize}
        \item 
            for $\pp$-almost every~$\omega$, the function $\epsilon^{(n)}(\omega,\,\cdot\,)$ is decreasing and
            \begin{equation}
            \label{eq:appendixequicon}
                \left|c_{n}^{(\alpha)}(\omega) - c_{n}^{(\beta)}(\omega)\right| < \epsilon^{(n)}(\omega,\delta)
            \end{equation}
            whenever $|\beta-\alpha| < \delta$, and
            \item the function $\epsilon^{(n)}(\,\cdot\,,\delta)$ is in $L^1(\Omega,\dd\pp)$ for all $\delta$ and
            \begin{equation}
            \label{eq:mod-int-to-0}
                \lim_{\delta\to 0}\int \epsilon^{(n)}(\omega,\delta) \dd\pp(\omega) = 0.
            \end{equation}
        \end{itemize}
\end{enumerate}
\begin{lemma}\label{lem:appendixlemma}
    Under the above Assumptions~1 and 2, there is set $\Omega_0$  of a full $\pp$-measure such that
    \begin{equation}\label{eq:appendixlemma}
        \limsup_{n\to \infty} \frac{1}{n} \log c_{n}^{(\alpha)}(\omega) \leq \inf_{n\in\nn} \frac{1}{n} \int \log c_{n}^{(\alpha)} \dd\pp
    \end{equation} 
    for all~$\omega\in \Omega_0$ and~$\alpha \in I$.
\end{lemma}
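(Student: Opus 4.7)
My plan is to circumvent the parameter dependence of the exceptional null set in Kingman's theorem by applying that theorem not to $\log c_n^{(\alpha)}$ at each individual $\alpha$ but to the pointwise suprema $\hat c_n^J := \sup_{\alpha\in J} c_n^{(\alpha)}$ over closed subintervals $J \subset I$ with rational endpoints; since there are only countably many such $J$, a single full-measure set will serve for all of them, and one can then shrink $J$ down to $\{\alpha\}$ via monotone convergence to recover the desired bound at an arbitrary $\alpha\in I$.

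First, I would intersect, over $n\in\nn$ and over $\alpha$ in a countable dense $D\subset I$, the $\pp$-null sets on which~\eqref{eq:appendixsubadd} or~\eqref{eq:appendixequicon} fails. This produces a full-measure set $\Omega_1$ on which both properties hold; passing to the limit along $D\ni\alpha_k\to\alpha$ and invoking continuity then extends the submultiplicativity to all $\alpha\in I$ pointwise on $\Omega_1$. For $J$ a closed interval in $I$ with rational endpoints, $\hat c_n^J(\omega)$ coincides by continuity with the countable supremum over $J\cap D$, hence is measurable and $(0,1]$-valued; moreover, both factors on the right of~\eqref{eq:appendixsubadd} being non-negative, taking $\sup_\alpha$ over $J$ on both sides yields
\[
\hat c_n^J(\omega) \leq \hat c_{n-m}^J(\theta^m\omega)\,\hat c_m^J(\omega), \qquad n > m,
\]
on $\Omega_1$. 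Since $\log \hat c_1^J \leq 0$, Kingman's subadditive ergodic theorem applies and provides a full-measure set $\Omega_J$ on which
\[
\lim_{n\to\infty} \frac{1}{n}\log \hat c_n^J(\omega) \;=\; \inf_n \frac{1}{n}\int \log \hat c_n^J\,\dd\pp \;=:\;\Lambda(J) \;\in\; [-\infty,0].
\]
Setting $\Omega_0 := \Omega_1 \cap \bigcap_J \Omega_J$, where the intersection runs over the countable family of such $J$, yields a full-measure set on which this convergence holds simultaneously for every such $J$.

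To conclude, I would fix $\omega\in\Omega_0$ and $\alpha\in I$, pick a sequence of closed rational intervals $J_k\downarrow\{\alpha\}$, and use $c_n^{(\alpha)}\leq\hat c_n^{J_k}$ to obtain $\limsup_n \tfrac{1}{n}\log c_n^{(\alpha)}(\omega) \leq \Lambda(J_k)$ for each $k$. Pointwise continuity of $\alpha\mapsto c_n^{(\alpha)}(\omega)$ at $\alpha$ (which, combined with monotonicity and~\eqref{eq:mod-int-to-0}, is guaranteed on $\Omega_1$) yields $\hat c_n^{J_k}(\omega) \downarrow c_n^{(\alpha)}(\omega)$, and the monotone convergence theorem applied to $-\log\hat c_n^{J_k}\geq 0$ gives $\int\log\hat c_n^{J_k}\,\dd\pp\downarrow \int \log c_n^{(\alpha)}\,\dd\pp$. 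Exchanging the two infima,
\[
\inf_k \Lambda(J_k) \;=\; \inf_n \frac{1}{n} \inf_k \int \log \hat c_n^{J_k}\,\dd\pp \;=\; \inf_n \frac{1}{n} \int \log c_n^{(\alpha)}\,\dd\pp,
\]
which is precisely the right-hand side of~\eqref{eq:appendixlemma}. The main technical obstacle I anticipate is the combined bookkeeping for $\Omega_1$: the a.s.\ submultiplicativity and the a.s.\ continuity in $\alpha$ must be aggregated over a countable dense set of parameters and all $n\in\nn$, and then propagated to all $\alpha\in I$ by continuity, so that the auxiliary $\hat c_n^J$ becomes genuinely submultiplicative on $\Omega_1$; once this is in place, Kingman's theorem and monotone convergence take care of the rest.
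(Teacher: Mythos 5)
Your proof is correct, and it takes a genuinely different route from the paper's. The paper avoids Kingman's theorem entirely: from submultiplicativity it derives the block bound $\log c_n^{(\alpha)} \le \frac{1}{k}\sum_{m=0}^{k\lfloor n/k\rfloor -1}\log c_k^{(\alpha)}\circ\theta^m$, truncates the summands by $\max\{t, c_k^{(\beta)}\}$ to ensure integrability, applies Birkhoff's pointwise ergodic theorem simultaneously for rational $t$, rational $\beta$ and rational $\delta$, then lets $\delta\to 0$ and $t\to 0$ at the end. You instead apply Kingman directly to the auxiliary process $\hat c_n^J = \sup_{\alpha\in J}c_n^{(\alpha)}$ over rational intervals $J$ (this is valid: the supremum of non-negative factors preserves submultiplicativity, $\log\hat c_n^J\le 0$ handles the one-sided integrability hypothesis, and Kingman's conclusion holds even when the limit is $-\infty$), and then shrink $J_k\downarrow\{\alpha\}$ using pointwise continuity and monotone convergence to recover the statement at each $\alpha$. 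Conceptually your route is arguably cleaner, trading the paper's truncation-plus-approximation argument for a single application of Kingman to a well-chosen family; the price is, as you anticipate, the bookkeeping needed to assemble $\Omega_1$ — in particular, to extend submultiplicativity and the identification $\sup_{J\cap\qq}=\sup_J$ from a countable dense parameter set to all $\alpha$, one needs the a.s.\ continuity in $\alpha$ to hold at $\theta^m\omega$ for every $m\in\zz$, which you should enforce by intersecting with $\bigcap_{m\in\zz}\theta^{-m}$ of the pointwise-continuity set (a full-measure operation since $\theta$ preserves $\pp$). One further point worth making explicit: the pointwise a.e.\ continuity you invoke is not directly part of Assumption~2 (which only gives $L^1$ convergence of the modulus in $\delta$); it follows from the monotonicity in $\delta$ together with~\eqref{eq:mod-int-to-0} by the monotone convergence theorem. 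With these details spelled out, your argument is complete.
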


\begin{remark}
    Since $(\int \log c_{n}^{(\alpha)} \dd\pp )_{n\in \nn} $ is a subadditive sequence, it follows from Fekete's lemma that
    \begin{equation}
    \label{eq:Fekete}  
        \inf_{n\in\nn} \frac{1}{n} \int \log c_{n}^{(\alpha)} \dd\pp  = \lim_{n\to \infty} \frac{1}{n} \int \log c_{n}^{(\alpha)} \dd\pp .
    \end{equation}
\end{remark}

\begin{proof}
    We begin with two observations. First, by submultiplicativity, 
    \begin{equation}\label{eq:firstsubad}
        \log c_{n}^{(\alpha)} \leq  \sum_{m=0}^{\lfloor \frac{n}{k}\rfloor-1} \log c_{k}^{(\alpha)}\circ \theta^{mk+j}
    \end{equation}
    for every choice of $0 \leq j < k < n$.
    Averaging over $j\in\{0,1,\ldots,k-1\}$ then yields the bound
    \begin{equation}\label{eq:secondsubad}
    \log c_{n}^{(\alpha)} \leq \frac{1}{k} \sum_{m=0}^{k\lfloor\frac{n}{k} \rfloor -1} \log c_{k}^{(\alpha)}\circ \theta^{m}.
    \end{equation}
    Second, by Birkhoff's theorem and countable additivity, there exists a set $\Omega_0$ of full $\pp$-measure such that, for all $\omega\in \Omega_0$, $k \in \nn$, $\beta\in\qq$, $t \in (0,1) \cap \qq$ and $\delta \in (0,1) \cap \qq$, we have
    \begin{align}
        \lim_{n\to \infty} \frac{1}{n} \sum_{m=0}^{n-1} \log \left ( \max\left\{t, c_{k}^{(\beta)}(\theta^{m}\omega)\right\} \right ) &= \int \log \left (\max\left\{t, c_{k}^{(\beta)}\right\} \right ) \dd\pp
    \label{eq:birkhoff-app-1} \\
    \intertext{and}
        \lim_{n\to \infty} \frac{1}{n} \sum_{m=0}^{n-1} \epsilon^{(k)}(\theta^m\omega,\delta) &= \int \epsilon^{(k)}(\,\cdot\,,\delta) \dd\pp .
    \label{eq:birkhoff-app-2}
    \end{align}

    Now fix $\omega\in \Omega_0$ and $\alpha\in I$. Let also $k \in \nn$, $t \in (0,1] \cap \qq$ and $\delta \in (0,1] \cap \qq$ be arbitrary. By~\eqref{eq:secondsubad} and~\eqref{eq:appendixequicon}, we have
    \begin{equation} 
    \begin{split}
        \log c_{n}^{(\alpha)}(\omega) &\le \frac{1}{k} \sum_{m=0}^{k\lfloor\frac{n}{k} \rfloor -1} \log \left ( \max\left\{t, c_{k}^{(\alpha)}(\theta^{m}\omega) \right\} \right ) \\ 
        & \le \frac{1}{k} \sum_{m=0}^{k\lfloor\frac{n}{k} \rfloor -1} \log\left ( \max\left\{t, c_{k}^{(\beta)}(\theta^{m}\omega)  \right\}\right ) 
        + \frac{1}{tk} \sum_{m=0}^{k\lfloor\frac{n}{k} \rfloor -1} \epsilon^{(k)}(\theta^{m}\omega,\delta),
    \end{split}
    \end{equation}
    provided that $|\beta-\alpha| < \delta$. If $\beta$ is chosen to be rational, then~\eqref{eq:birkhoff-app-1}--\eqref{eq:birkhoff-app-2} imply
    \begin{equation}\label{eq:limsupresult}
    \begin{split}
    \limsup_{n\to \infty} \frac{1}{n}  \log c_{n}^{(\alpha)}(\omega) 
        &\leq \frac{1}{k}  \int \log \left (\max \left\{ t, c_{k}^{(\beta)} \right\}\right ) \dd\pp + \frac{1}{tk} \int \epsilon^{(k)}(\,\cdot\,,\delta) \dd\pp \\
        &\leq \frac{1}{k}  \int \log \left (\max \left\{ t, c_{k}^{(\alpha)}\right\} \right ) \dd\pp + \frac{2}{tk} \int \epsilon^{(k)}(\,\cdot\,,\delta) \dd\pp ,  
        \end{split}
    \end{equation}
    where we have used \eqref{eq:appendixequicon} once more in the last step. Taking $\delta \to 0$ using~\eqref{eq:mod-int-to-0}, we deduce that
    \begin{equation}
        \limsup_{n\to \infty} \frac{1}{n}  \log c_{n}^{(\alpha)}(\omega) \leq \frac{1}{k}  \int \log \left (\max\left\{t, c_{k}^{(\alpha)}\right\} \right ) \dd\pp .
    \end{equation}
    Taking $t \to 0$ and using Lebesgue dominated convergence, we deduce that
    \begin{equation}
        \limsup_{n\to \infty} \frac{1}{n}  \log c_{n}^{(\alpha)}(\omega) \leq \frac{1}{k}  \int \log \left (c_{k}^{(\alpha)} \right ) \dd\pp.
    \end{equation}
    Finally, since $k$ was arbitrary, this yields~\eqref{eq:appendixlemma}.
\end{proof}

\subsection{Vanishing of ergodic differences}
\begin{lemma}
\label{lem:shift-cancellation}
    If a measurable function $g : \Omega \to \rr$ is such that
    $g \circ \theta - g \in L^1 (\Omega,\dd\pp)$, then
    \[ 
        \int (g\circ \theta - g) \dd\pp = 0.
    \]
\end{lemma}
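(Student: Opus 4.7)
The plan is to apply Birkhoff's ergodic theorem to the integrable function $h := g\circ\theta - g$ and then combine it with an ``in probability'' bound coming from $\theta$-invariance of $\pp$. The key observation is the telescoping identity
\[
    \sum_{k=0}^{n-1} h \circ \theta^k \;=\; \sum_{k=0}^{n-1}\bigl(g\circ\theta^{k+1} - g\circ\theta^k\bigr) \;=\; g\circ\theta^n - g,
\]
valid pointwise on $\Omega$. Applying Birkhoff's theorem to $h \in L^1(\Omega,\dd\pp)$ (no integrability of $g$ itself is needed), we deduce that for $\pp$-almost every $\omega$,
\[
    \frac{1}{n}\bigl(g(\theta^n\omega) - g(\omega)\bigr) \;\xrightarrow[n\to\infty]{} \;\int h \, \dd\pp.
\]

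First I would note that since $\pp$ is $\theta$-invariant, the random variable $g\circ\theta^n$ has the same distribution as $g$ under~$\pp$. The function $g$ is real-valued and finite $\pp$-almost everywhere, so for every $\epsilon > 0$ we have $\pp(|g| > n\epsilon) \to 0$, and therefore
\[
    \pp\!\left(\left|\tfrac{1}{n} g\circ\theta^n\right| > \epsilon\right) = \pp(|g| > n\epsilon) \;\xrightarrow[n\to\infty]{}\; 0.
\]
The same argument gives $\frac{1}{n} g \to 0$ in probability (trivially), so $\frac{1}{n}(g\circ\theta^n - g) \to 0$ in $\pp$-probability.

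Combining the almost sure convergence to $\int h \, \dd\pp$ (which is a constant, hence also a convergence in probability) with the convergence in probability to $0$, uniqueness of limits in probability forces $\int h \, \dd\pp = 0$, which is the desired identity.

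The only mildly delicate point is that we do not assume $g \in L^1$, so we cannot simply write $\int g\circ\theta \, \dd\pp = \int g \, \dd\pp$ by invariance and subtract; this is why we route the argument through Birkhoff applied to the difference $h$ and a separate in-probability bound on the boundary term $\frac{1}{n}(g\circ\theta^n - g)$, rather than through a direct integration of the telescoping identity.
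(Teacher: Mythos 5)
Your proof is correct and uses essentially the same strategy as the paper's: apply Birkhoff's theorem to the integrable difference $g\circ\theta - g$, telescope to reduce to the boundary term $\tfrac{1}{n}g\circ\theta^n$, and use $\theta$-invariance of $\pp$ together with finiteness of $g$ to show that term tends to $0$ in probability. The paper organizes this as a proof by contradiction whereas you argue directly via uniqueness of limits in probability, but the ingredients are identical.
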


\begin{remark}If $g\in L^1$, then the result follows immediately from linearity of the integral: $\int(g\circ \theta -g) \dd\pp= \int g\circ\theta \dd\pp -\int g\dd\pp=0$, since $\theta$ is measure preserving. The point of the lemma is that the identity continues to hold for non-integrable $g$.
\end{remark}

\begin{proof}
    Let $I$ denote the value of the integral of interest and suppose for the sake of contradiction that $I \neq 0$. By Birkhoff's theorem, the following holds true for $\pp$-almost all~$\omega$:
    \begin{align*}
     I &= \lim_{N\to\infty} \frac 1N \sum_{n=0}^{N-1} g(\theta^{n+1}\omega) - g(\theta^{n}\omega) \\
            &= \lim_{N\to\infty} \frac 1N \left[ g(\theta^{N}\omega) - g(\omega)\right] \\
            &= \lim_{N\to\infty} \frac 1N g(\theta^{N}\omega).
    \end{align*}
    This almost sure convergence implies convergence in probability, so
    \[
        \lim_{N\to\infty} \pp \left\{\omega \ : \ \left| \frac 1N g(\theta^{N}\omega) -  I \right| \leq \epsilon \right\} = 1
    \] 
    for every $\epsilon > 0$. In particular, we should be able to choose $2\epsilon = |I|$ since we are assuming that $I \neq 0$.
    However, using basic manipulations and the fact that $\pp$ is $\theta$ invariant, 
    \begin{align*}
        \pp \left\{\omega \ : \ \left| \frac 1N g(\theta^{N}\omega) -  I \right| \leq \left|\frac{I}{2}\right| \right\}
            &\leq \pp \left\{\omega \ : \  \left| g(\theta^N\omega) \right| \geq  N \left|\frac{I}{2}\right| \right \} \\
            &=  \pp \left\{\omega \ : \  \left|g(\omega)\right| \geq  N \left|\frac{I}{2}\right| \right \}
    \end{align*}
    must decay to $0$ as $N\to\infty$ because $g$ is real valued, $I \neq 0$ and $\pp$ is a probability measure. This is the desired contradiction, forcing us to conclude that $I = 0$.
\end{proof}

\newcommand{\etalchar}[1]{$^{#1}$}

\end{document}